\documentclass[11pt]{article}

\usepackage{geometry}
\usepackage{titlesec}
\usepackage{abstract}
\usepackage{xfrac}
\usepackage[T1]{fontenc}
\usepackage{helvet}
\usepackage[utf8]{inputenc}
\usepackage{authblk}
\usepackage{graphics}
\usepackage{circuitikz}

\providecommand{\keywords}[1]{{{\bf\textsc{Keywords: }}} #1}
\titleformat{\section}{\centering\large\bf\scshape}{\thesection}{1em}{}
\titleformat{\subsection}{\centering\large\bf\scshape}{\thesubsection}{1em}{}
\titleformat{\subsubsection}{\normalsize\bf\scshape}{\thesubsubsection}{1em}{}

\usepackage{hyperref}
\hypersetup{
	colorlinks=true,
	linkcolor=blue,
	filecolor=blue,      
	urlcolor=blue,
	citecolor=blue
}

\usepackage{natbib}
\usepackage{graphicx} 
\graphicspath{{art/}}

\usepackage{setspace}
\usepackage{xr}
\usepackage{color}
\usepackage{amsmath, amssymb}
\usepackage{amsthm}
\usepackage{amsfonts,dsfont,mathrsfs}
\usepackage{booktabs,multirow,array}
\usepackage{bm}

\usepackage{algorithm}
\usepackage[noend]{algorithmic}



\DeclareMathOperator{\mean}{\mathbb{E}}

\DeclareMathOperator{\var}{Var}
\DeclareMathOperator{\rvar}{\overline{Var}}

\newcommand\norm[1]{{\left\Vert{#1}\right\Vert}}

\newcommand\smallfrac[2]{\mbox{\small$\displaystyle\frac{#1}{#2}$}}

\newcommand\tsum{\textstyle\sum}

\newcommand\diff{\mathrm{d}}

\newcommand\ind{\mathbb{I}}

\newcommand\prob{\mathbb{P}}

\newcommand\Beta{\mathrm{B}}

\newcommand\BR{\mathbb{R}}

\newcommand\al{\alpha}
\newcommand\be{\beta}
\newcommand\ga{\gamma}    \newcommand\Ga{\Gamma}

  \newcommand\vep{\varepsilon}

\newcommand\la{\lambda}   

\newcommand\te{\theta}

\newcommand\atwo{\theta}
\newcommand\Spd{S_{PD}}

\allowdisplaybreaks[4]

\usepackage[normalem]{ulem} 
\newtheoremstyle{underlined}
  {3pt}      
  {3pt}      
  {\itshape} 
  {}         
  {\bf}
  {.}        
  {.5em}     
  {\uline{\scshape{\thmname{#1}\ \thmnumber{#2}\thmnote{ (#3)}}}}

\theoremstyle{underlined}

\newcounter{mycount}
\newtheorem{proposition}[mycount]{Proposition}
\newtheorem{corollary}[mycount]{Corollary}
\newtheorem{lemma}[mycount]{Lemma}

\begin{document}

\title{\bf\scshape\LARGE{Nonparametric Bayesian inference for the Gini--Simpson index}}

\author[1]{Pier Giovanni Bissiri\thanks{pier.bissiri@unimib.it}}
\author[1]{Riccardo Corradin\thanks{riccardo.corradin@unimib.it}}
\author[1]{Andrea Ongaro\thanks{andrea.ongaro@unimib.it}}
\affil[1]{\normalsize{Department of Economics Management and Statistics, University of Milano-Bicocca, Piazza dell’Ateneo Nuovo 1, 20126, Milan,
Italy}}

\date{}
\maketitle 

\begin{abstract}
	Many statistical problems concern the analysis of species distributions or, more generally, of discrete labeled quantities. Assessing species diversity constitutes a key step toward understanding population structure, and the Gini--Simpson index is among the most widely adopted diversity measures. In this manuscript, we examine several well-established nonparametric prior models for species frequencies and compare them with a newly proposed distribution within this framework. Specifically, we demonstrate that the conventional symmetric Dirichlet distribution leads to certain undesirable properties in terms of expectation and dispersion. These limitations can be mitigated by adopting an alternative symmetric Dirichlet specification, in which the parameter depends on the number of species. This modified formulation is characterized by analytical tractability and interpretability of its main summaries. Furthermore, when the number of distinct species is infinite, the classical Ferguson Dirichlet process exhibits unsatisfactory behavior compared to the more general Poisson--Dirichlet model. Notably, within this latter model, the posterior mean of the diversity index can be expressed as a convex combination of the optimal classical unbiased estimator and the prior expectation. Theoretical results are further supported by asymptotic analyses and systematically compared with their classical counterparts.
	
	\vspace{12pt}
	\noindent\keywords{Gini--Simpson index, Species diversity, Bayesian nonparametric, Posterior asymptotic}
\end{abstract}

\section{Introduction}

Species diversity is the variety and abundance of species in a population, strictly related to its structure. Evaluating the diversity is fundamental in several field, including engineering, physics~\citep{Let06}, ecology~\citep[e.g.][]{Mag04, Got13}, genetics~\citep{Far95} and neuroscience~\citep{Shl07}. Species diversity can be partitioned in two main source of information that we want to infer from a specific sample: the number of species in the population (\textit{species richness}) and the variability in the species abundances (\textit{evenness}).  In particular, evenness reaches its maximum if all species frequencies are equal, and it is minimum if we observe only a single species. Hence, suitable indexes for diversity should vary as far as both the number of species and evenness change. Among possible indexes, two diversity measures are commonly used in literature, namely the Shannon index \citep[e.g.][]{Sha48, Pie67, Mag04, Got13} and the Gini--Simpson index, here denoted by $S$. This manuscript exploits the latter, which has been introduced independently by \citet{Gin12} and \citet{Sim49}. The Gini--Simpson index is nothing but the probability to sample with replacement, from a population, two units belonging to two distinct species of categories. In formula, the Gini--Simpson index equals
\begin{equation}\label{eq:gsind}
S = 1 - \sum_{j=1}^k P_j^2, 
\end{equation}
where $k$ is the number of distinct species, $(P_1, \dots, P_k) \in \bigtriangleup^{k-1}$, and $P_j$ denotes the relative frequency of the $j$th species. Here,
\[
	\bigtriangleup^{k-1}	= \left\{ x_1, \dots, x_{k} \Big\vert x_j \geq 0,\; j =1, \dots, k;\; \sum_{j=1}^{k} x_j = 1\right\}
\]
denotes the $(k-1)$-dimensional simplex space. Frequentist estimators have been proposed in literature to perform inference about the Gini--Simpson index \citep[e.g.][]{Smi77}, but those estimators are showing a large variability, and Bayesian approaches are usually preferred since they smooth and regularize empirical estimates \citep{Gil79, Goo53}. In this framework, specify a prior distribution for the Gini--Simpson index $S$ corresponds to set a distributional family for the sequence of relative frequencies $\{P_j\}_{j=1}^k$ and possibly for $k$, if assumed finite but random. 

In the following sections, within a general Bayesian framework, we conduct an analysis and comparison of the inferential implications for the Gini--Simpson index $S$ arising from the adoption of several fundamental nonparametric prior models for species frequencies, along with a model specification that has not yet been explored in the context of species diversity. Accordingly, two primary settings are considered. First, we address the case in which the total number of species $k$ is finite but possibly unknown, a scenario commonly assumed in most applied studies. Secondly, we examine species diversity under the assumption of some common nonparametric priors that accommodate an infinite number of species. 

In the first setting, within a Bayesian approach, it is natural to assign a prior distribution to the total number of species $k$, which we let arbitrary at this stage. The model is then fully specified by making a specific conditional distribution assumption for the frequencies $(P_1, \dots, P_k)$, given the number of species $k$. The most diffuse choice in literature is the symmetric Dirichlet (\textsc{SD}) distribution, where the only parameter $\theta$ indexing the model is fixed, i.e. does not depend on $k$. This leads to the well-known class of Gibbs--type prior with finitely many species \citep[see][and references therein]{Pit06, Mil18}. We note that the assumption of a symmetric prior on the species frequency distribution does not impose any restrictive constraint on the model. This is evident because, when conducting inference on diversity or, more broadly, on the composition of species assemblies, the specific labels or ordering of species are irrelevant.

Nevertheless, assigning a fixed parameter $\theta$ for the $\textsc{SD}$ prior of the species frequencies, disregarding the species richness $k$, entails severe implications on both prior assumptions and posterior behavior of frequencies and diversity index. Such prior assumption implies that species evenness and the amount of prior information on the species frequencies increase with the number of species $k$. This clearly affects the posterior behavior, since an expected high richness necessarily implies high species evenness and large estimates of the index, regardless the evenness of the observed species frequencies. Hence, we propose a more general class of symmetric Dirichlet priors to avoid this implications, by letting the concentration parameter depends on the number of species $k$. Specifically, $\theta$ is set to be inversely proportional to the number of distinct species $k$. The latter specification, here termed dynamic Dirichlet (\textsc{DD}) distribution, stands out for tractability and interpretability of the corresponding Gini--Simpson index distribution. 

In the following sections, we show that the \textsc{DD} prior is the unique model considered in the manuscript having both prior variability and expected evenness not depending on the number of species $k$. Similarly, as $k$ varies on its support, the amount of prior information on the frequencies $P_j$s is constant under the \textsc{DD} model. The \textsc{DD} prior model is well-behaved for large $k$, as it does not converge to a degenerate distribution when $k$ diverges. In terms of species diversity, the \textsc{DD} prior model allows a separate evaluation of evenness and richness. Hence, the Gini--Simpson index prior expectation factorizes as the product of expected evenness and a simple measure of richness, both controlled by distinct parameters. Remarkably, even from a posterior perspective, the model results to be particularly tractable and interpretable, as the posterior expectation of the Gini--Simpson index is nothing but a convex combination of the minimum variance unbiased estimator and a natural update of the prior guess. 

The second setting we consider along the manuscript is having a prior with an infinite number of species. Hence, the most important prior model known in literature is the one implied by a Dirichlet process \citep[DP,][]{Fer73}. This prior has been investigated in connection with the Gini--Simpson index by~\citet{Arb16}. However, such a prior lacks in flexibility since there is only one parameter to control prior mean and variance of the index, with the prior variance being quite low for large values of the prior mean.  Therefore, we will consider a larger family of priors which include the DP as particular case, namely the Poisson-Dirichlet process \citep[\textsc{PD}, also known as Pitman-Yor process][]{Per92,Pit97}. Such a family still belong to the class of Gibbs-type priors \citep{Gne06}, and enriches the DP as it is indexed by two parameters, both affecting prior mean and variance. \citet{Cer12, Cer14} considers the \textsc{PD} prior case, providing general formulas of the prior and posterior moments of the Gini-Simpson index. However, the inferential relevance of such formulas has not been previously explored in the literature. Here, we derive simple expression of prior and posterior formulas for the \textsc{PD} case, allowing to characterize the Gini--Simpson index distribution trough its first two prior moments. In particular, it turns out that the prior variability for the \textsc{PD} model is increasing in one of its parameters, with the \textsc{DP} model being the uniformly lower variance case. Further, the posterior mean of the index for the PD case can be expressed as a convex combination of the unbiased minimum variance estimator and the prior guess, similarly to the \textsc{DD} model. 

The paper presents also an asymptotic analysis of the Gini--Simpson index with different prior choices. In general, it is nontrivial to study consistency results in a Bayesian nonparametric setting \citep[see, e.g.,][]{Wal01,Wal04,Deb13}. We show that all the models produce consistent estimators of the Gini--Simpson index and have the same asymptotic index posterior, despite their rather different finite sample size behavior. This is quite remarkable, since the DP and PD priors induce a misspecified model if the true number of species is finite. Moreover, we provide a comparison with frequentist asymptotic results, showing that both approaches lead to the same asymptotic uncertainty quantification.

As stated previously, the main contributions of this paper regard the prior and posterior characterization of the Gini–Simpson index, as well as its asymptotic properties. The principal novelties of the manuscript, which are developed in the subsequent sections, can be summarized as follows.
\begin{itemize}
	\item[(a)] We propose a novel prior specification for species diversity estimation based on a symmetric Dirichlet distribution, whose concentration parameter depends on species richness. This formulation induces a well-behaved prior distribution for the Gini–Simpson index and offers advantages in terms of both theoretical properties and posterior inference.
	\item[(b)] We provide a comprehensive prior characterization of the Gini–Simpson index under different prior models, with particular emphasis on prior expectation and dispersion. In addition, we provide a strategy for posterior inference, deriving explicit expressions for posterior point estimates of the index under different prior assumptions.
	\item[(c)] We present an asymptotic analysis for the Gini--Simpson under different prior specifications, highlighting its connection to frequentist estimators. The asymptotic characterization allows the construction of asymptotic credible intervals and thus provides an approach for quantifying posterior uncertainty associated with the index.
\end{itemize}

\begin{figure}[!ht]
\centering
\resizebox{0.99\textwidth}{!}{%
\begin{circuitikz}
\tikzstyle{every node}=[font=\normalsize]
\node [font=\normalsize] at (0,13) {$\textsc{PD}(\sigma, \theta)$};
\draw [->, >=Stealth] (1,13) -- (3.25,13)node[pos=0.5, fill=white]{$\sigma = 0$};
\draw [->, >=Stealth] (12.25,13.5) to [bend right=30] (4.75,13);
\node [font=\normalsize] at (8.75,14.6) {$k = +\infty$};

\node [font=\normalsize] at (4,13) {$\textsc{DP}(\theta)$};
\node [font=\normalsize] at (8,13) {$\textsc{DDM}(\theta, \pi)$};
\node [font=\normalsize] at (8,10.5) {$\textsc{SDM}(\theta, \pi)$};
\draw [->, >=Stealth] (8.6,11) -- (8.6,12.5)node[pos=0.5, fill=white]{$\theta = \frac{\alpha}{K}$};
\draw [->, >=Stealth] (7.4,12.5) -- (7.4,11)node[pos=0.5, fill=white]{$\theta = \alpha$};
\node [font=\normalsize] at (0,10.5) {$\textsc{Gibbs-Type}(\sigma)$};
\draw [->, >=Stealth] (0,11) -- (0,12.75)node[pos=0.5, fill=white]{$\sigma \in [0,1)$, $\textsc{GEM}(\sigma, \theta)$ weights};
\draw [->, >=Stealth] (1.5,10.5) -- (7,10.5)node[pos=0.5, fill=white]{$\sigma < 0$};
\node [font=\normalsize] at (13,13) {$\textsc{DD}(\theta)$};
\node [font=\normalsize] at (13,10.5) {$\textsc{SD}(\theta)$};
\draw [->, >=Stealth] (13.6,11) -- (13.6,12.5)node[pos=0.5, fill=white]{$\theta = \frac{\alpha}{k}$};
\draw [->, >=Stealth] (12.4,12.5) -- (12.4,11)node[pos=0.5, fill=white]{$\theta = \alpha$};
\draw [->, >=Stealth] (9,13.25) -- (12,13.25)node[pos=0.5, fill=white]{$k$ fixed};
\draw [->, >=Stealth] (12,12.75) -- (9,12.75)node[pos=0.5, fill=white]{$K\sim\pi(k)$};
\draw [->, >=Stealth] (9,10.75) -- (12,10.75)node[pos=0.5, fill=white]{$k$ fixed};
\draw [->, >=Stealth] (12,10.25) -- (9,10.25)node[pos=0.5, fill=white]{$K\sim\pi(k)$};
\end{circuitikz}
}%
\caption{Graphical representation of relations among the distinct models considered in the manuscript. Specifically, the Poisson-Dirichlet process (\textsc{PD}), the Dirichlet process (\textsc{DP}), the symmetric Dirichlet (\textsc{SD}) and the dynamic Dirichlet (\textsc{DD}), along with their mixture version (\textsc{SDM} and \textsc{DDM}).}
\label{fig:model_rel}
\end{figure}

We remark that the models described in this article are related to each others. Figure~\ref{fig:model_rel} shows a visual representation of their connections. Therefore, the results presented in the manuscript are themselves often connected. The rest of the paper is organized as follows. Section~\ref{sec:prel} introduces some preliminary concepts and the prior distributions used along the manuscript. Section~\ref{sec:priorGS} provides a characterization a priori of the Gini--Simpson index with different prior choices. In Section~\ref{sec:postGS} we discuss posterior estimators for the index, with different prior specifications. In Section~\ref{sec:asym} we present the asymptotic distribution of the index under different prior settings. Section~\ref{sec:data} illustrates the index estimation through a species diversity example. The paper concludes with a discussion section. Further examples, illustrations, and proofs of the main results are deferred to the appendix. 

\section{Prior models}\label{sec:prel}

The main object to model a species distribution is the sequence of species frequencies of a population $\{P_j\}_{j = 1}^k$, where $k$ can be fixed or random, possibly infinite. Within a Bayesian approach, the elements $P_j$s of such a sequence are nothing but non-negative contrained random variables. In a general form, following \citet{Pit96}, a species sampling model is a sampling distribution as
\[
\tilde P(\cdot) =  \sum_{j=1}^k P_j \delta_{x_j} + \left( 1 - \sum_{j=1}^k P_j \delta_{x_j}(\cdot) \right) \textsc{G}_0(\cdot), 
\]
where $P_j \geq 0$, $k \in \mathbb N \cup \{+\infty\}$, the generic $x_j \in \mathbb X$ Polish space, and $\sum_{j=1}^k P_j \leq 1$ almost surely. Generally speaking, the sequence of species labels $\{x_j\}_{j=1}^k$ is i.i.d. as $\textsc{G}_0$, but in the context of this paper the $x_j$s only need to be distinct almost surely to function as species labels, since they are not necessarily random. Here, we consider the case for which  $\{P_j\}_{j = 1}^k \in \bigtriangleup^{k-1}$, so that the diffuse term $\textsc{G}_0$ is neglected, hence considering a \textit{proper} species sampling model. Since $\{P_j\}_{j = 1}^k$ and $\{x_j\}_{j = 1}^k$ are random quantities, $\tilde P\sim \mathscr Q$ is also random itself, where $\mathscr Q$ denotes the prior distribution we are considering for $\tilde P$. 

In the following section, we want to characterize the prior and posterior behavior of the Gini--Simpson index $S$ for specific choices of $\mathscr Q$, 
providing explicit expressions for its main prior and posterior summaries. Since the index takes values $[0,1]$, we resort to the normalized variance to clearly understand its behavior. For a generic random variable $W$ taking values in $[0,1]$, the variance value is strongly dependent on its expectation, being close to zero when $\mathbb E[W]$ approaches its boundary values. Specifically, the variance maximum for a given expected value $\mathbb E[W]$ corresponds to $\mathbb E[W](1 - \mathbb E[W])$. Hence, it seems appropriate to quantify the dispersion of $W$ adjusting the variance by its maximum, 
%
\[
	\overline{\textsc{Var}}(W) = \frac{\textsc{Var}(W)}{\mathbb E[W](1 - \mathbb E[W])}.
\]
The previous expression is used consistently in the following sections, to understand and interpret the behavior of several quantities constrained to $[0,1]$. 

\subsection{Prior distributions with finite number of species}

Consider the case with the species richness $K \sim \pi(k)$ being finite almost surely, where $\pi(k)$ denotes its prior, having support on the set of positive integers $\mathbb Z_+$.  Moreover, for each $k$, let $(P_{1,k}, \dots, P_{k,k})$ be a random vector whose distribution is the conditional distribution of $P_1, \dots, P_K \mid K = k$. A common choice, mainly for its tractability, is to set $(P_{1,k}, \dots, P_{k,k}) \sim \textsc{Dir}(\alpha_k)$, where $\textsc{Dir}(\alpha_k)$ denotes the symmetric Dirichlet distribution with concentration parameter $\alpha_k$. Usually, the concentration parameter is fixed to a value that does not depend on the richness, $\alpha_k = \theta$. This model, also known as Fisher model \citep{Fis30}, and hereafter termed  \textsc{SD} model, is also recover by considering a Gibbs--type prior with discount parameter $\sigma < 0$, whereas a Gibbs--type prior with positive discount parameter implies an infinite number of species \citep{Lij08, Pit06}.   

The \textsc{SD} prior model assumption for species frequencies has strong implications on the Gini--Simpson index behavior. Hence, in the following we introduce a prior model having the same frequencies distribution, but with the concentration parameter that now depends of the species richness $k$. Specifically, the concentration parameter is inversely proportional to the richness, with  $(P_{1,k}, \dots, P_{k,k}) \sim \textsc{Dir}(\alpha_k)$ with $\alpha_k = \theta / k$. To distinguish this prior model from the previous symmetric specification, we denote the latter as \textsc{DD} model.  This specification of a symmetric Dirichlet model has been previously considered in literature, but not for species diversity studies. Hence, it has been used to approximate the Ferguson Dirichlet process model \citep[e.g.][]{Kin75, Mul96, Ish02b, Pit06}, since the weight distribution of the \textsc{DD} weakly converges to the one of the DP as $k$ is known and diverges to infinity, as shown by \citet{Mul03}. Further, \citet{Ish02a} consider the \textsc{DD} model as weights prior in a mixture model setting. Later, the same model specification has been considered in the mixture of mixtures framework, resorting to the name dynamical mixture of finite mixtures, which we borrow in this manuscript. See, e.g., \citet{Mcc08,Fru19,Fru21,Gru25}. 

We remark that, even if the \textsc{DD} and  \textsc{SD} models are similar to each other, they induce a quite different prior on the species weights and hence on the Gini--Simpson index. In terms of expectation, both the priors are centering the species weights $(P_{1,k}, \dots, P_{k,k})$ on $1 / k$, since they are symmetric. However, in terms of dispersion, the two priors differ from each other, as the dispersion is controlled by either $\alpha_k$ or $\theta$. The normalized variances of the species weights equal $1 / (\theta + 1)$ for the \textsc{DD} model, independent of the species richness $k$, while for the \textsc{SD} model equals $1 / (k \theta + 1)$, and approaches zero as $k$ diverges to infinity. As consequence, the frequencies of \textsc{SD} model shrink toward $1 / k$ as $k$ increases, having severe impact on the Gini--Simpson index behavior. Further, the amount of prior information entailed in the model corresponds to $\theta$ for the \textsc{DD} model and $k \theta$ for the \textsc{SD} model, which can be interpreted as an equivalent sample size of our prior information. Hence, with the \textsc{SD} model we have a peculiar growth of the prior information strength as the richness increases, while for the \textsc{DD} model the amount of prior information is independent of $k$. This will obviously affect, as we discuss later, posterior estimates of the Gini--Simpson index.

In the following sections, we often consider the more general case where the species frequencies distribution is mixed by a prior assumption for the species richness, hence by considering 
\[
	\begin{split}
	P_1, \dots, P_K \mid K = k &\sim \textsc{DD}(\theta) \quad \text{or} \quad \textsc{SD}(\theta), \\
	K &\sim \pi(k).
	\end{split}
\]
As aforementioned, we refer to these models as \textsc{DDM} and \textsc{SDM}, depending on which model we are mixing. This setting results in a weaker prior specification, since the number of species is not fixed a priori. Specific choices for $\pi(k)$ and their impact on the Gini--Simpson index are discussed in  Section~\ref{sec:prior_rich}.

\subsection{Prior distributions with infinite number of species}\label{sec:PDmodel}

When the number of species is infinite, a suitable prior distribution consists of the Poisson--Dirichlet process. Such a model, also known in literature as Pitman--Yor process \citep{Pit96,Pit97}, extends the celebrated Ferguson DP \citep{Fer73} with an additional parameter $\sigma$, termed discount parameter, that impacts on the frequencies behavior. To have infinitely many positive frequencies $\{P_j\}_{j\geq 1}$, the PD parameters $(\theta, \sigma)$ must satisfy $\sigma \in [0,1)$ and $\theta > -\sigma$. In particular, for $\sigma = 0$ we recover the Ferguson DP with total mass $\theta$. We recall that the frequencies order in the distribution is irrelevant, since the model is invariant with respect to species permutation. Hence, it only matters the probability mass allocation over the frequencies. In the probability literature, the distribution of the process frequencies in ranked order is called Poisson--Dirichlet distribution \citep{Pit96}, but is rater intricate. A more intuitive representation is the size biased permutation $\{\tilde P_j\}_{j\geq 1}$ of $\{P_j\}_{j\geq 1}$, that is the sequence of species probabilities in order of appearance. Such a distribution is also known as the two--parameters Griffiths–Engen–McCloskey distribution \citep[see, e.g.,][]{Pit06}, and admits the following stick--breaking representation 
\[
	\tilde P_1 = W_1, \qquad \tilde P_j = W_j \prod_{\ell = 1}^{j-1} (1 - W_\ell), \; j = 2,  3, \dots, 
\]
where $W_j \sim \textsc{Beta}(1-\sigma, \theta + j \sigma)$, for every $j \geq 1$. 

Both the parameters $\theta$ and $\sigma$ affect the behavior of the species frequencies. Specifically, the strength parameter $\theta$ controls the variability of the frequencies, while the discount parameter $\sigma$ acts on the decay rate, as $j$ growth, and on the asymptotic distribution of the observed species richness, as the sample size $n$ diverges. Indeed, by denoting with $\{P_{(j)}\}_{j\geq 1}$ the decreasing rearrangement of the species frequencies, for the frequencies limit behavior, as $j$ diverges, 
\[
	P_{(j)} \asymp j^{-1 / \sigma}, \text{ if }\sigma > 0, \qquad \qquad P_{(j)} \asymp \mathrm e^{-j / \theta},  \text{ if } \sigma = 0,
\]
where $b_n \asymp c_n$ denotes asymptotic equivalence of $b_n$ and $c_n$, i.e. $\lim_{n \to \infty} a_n / c_n \to \zeta < +\infty$. Hence, the PD model has a power decay of the species frequencies, while the DP shows a faster exponential decay. For the species richness, as the sample size $n$ diverges, we have
\[
	K_n \asymp n^\sigma,  \text{ if } \sigma > 0, \qquad K_n \asymp \log(n),  \text{ if } \sigma = 0, 
\]
where $K_n$ denotes the number of observed distinct species in a sample of size $n$. See, e.g., \citet{Pit06} for further details.  Hence, the PD model mitigates the rich-get-richer property of the DP, where the latter is known to concentrate most of the probability mass over few species. 

\subsection{Differences among prior specifications}

Clearly, assuming a prior distribution with a finite or infinite number of species, and possibly with different frequencies behavior, has an impact on the species sampling process. Table~\ref{tab:summaries} displays the main summaries of the aforementioned species frequencies distributions and the corresponding predictive probability of sampling a new species, given an observed sample. 

\begin{table}[!h]
\centering
\begin{tabular}{l|lll}
                & $\mathbb E[\tilde P_j]$                          & $\overline{\textsc{Var}}(\tilde P_j)$                                                                                            & $\mathrm{Pr}(K_{n+1} = k_{n} + 1 \mid X_1, \dots, X_{n})$ \\ \hline
$\textsc{DD}(\theta)$  & $\frac{1}{k}$                                    & $\frac{1}{\theta + 1}$                                                                                                           &         $\frac{k - k_n}{\theta + n}$                      \\[5pt]
$\textsc{DDM}(\theta, \pi)$ & $\mathbb E \left[\frac{1}{K}\right]$                           
& $\frac{1 + \mathbb E[1 / K]}{(1 + \theta) ( 1 - \mathbb E[1/K])} + \overline{\textsc{Var}}(1 / K)$
&         $\frac{\mathbb E[K \mid k_n] - k_n}{\theta + n}$                    \\[5pt]\hline
$\textsc{SD}(\theta)$   & $\frac{1}{k}$              & $\frac{1}{k \theta + 1}$                                                                                                         &       $\frac{k - k_n}{k \theta + n}$                    \\[5pt]
$\textsc{SDM}(\theta, \pi)$  & $\mathbb E\left[\frac{1}{K}\right]$              
& $\frac{ \mathbb E\left[\frac{K - 1}{K^2 (K \theta + 1)}\right]}{\mathbb E[1 / K]( 1 - \mathbb E[1 / K])} + \overline{\textsc{Var}}(1 / K)$  
&     $\mathbb E\left[\frac{K - k_n}{K \theta + n} \mid k_n\right]$                          \\[5pt]\hline
$\textsc{DP}(\theta)$   & $\frac{1}{\theta + 1}$                           &        $\frac{\theta}{(\theta + 1)^{2j} (\theta + 2)^j}$                &               $\frac{\theta}{\theta + n}$                                            \\[5pt]
$\textsc{PD}(\theta, \sigma)$   & $\frac{1 - \sigma}{\theta + 1 + (j - 1) \sigma}$ &     $\frac{(1 - \sigma)^j(\theta / \sigma + 1)_{j-1}}{ \sigma^{2 j} ( (\theta + 1) / \sigma)_{j}^2 ((\theta + 2) / \sigma)_j }$           &       $\frac{\theta + k_n \sigma}{\theta + n}$                                                       \\[5pt] \hline
\end{tabular}
\caption{Summaries of different prior specifications, specifically the species frequencies expectation, the species frequencies normalized variances and the probability of sampling a new species at the $n+1$ sampling step, given a sample of size $n$. For the DP and PD models, the previous refer to the size-biased representation. We use the Pochammer notation, with $(a)_b = \Gamma(a + b) / \Gamma(b)$, in our framework with $a, b > 0$.}\label{tab:summaries}
\end{table}

From Table~\ref{tab:summaries},  \textsc{DD} and \textsc{DDM} models seem more interpretable of their counterparts \textsc{SD} and \textsc{SDM}. In particular, the normalized variance of the \textsc{DDM} model can be written as function of the first two moments of the richness reciprocal $1 / K$, while the \textsc{SDM} shows a more intricate expression. Given a sample of size $n$, the probability of sampling a new species at the $n+1$ sampling step depends on the sample size $n$ and the number of already observed species $k_n$. However, such a probability is proportional to the richness minus the number of observed species in the \textsc{DD} and \textsc{DDM} models, whereas for the \textsc{SD} and \textsc{SDM} models the number of species appears also in the denominator. These structural differences impact the induced distribution of the Gini-Simpson index and its posterior inference, as it will be show in the next sections. 

\section{Induced prior for the Gini--Simpson index}\label{sec:priorGS}

Different choices for the species frequencies distributions induce different priors for the Gini--Simpson index of equation~\eqref{eq:gsind}. Here we discuss how different models behave, by considering first when the richness is finite but eventually random, and later when the number of species is infinite. We recall that the Gini--Simpson index takes values in $[0,1]$, being equal to zero if there is only a single species and approaching one as the species frequencies tend to be equal and the richness diverges to infinity. In the following, we denote by $S_{\textsc{A}}$ the Gini--Simpson index under the generic prior $\textsc{A}$.

\subsection{Gini--Simpson index with a fixed number of species}

We first consider a generic symmetric Dirichlet models \textsc{SD} and \textsc{DD}, assuming the number of species $k$ to be known and fixed. Some algebraic steps yield to the prior mean and variance of the Gini--Simpson index,
\begin{equation}\label{eq:priDexp}
	\mathbb E[S_{\textsc{SD}}] = \frac{(k - 1) \theta}{ k \theta + 1}, \qquad \qquad \mathbb E[S_{\textsc{DD}}] = \frac{\theta}{\theta + 1} \left( 1 - \frac{1}{k} \right)
\end{equation}
The previous expected values, under both the models \textsc{SD} and \textsc{DD}, increases when the species richness is also increasing. However, for the \textsc{SD} model the previous expectation tends to one as $k$ diverges, while for the \textsc{DD} model tends to $\theta / (1 + \theta)$. Hence, as mentioned in the introduction, the \textsc{SD} model forces the species frequencies to shrink toward $1 / k$ when the number of species is large. Similarly, for the normalized variances we have 
\begin{equation}\label{eq:priDvar}
	\overline{\textsc{Var}} (S_{\textsc{SD}}) = \frac{2}{ (k \theta + 2)(k \theta + 3)}, \qquad \qquad
	\overline{\textsc{Var}} (S_{\textsc{DD}}) = \frac{2}{ (\theta + 2)(\theta + 3)}.
\end{equation}
For the \textsc{SD} model, the variance of the Gini--Simpson index is decreasing in $k$, while is constant for the $\textsc{DD}$ case.

To understand the index behavior, we can consider its normalize version, which is capturing the corresponding evenness of the species. Hence, the Gini--Simpson index reach its maximum value in the equiprobability case, for which equals $1 - 1 / k$, and the normalized Gini--Simpson index equals
\begin{equation*}\label{eq:normind}
	\overline{S} = \frac{S}{1 - 1 / k}. 
\end{equation*}
From the expression of the expectations in equations~\eqref{eq:priDexp} we obtain the following expression for the expectations of the normalized Gini--Simpson index
\begin{equation}\label{eq:normgsind}
\mathbb E\left[\overline{S}_{\textsc{SD}}\right] = \frac{k\theta}{k\theta + 1}, \qquad \qquad \mathbb E\left[\overline{S}_{\textsc{DD}}\right] = \frac{\theta}{\theta + 1}. 
\end{equation}
For the \textsc{SD} model, the prior expected evenness is determined by $k\theta$ and increases with the species richness, while for the \textsc{DD} model the prior expected evenness is constant. Moreover, the \textsc{DD} model clearly separates the expected evenness and richness. Indeed, the prior expectation of the Gini--Simpson index factorizes as the product of expected evenness $E\left[\overline{S}_{\textsc{DD}}\right]$, which depends solely on $\theta$, and a measure of richness $1 - 1/k$. 

The following result consolidates and extend the previous comments regarding the expectation and the variance of the Gini--Simpson index of  Equations~\eqref{eq:priDexp}--\eqref{eq:priDvar} from an asymptotic perspective. 
\begin{proposition}
For the Gini--Simpson index, as the species richness $k$ diverges to infinity, the following results hold true. 
\begin{itemize}
\item[i)] For the \textsc{SD} model, 
\[
S_{\textsc{SD}} \stackrel{L^2}{\longrightarrow} 1, \qquad\qquad S_{\textsc{SD}} \stackrel{a.s.}{\longrightarrow} 1.
\]
\item[ii)] For the \textsc{DD} model, 
\[
S_{\textsc{DD}} \stackrel{d}{\longrightarrow} 1 - \sum_{j = 1}^\infty Q_j^2,
\]
where $\{Q_j\}_{j \geq 1}$ are the weights of a $DP(\theta)$.
\item[iii)] The generic Gini--Simpson index of a symmetric Dirichlet prior distribution with concentration parameter $\beta_k$ converges in distribution to a random variable that does not degenerate at zero or one if and only if its total concentration $k \beta_k$ converges to a constant that is neither zero or infinity. 
\end{itemize}
\end{proposition}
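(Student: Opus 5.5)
We argue through the Gamma representation of the symmetric Dirichlet law and the closed-form moments in \eqref{eq:priDexp}--\eqref{eq:priDvar}. For a symmetric Dirichlet with concentration $\beta_k$, write $P_{j,k} = G_j / T_k$, where $G_1,\dots,G_k$ are i.i.d.\ $\textsc{Gamma}(\beta_k,1)$ and $T_k=\sum_{j\le k} G_j$. Then $\sum_j P_{j,k}^2 = \sum_j G_j^2 / T_k^2$. Before treating the three parts, we record the general moment formulas with $\beta_k$ in place of $\theta$ (respectively $\theta/k$), namely
\[
\mathbb E[S]=\frac{(k-1)\beta_k}{k\beta_k+1},\qquad \overline{\textsc{Var}}(S)=\frac{2}{(k\beta_k+2)(k\beta_k+3)} .
\]
These follow by the same algebra that produced \eqref{eq:priDexp}--\eqref{eq:priDvar}, using $\mathbb E[P_j^2]$, $\mathbb E[P_j^4]$ and $\mathbb E[P_i^2P_j^2]$ for the Dirichlet.

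For i), with $\beta_k=\theta$, the formulas give $\mathbb E[S_{\textsc{SD}}]\to 1$ and $\textsc{Var}(S_{\textsc{SD}})\le \mathbb E[S](1-\mathbb E[S])\,2/(k\theta)^2\to 0$. Hence $\mathbb E[(1-S_{\textsc{SD}})^2]\to 0$, which is $L^2$ convergence. For the almost sure statement we must specify a coupling across $k$. We take a single i.i.d.\ sequence $G_1,G_2,\dots$ of $\textsc{Gamma}(\theta,1)$ variables and set $1-S_{\textsc{SD}}=\big(k^{-2}\sum_{j\le k}G_j^2\big)/\big(k^{-1}T_k\big)^2$. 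By the strong law, $k^{-1}T_k\to\theta$ a.s. Also $k^{-1}\sum_{j\le k} G_j^2\to \mathbb E[G^2]<\infty$ a.s., so the numerator is $O(1/k)\to 0$ a.s. If instead the intended meaning is ``for any coupling,'' we would use the bound $\mathbb E[(1-S)^2]=O(k^{-2})$, which is summable, and apply Borel--Cantelli via Chebyshev; this yields a.s.\ convergence regardless of coupling.

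For ii), we use the known result cited from \citet{Mul03}: the ranked weights of $\textsc{Dir}(\theta/k,\dots,\theta/k)$ converge in distribution, in the $\ell^1$ (or $\ell^\infty$) sense on ranked sequences, to the ranked weights of a $DP(\theta)$ (the Poisson--Dirichlet$(\theta)$ law). The map $p\mapsto 1-\sum_j p_j^2$ is continuous on the closure of ranked sequences summing to at most one in the $\ell^1$ topology, since $|\sum p_j^2-\sum q_j^2|\le 2\|p-q\|_1$. The continuous mapping theorem then gives the claim. The main obstacle is making sure the weak convergence holds in a topology in which the functional is continuous. If only finite-dimensional (coordinatewise) convergence of the ranked vector is available, we would add a tightness argument: the tail $\sum_{j>m}P_{(j)}^2\le P_{(m)}$ is uniformly small in probability. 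Alternatively, we can match moments. The variable is bounded, so convergence of all moments $\mathbb E[(\sum P_j^2)^r]$, which are computable for the Dirichlet and converge to the Ewens-formula moments, suffices by the method of moments.

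For iii), we put $c_k=k\beta_k$. If $c_k\to c\in(0,\infty)$, the argument of ii) applies with $\theta$ replaced by $c$: the limit is $1-\sum Q_j^2$ with $Q\sim DP(c)$. This limit is nondegenerate by the formulas, with mean $c/(c+1)$ and normalized variance $2/((c+2)(c+3))>0$. Conversely, we pass to subsequences. If $c_k\to\infty$ along a subsequence, then $\mathbb E[S]\to1$ and $\textsc{Var}\to 0$, so $S\to 1$ in probability. If $c_k\to 0$, then $\mathbb E[S]\le c_k\to 0$, so $S\to 0$. If $c_k$ has two distinct finite positive limit points $c\neq c'$, the limiting means $c/(c+1)\neq c'/(c'+1)$ differ, and since $S$ is bounded its mean converges along any subsequence on which $S$ converges in distribution, so no common limit exists. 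This proves the ``only if'' direction, reading ``converges to a constant'' as existence of the limit $c\in(0,\infty)$.
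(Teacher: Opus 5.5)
Your proposal is correct and follows essentially the paper's route: moment formulas plus Borel--Cantelli for (i); the known convergence of the ranked $\textsc{Dir}(\theta/k)$ weights to the ranked $DP(\theta)$ weights, combined with continuity of $x\mapsto 1-\sum_j x_j^2$ on ranked sequences, for (ii) (your bound $\sum_{j>m}P_{(j)}^2\le P_{(m)}\le 1/m$ is exactly what places ranked sequences in the Hilbert cube the paper uses); and the mean formula $\mathbb E[S]=(1-1/k)\,k\beta_k/(k\beta_k+1)$ for (iii). Your version is slightly tighter than the paper's in two places: you run Borel--Cantelli on $\mathbb E[(1-S_{\textsc{SD}})^2]=O(k^{-2})$ rather than on $\textsc{Var}(S_{\textsc{SD}})$, which is needed because $1-S_{\textsc{SD}}$ is not centred, and your subsequence argument makes rigorous the ``only if'' direction of (iii), which the paper only sketches.
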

Thus, the \textsc{DD} model, with $k \beta_k \to \theta$ as $k$ diverges, is essentially the only model within the possible symmetric Dirichlet specifications that allows for a non-degenerate Gini--Simpson index, as the richness diverges. Therefore, all other specifications imply extreme types of populations in the limit behavior, either with only one species or with infinitely many having all similar frequencies. 

A crucial aspect regards the uncertainty quantification associated with the distribution induced by different prior specifications. Hence, for the \textsc{SD} and \textsc{DD} models, we are able to provide explicitly an expression of the minimum prior variance, when fixing the index expectation to a specific value $\mu$. 

\begin{proposition}
	Consider the Gini--Simpson index with fixed richness and let $\mu = \mathbb E[S]$ be the prior expectation of the index. For both SD and DD models, the minimum of the normalized variance equals
		\[
			\min_{k \in \mathbb Z_+}\left\{\overline{\textsc{Var}}(S_\textsc{SD})  \right\} = \min_{k \in \mathbb Z_+}\left\{\overline{\textsc{Var}}(S_\textsc{DD})  \right\} = \frac{2\left[(1-\mu)g(\mu)-1\right]^2}{\left[(2-\mu)g(\mu)-2\right]\left[(3-2\mu)g(\mu)-3\right]},
		\]
		where $g(\mu)= \lceil 1/(1-\mu) \rceil$ corresponds to the optimal value of the richness and $ \lceil \cdot \rceil$ denotes the ceiling function.
\end{proposition}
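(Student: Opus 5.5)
The plan is to reduce both models to one function of a single quantity, the total concentration, and then optimise over the admissible richness values. Fix $\mu\in(0,1)$. For each fixed $k$, the mean constraint $\mathbb E[S]=\mu$ determines the concentration parameter uniquely through \eqref{eq:priDexp}. The minimisation over $k\in\mathbb Z_+$ is therefore a one–dimensional problem over the integers for which a positive parameter exists.

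First I solve the mean constraints. For the \textsc{SD} model, $\mu(k\theta+1)=(k-1)\theta$ gives
\[
x_k:=k\theta=\frac{k\mu}{k(1-\mu)-1}.
\]
For the \textsc{DD} model, $\frac{\theta}{\theta+1}=\frac{\mu}{1-1/k}$ gives $\theta=\frac{k\mu}{k(1-\mu)-1}$, which is the same number $x_k$. By \eqref{eq:priDvar}, both normalised variances then equal
\[
h(x_k)=\frac{2}{(x_k+2)(x_k+3)}.
\]
This already proves the equality of the two minima, so from here on I treat a single problem.

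Second, I determine the admissible $k$. Positivity of the parameter requires $k(1-\mu)-1>0$, that is $k>1/(1-\mu)$. This is just the fact that $\mathbb E[S]<1-1/k$, because $S\le 1-1/k$. The smallest admissible integer is therefore $g(\mu)=\lceil 1/(1-\mu)\rceil$. One boundary case needs a remark: when $1/(1-\mu)$ is itself an integer, $k=g(\mu)$ corresponds to the limiting case $\theta=\infty$, where the weights are degenerate at $1/k$. There the formula returns $0$, which is the infimum. I will state this explicitly and read the minimum in that extended sense.

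Third, I show monotonicity in $k$. The map $x\mapsto h(x)$ is strictly decreasing on $(0,\infty)$. Treating $k$ as continuous,
\[
\frac{d}{dk}\,\frac{k\mu}{k(1-\mu)-1}=\frac{-\mu}{\bigl(k(1-\mu)-1\bigr)^2}<0,
\]
so $x_k$ is decreasing in $k$. Hence $h(x_k)$ is increasing in $k$, and the minimum is attained at the smallest admissible $k$, namely $k=g(\mu)$.

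Finally, I substitute. Writing $D=k(1-\mu)-1$, one has $x_k+2=\frac{(2-\mu)k-2}{D}$ and $x_k+3=\frac{(3-2\mu)k-3}{D}$. Therefore
\[
h(x_k)=\frac{2\bigl[(1-\mu)k-1\bigr]^2}{\bigl[(2-\mu)k-2\bigr]\bigl[(3-2\mu)k-3\bigr]},
\]
and setting $k=g(\mu)$ gives the stated expression.

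The main obstacle is not the algebra, which is routine. It is handling the admissibility constraint correctly, so that the integer minimiser is exactly the ceiling, including the edge case where $1/(1-\mu)$ is an integer.
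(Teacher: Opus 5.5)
Your proof is correct and follows the paper's argument: solve the mean constraint for the concentration, note that the normalized variance $2/[(x_k+2)(x_k+3)]$ is increasing in $k$ because $x_k=k\mu/(k(1-\mu)-1)$ is decreasing, and evaluate at the smallest admissible $k=\lceil 1/(1-\mu)\rceil$; routing both SD and DD through the common quantity $x_k$ is a slightly tidier version of what the paper does model by model. Your explicit treatment of the case $1/(1-\mu)\in\mathbb Z_+$ is more careful than the paper's, but be precise there: the value $0$ is attained only if you admit the degenerate $\theta=\infty$ model, whereas over finite $\theta$ the minimiser is $1/(1-\mu)+1$ and the minimum is strictly positive, so $0$ is not an infimum over the finite-$\theta$ family.
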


\subsection{Gini--Simpson index with a random number of species}

As mentioned before, we can relax the model specifications by considering the number of species $K$ unknown and random, with $K \sim \pi(k)$. Hence, the prior means for the \textsc{SDM} and \textsc{DDM} models equal
\begin{equation}\label{eq:exp_prior_k}
\mathbb E\left[{S}_{\textsc{SDM}}\right] = \mathbb E\left[ \frac{(K - 1) \theta}{K \theta + 1} \right], \qquad \qquad \mathbb E\left[{S}_{\textsc{DDM}}\right] = \frac{\theta}{\theta + 1} (1 - \mathbb E[1 / K]), 
\end{equation}
where the lefthand expectations in the previous equations are taken with respect to $K$. Both the previous expression are non-decreasing function of $\theta$, and they range in the interval $(0, 1 - \mathbb E[1 / K])$. Similarly, the prior expectations of the normalized Gini--Simpson index can be retrieved evaluating the expectations, with respect to the species richness, of the expression stated in equation~\eqref{eq:normgsind}, obtaining 
\[
\mathbb E\left[\overline{S}_{\textsc{SDM}}\right] = \mathbb E \left[\frac{K\theta}{K\theta + 1}\right], \qquad \qquad \mathbb E\left[\overline{S}_{\textsc{DDM}}\right] = \frac{\theta}{\theta + 1}. 
\]
We can appreciate that letting the richness being a random quantity does not affect the prior expectation of the normalized index for the \textsc{DDM}. Further, for the \textsc{DDM} model holds a factorization of the prior expectation similar to the \textsc{DD} case, for which we can decompose the unnormalized index in two terms, a first term measuring the species evenness $\theta/ (\theta +1 )$ and a second term measuring the species richness $1 - \mathbb E[1 / K]$.

Regarding the normalized variances of the Gini--Simpson index for the mixture cases, we do not report here the expression for the \textsc{SDM} case, since is quite cumbersome. For the \textsc{DDM} case, after some algebraic manipulation the normalized variance can be expressed as
\begin{equation}\label{eq:var_prior_k}
\overline{\textsc{Var}} (S_{\textsc{DDM}}) = 1 - \left(1 - \frac{2}{(\theta + 2) (\theta + 3)} \right)\left(1 - \overline{\textsc{Var}} (1 / K) \frac{\theta \mathbb E[1 / K]}{1 + \theta \mathbb E[1 / K]} \right).
\end{equation}
The previous expression is an increasing function of the normalized variance of $1 / K$, which stands for the richness variance when we decompose $S_{\textsc{DDM}}$ in the product of two terms. In particular, when $K$ degenerates,  $\overline{\textsc{Var}} (S_{\textsc{DDM}})$ reaches its minimum value, which corresponds to  $\overline{\textsc{Var}} (S_{\textsc{DD}})$. Further, both prior mean and normalized variance of the \textsc{DDM} Gini--Simpson index depend on the richness prior distribution $\pi$ only through the first two moments of $1 / K$. Therefore, they are entirely determined in the \textsc{DDM} model by three parameters, namely $\theta$, $\mathbb E[1 / K]$ and $\overline{\textsc{Var}}(1 / K)$, which all are easily interpretable. 

To highlight the flexibility of both \textsc{SDM} and \textsc{DDM} models in describing prior opinions on the index values, we can study the range of its first two moments for different prior choices. The following proposition summarizes the expectation support and the supremum of the Gini--Simpson index normalized variance for both models, while the prior mean is held fixed. 
\begin{proposition}\label{prop:supre}
The following conditions for the symmetric Dirichlet mixture priors hold true.
\begin{itemize}
\item[i)] For both \textsc{SDM} and \textsc{DDM} models, the prior expectation of the Gini--Simpson index takes values in $(0, 1 - \mathbb E[1 / K])$.
\item[ii)] For both \textsc{SDM} and \textsc{DDM} models, and for any given value $\mu$ of the Gini--Simpson index prior expectation, the supremum of the normalized variance of the index equals one, i.e.
\[
	\sup_{\pi, \theta} \left\{ \overline{\textsc{Var}}(S) \right\} = 1,
\]
as the prior $\pi$ of the species richness and the concentration parameter $\theta$ vary. 
\end{itemize}
\end{proposition}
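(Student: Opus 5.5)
For part i) we fix $\pi$ and let $\theta$ range over $(0,\infty)$. For the \textsc{DDM} model the claim follows directly from equation~\eqref{eq:exp_prior_k}: $\mathbb E[S_{\textsc{DDM}}]=\frac{\theta}{\theta+1}(1-\mathbb E[1/K])$, and $\theta/(\theta+1)$ is a continuous, strictly increasing bijection of $(0,\infty)$ onto $(0,1)$. For the \textsc{SDM} model we use the integrand $h_K(\theta)=(K-1)\theta/(K\theta+1)$. It satisfies $0\le h_K(\theta)\le 1-1/K$, and it is nondecreasing in $\theta$, strictly increasing whenever $K\ge 2$. Moreover $h_K(\theta)\to 0$ as $\theta\to 0$ and $h_K(\theta)\to 1-1/K$ as $\theta\to\infty$. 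Dominated convergence then shows that $\theta\mapsto\mathbb E[h_K(\theta)]$ is continuous with limits $0$ and $1-\mathbb E[1/K]$ at the two ends. Monotonicity shows these limits are not attained, unless $K=1$ a.s., in which case the interval is empty and trivial. This gives the open interval $(0,1-\mathbb E[1/K])$.

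For part ii) we first obtain the upper bound. Since $S\in[0,1]$ we have $S^2\le S$, so $\textsc{Var}(S)\le \mathbb E[S]-\mathbb E[S]^2$, and hence $\overline{\textsc{Var}}(S)\le 1$ for every $\pi$ and $\theta$. It remains to show that, for fixed $\mu\in(0,1)$, the value $1$ is approached. The idea is to make $S$ nearly Bernoulli$(\mu)$ by mixing a one-species population, where $S=0$, with a very rich and very even one, where $S\approx 1$. Concretely, take $\pi$ putting mass $1-p$ on $K=1$ and mass $p$ on $K=N$. Then
\[
\mathbb E[S]=p\,m_{N,\theta},\qquad \mathbb E[S^2]=p\,s_{N,\theta},
\]
where $m_{N,\theta}$ and $s_{N,\theta}$ are the first two moments of the index under \textsc{SD} or \textsc{DD} with $k=N$. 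To hit the prescribed mean exactly, set $p=\mu/m_{N,\theta}$. This is admissible, i.e.\ $p\le 1$, as soon as $m_{N,\theta}>\mu$. By equation~\eqref{eq:priDexp} this holds for $N$ and $\theta$ large enough, since both $m_{N,\theta}$ tend to $1$ as $N,\theta\to\infty$; for \textsc{DD} the limit is $\frac{\theta}{\theta+1}(1-\frac1N)\to 1$.

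With this choice,
\[
\overline{\textsc{Var}}(S)=\frac{\mu\, s_{N,\theta}/m_{N,\theta}-\mu^2}{\mu(1-\mu)},
\]
so it suffices to show that $s_{N,\theta}/m_{N,\theta}\to 1$. This follows from equation~\eqref{eq:priDvar}. The normalized variance of $S_{\textsc{SD}}$ or $S_{\textsc{DD}}$ with $k=N$ tends to $0$ when $N\theta\to\infty$, respectively $\theta\to\infty$. Since $m_{N,\theta}\to 1$, it follows that $s_{N,\theta}=\textsc{Var}+m^2_{N,\theta}\to 1$. Therefore $\overline{\textsc{Var}}(S)\to(\mu-\mu^2)/(\mu(1-\mu))=1$, which proves that the supremum equals $1$, and it is not attained.

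The main obstacle is bookkeeping rather than analysis. We must keep $\mathbb E[S]$ exactly equal to $\mu$ along the approximating sequence, which is why $p$ is tuned as a function of $(N,\theta)$ instead of being fixed. We must also check that this is compatible with part i), i.e.\ that $\mu<1-\mathbb E[1/K]=p(1-1/N)$. This holds because $m_{N,\theta}<1-1/N$. As a cross-check in the \textsc{DDM} case, the same limit can be read off equation~\eqref{eq:var_prior_k}: along this sequence $\overline{\textsc{Var}}(1/K)\to 1$ and $\theta\,\mathbb E[1/K]\to\infty$.
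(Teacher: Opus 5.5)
Your proposal is correct and follows essentially the paper's argument: in (i) the endpoints come from the limits $\theta\to 0$ and $\theta\to\infty$, and in (ii) the supremum is approached with two-point priors on $\{1,N\}$ whose weight is tuned to fix the mean, which is exactly the paper's $\pi_m=(1-p_m)\delta_1+p_m\delta_m$. Your version is slightly tidier in two respects: you keep the mean exactly equal to $\mu$ along a single joint limit in $(N,\theta)$ that works for both models, whereas the paper fixes $\theta$ for \textsc{SDM} and first sends $\theta\to\infty$ for \textsc{DDM}; and you explicitly supply the bound $\overline{\textsc{Var}}(S)\le 1$ together with the continuity and monotonicity needed for (i).
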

Hence, for both model \textsc{SDM} and \textsc{DDM} the prior variability can be chosen arbitrarily large, for any given prior of the mean.  Specifically, $\overline{\textsc{Var}}(S)$ approaches its supremum if the normalized variance of the species richness $K$ is close to one, which is tantamount to say that $K$ is either one or infinite almost surely. There is no explicit expression of the variance minimums for \textsc{SDM} and \textsc{DDM}, as for the fixed number of species case. However, for both models we established an optimization problem to obtain a numerical evaluation for the variance minimums, regardless the choice of the prior $\pi(k)$ for the species richness. Details on the derivation of such minimums are deferred to the appendix, while a graphical illustration is provided in Figure~\ref{fig:EVAR}. 


\subsection{Gini--Simpson index with an infinite number of species}

We consider now the \textsc{PD} model introduced in Section~\ref{sec:PDmodel}, where the number of species is assumed to be infinite. Hence, for the corresponding Gini--Simpson index $S_{\textsc{PD}}$, by some algebraic steps, one obtains the following expressions for its expectation and normalized variance 
\begin{equation}\label{eq:priorPD}
\mathbb E[S_{\textsc{PD}}] = \frac{\theta + \sigma}{\theta + 1}, \qquad\qquad \overline{\textsc{Var}}(S_{\textsc{PD}}) = \frac{2}{(\theta+2)(\theta+3)}, 
\end{equation}
where $\theta > -\sigma$ and $\sigma \in [0,1)$. Equation~\ref{eq:priorPD} indicates that the prior mean increases monotonically with both $\theta$ and $\sigma$, taking values in the whole unit interval. Conversely, the normalized variance depends exclusively on the strength parameter and decreases as $\theta$ increases. The latter can take any value in the unit interval, but it is constrained once the prior mean is fixed, as shown in the following proposition. 
\begin{proposition}\label{prop:varPD}
For any given value of the prior mean $\mu$, for $S_{\textsc{PD}}$, i.e. the Gini--Simpson index under the Poisson--Dirichlet process prior, we have
\[
	\overline{\textsc{Var}}(S_{\textsc{PD}}) = \frac{2(1 - \mu)^2}{(2 - \mu - \sigma)(3 - 2\mu - \sigma)}. 
\]
\end{proposition}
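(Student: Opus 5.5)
The plan is to take the two closed-form expressions in equation~\eqref{eq:priorPD} as given and eliminate $\theta$ in favour of $\mu$ and $\sigma$. Everything reduces to inverting the mean relation and substituting it into the normalized variance.

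First, I solve $\mu = (\theta+\sigma)/(\theta+1)$ for $\theta$:
\[
\theta = \frac{\mu - \sigma}{1-\mu}.
\]
This inversion is legitimate for $\mu\in(0,1)$. The constraint $\theta>-\sigma$ is equivalent to $\mu>0$, which holds automatically since $\mu$ lies in the range of the mean. Moreover, for fixed $\sigma\in[0,1)$ the map $\theta\mapsto\mu$ is strictly increasing, so this $\theta$ is the unique strength parameter compatible with the pair $(\mu,\sigma)$.

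Second, I rewrite the factors appearing in the normalized variance:
\[
\theta+2 = \frac{\mu-\sigma+2-2\mu}{1-\mu} = \frac{2-\mu-\sigma}{1-\mu},
\qquad
\theta+3 = \frac{\mu - \sigma + 3 - 3\mu}{1-\mu} = \frac{3-2\mu-\sigma}{1-\mu}.
\]
Substituting these into $\overline{\textsc{Var}}(S_{\textsc{PD}}) = 2/[(\theta+2)(\theta+3)]$ gives
\[
\overline{\textsc{Var}}(S_{\textsc{PD}}) = \frac{2(1-\mu)^2}{(2-\mu-\sigma)(3-2\mu-\sigma)},
\]
which is the claimed expression.

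There is no real obstacle; the only point needing care is the admissibility of the reparametrization. Both denominators must be positive: $2-\mu-\sigma>0$ and $3-2\mu-\sigma>0$, which hold because $\mu<1$ and $\sigma<1$. It is also worth remarking, for the interpretation that follows, that for fixed $\mu$ the resulting expression is increasing in $\sigma$, since both denominator factors decrease in $\sigma$. Hence the \textsc{DP} case $\sigma=0$ attains the smallest normalized variance among \textsc{PD} priors with the same mean.
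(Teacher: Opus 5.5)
Your proposal is correct and follows the paper's proof: invert the mean relation to get $\theta=(\mu-\sigma)/(1-\mu)$ and substitute into $\overline{\textsc{Var}}(S_{\textsc{PD}})=2/[(\theta+2)(\theta+3)]$. Your extra checks, that $\theta>-\sigma$ is equivalent to $\mu>0$ and that both denominator factors are positive, are correct additions the paper leaves implicit.
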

We can clearly state from Proposition~\ref{prop:varPD} that $\overline{\textsc{Var}}(S_{\textsc{PD}}) $ is an increasing function of $\sigma$, for any given $\mu$. Therefore, the index prior variability entailed by the \textsc{PD} prior model has a lower bound, corresponding to the \textsc{DP} case ($\sigma = 0$), while no upper constraints are present, and the maximum variability tends to one as $\sigma \to 1$. 

\subsection{Comparing different prior specifications for the Gini--Simpson index}

Regarding the index expectation with different prior choices, a key distinction lies in the infinite richness assumption of the \textsc{PD} model. Consequently, no explicit measure of richness appears in its expected index. 
Furthermore, the expectation of the \textsc{DD} model converges to that of the \textsc{DP} one as the richness $k$ diverges to infinity. This behavior is consistent with the \textsc{DD} converging in distribution to a \textsc{DP}, when the number of species $k$ diverges. From the left panel of Figure~\ref{fig:EVAR}, we can appreciate that the prior expectations of both \textsc{SD} and \textsc{SDM} models, here assuming the richness prior of Section~\ref{sec:prior_rich}, rapidly approach a plateau value of the Gini--Simpson index, while other models are more flexible over the index support. Specifically, \textsc{DD}, \textsc{DDM} and \textsc{DP} models have a similar behavior of the expectation growth as function of $\theta$, while  introducing a discount parameter for the \textsc{PD} model changes the expectation of the index for small values of $\theta$, but preserve a similar behavior as $\theta$ increases. 

\begin{figure}[h]
	\includegraphics[width = 0.99\textwidth]{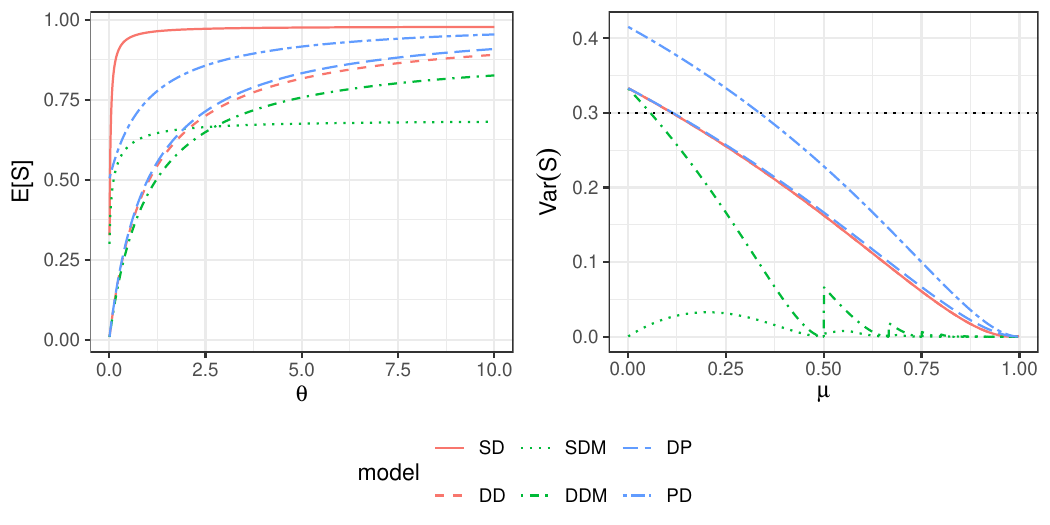}
	\caption{Left panel: prior expectation of the Gini--Simpson index as function of $\theta$, \textsc{SDM} and \textsc{DDM} with the priors of section~\ref{sec:prior_rich}, setting $(\psi_1, \psi_2) = (0.9, 0.9)$ for the SDM richness prior of equation~\eqref{eq:prior_rich_sdm} and $(\gamma_1, \gamma_2) = (10,10)$ for the DDM richness prior of equation~\eqref{eq:prior_ksdm}. Right panel: minimum of the prior normalized variance for the Gini--Simpson index distribution, as function of its prior expectation. \textsc{SD} and \textsc{DD} model are specified with $k = 50$ and the \textsc{PD} model with $\sigma = 0.25$. Red lines correspond to a finite and fixed number of species, green lines to a finite but random number of species, blue lines to an infinite number of species.}\label{fig:EVAR}
\end{figure}

For the normalized variances, all three prior models may reach maximum index variability for any given prior mean. Hence, to compare their flexibility in terms of support of the prior dispersion of the Gini--Simpson index distribution, it is sufficient to evaluate the minimum of the normalized prior variance as function of the prior mean $\mu$. The right panel of Figure~\ref{fig:EVAR} shows the curves associated to different models. We first remark that for the \textsc{DP} and \textsc{PD} models, the right panel does not show the actual normalized variance of the models, not the minimums. The normalized prior variance minimums, as function of the prior expectation $\mu$, coincide for the \textsc{SD} and \textsc{DD} models, which are close to the variance arising from the \textsc{DP} model. The variance for the \textsc{PD} model with $\sigma = 0.25$ has a similar decay to the \textsc{DP} prior, as function of the prior expectation $\mu$, but having overall higher values, especially for small values of $\mu$.  The minimum variance of the \textsc{SDM} model is substantially lower than the \textsc{DDM} model minimum when $\mu < 1/2$, otherwise the two models have comparable small minimums. We remark that small values of the index are not frequently encountered in real case studies. If the index is smaller than a given value $c \in (0,1)$, then there is a species whose frequency is bigger than $1 - c$, which rarely occurs for $c < 1/ 2$. Further, the minimum variance curves of \textsc{SDM} and \textsc{DDM} models are globally lower than the other ones, thereby providing greater flexibility in the resulting prior distributions. Figure~\ref{fig:EVAR} highlights a limitation of the \textsc{DP} and \textsc{PD} models. In the region of the index support most frequently encountered in practice, namely for large values of $\mu$, the variance drops rapidly, producing an excessively concentrated prior distribution and consequently limiting prior flexibility.

To have an idea of plausible values of the normalized variance of the Gini--Simpson index, one may consider unimodal priors with large variance, which are often used in Bayesian statistics. For example, we can consider the beta distribution as a benchmark, given its high flexibility since its first two moments can be arbitrary fixed. It is straightforward to verify that the normalized variance of any unimodal beta random variable is smaller than $1/3$, emphasized as horizontal line in the left panel of Figure~\ref{fig:EVAR}. On the contrary, values of the normalized variance close to one are rather extreme and rarely correspond to a genuine prior opinion. In fact, they imply a population which has either one species or infinitely many with similar frequencies. 

Among the models,  \textsc{SD}, \textsc{DD}, \textsc{DP} and \textsc{PD} exhibit some limitations in capturing the variability if the prior expectation is not close to one, whereas similar limitations are shared by the \textsc{DDM} model only if such expectation is close to zero. The \textsc{DP} appears to be inadequate as a general prior for the Gini--Simpson index, since we can only fix either the prior expectation or the prior variability of the index. Another interesting subclass of the \textsc{PD} model is obtained by normalizing a stable independent increment process, corresponding to let $\theta = 0$ in the \textsc{PD} prior. In this case, $\sigma$ plays the role of index mean and can be arbitrarly chosen, but the normalized prior variance of the index is constant and equals $1 / 3$, which appears rather restrictive. 

\section{Posterior inference for the Gini--Simpson index}\label{sec:postGS}

Suppose we observe an $\mathbb X$-valued sample of species $X_1, \dots, X_n$ with $k_n$ distinct observed species having frequencies $n_1, \dots, n_{k_n}$. In this section, we summarize the exact forms of posterior point estimates of the Gini--Simpson index with different priors. 
As point estimate, we consider the posterior expectation of the Gini--Simpson index. As it will be shown, these estimates have simple and interpretable form, as they are expressed as convex combination of a frequentist estimator and the prior expectations.

\subsection{Posterior estimates with a fixed number of species}

First, we consider the case with a symmetric Dirichlet prior having fixed and known richness, hence the \textsc{SD} and \textsc{DD} models. The posterior expectation of the Gini--Simpson index can be expressed as 
\begin{equation}\label{eq:postSD}
	\mathbb E[ S_{\textsc{SD}} \mid X_1, \dots, X_n] = p_{k,n}^{\textsc{SD}} \hat S_{n} + \left(1 -  p_{k,n}^{\textsc{SD}}\right)  \frac{(k - 1) \theta}{ k \theta + 1}
\end{equation}
for the \textsc{SD} model, and
\begin{equation}\label{eq:postkSD}
	\mathbb E[ S_{\textsc{DD}} \mid X_1, \dots, X_n] = p_{k,n}^{\textsc{DD}} \hat S_{n} + \left(1 -  p_{k,n}^{\textsc{DD}}\right)  \frac{\theta}{\theta + 1} \left( 1 - \frac{1}{k} \right)
\end{equation}
for the \textsc{DD} case, where 
\[
	p_{k,n}^{\textsc{SD}} = \frac{n(n-1)}{(k \theta + n + 1)(k \theta + n)}, \qquad \qquad
	p_{k,n}^{\textsc{DD}} = \frac{n(n-1)}{(\theta + n + 1)(\theta + n)},
\]
are the weights in Equations~\eqref{eq:postSD}-\eqref{eq:postkSD}, and 
\begin{equation}\label{eq:freq_est}
	\hat S_{n} = 1 - \sum_{j = 1}^{k_n} \frac{n_j (n_j - 1)}{n (n - 1)}
\end{equation}
denotes the unbiased (minimum variance) frequentist estimator of the Gini--Simpson index. We remark that in the right-hand terms of Equations~\eqref{eq:postSD}-\eqref{eq:postkSD} appear the prior expectations of the \textsc{SD} and \textsc{DD} models. Therefore, the posterior point estimates of the Gini--Simpson index are linear convex combinations of the frequentist estimator $\hat S_{n}$ and the prior guess of the index. The weight of the empirical estimator for the \textsc{SD} model depends on the sample size, the number of distinct species observed in the sample and the concentration parameter, while for the \textsc{DD} model such a weight depends solely on the sample size and the concentration parameter. Thus, in the posterior expectations the \textsc{SD} model has an increasing importance of the prior opinion as the richness increases, while for the \textsc{DD} model the prior expectation weight does not depend on the number of species. 

\subsection{Posterior estimates with a random number of species}

Regarding the mixture case, the posterior mean of the index can be obtained mixturing equations~\eqref{eq:postSD}-\eqref{eq:postkSD} with respect to the posterior of the richness $K$, that is given by  
\[
\begin{split}
	\pi_n^{\textsc{SDM}}(k) &\propto \frac{(k - k_n + 1)_{k_n}}{(k \theta)_n} \pi(k) \mathbb I_{[k \geq k_n]}, \\
	\pi_n^{\textsc{DDM}}(k) &\propto \frac{(k - k_n + 1)_{k_n}}{(\theta)_n} \prod_{j = 1}^{k_n} ({\theta} / {k})_{n_j} \pi(k)\mathbb I_{[k \geq k_n]}, 
\end{split}
\]
for the \textsc{SDM} and \textsc{DDM} models, respectively, where $\mathbb I_{[A]}$ denotes the indicator function over the set $A$ and $(a)_b = \Gamma(a + b) / \Gamma(a)$, $b \geq 0$, denotes the rising factorial, as before. A distinctive peculiarity of the \textsc{SDM} model lies in the posterior distribution of the richness, as it does not depend on the observed species frequencies $n_1, \dots, n_{k_n}$ in the sample. Indeed, the \textsc{SDM} belongs to the Gibbs--type priors \citep{Pit06}. Instead, with the \textsc{DDM} model, sample frequencies affect the posterior distribution of the richness. 

In general, for a symmetric Dirichlet prior model, the posterior expectation of the Gini--Simpson index can be expressed as
\begin{align}
		\mathbb E[S \mid X_1, \dots, X_n] &= q_n \hat S_n + (1 - q_n) \sum_{k = k_n}^{\infty} \mathbb E[S \mid K = k] \tau_n(k) \label{eq:symm_est} \\
		&= q_n \hat S_n + (1 - q_n) W_n, \label{eq:symm_est_comp}
\end{align}
where $E[S \mid K = k]$ denotes the prior expectation of either the \textsc{SD} or \textsc{DD} models, $W_n$ depends on the specific prior model, and 
\[
	 q_n = \sum_{\ell = k_n}^{\infty} p_{\ell,n} \pi_n(\ell), \qquad \tau_n(k) = \frac{(1 - p_{k,n}) \pi_n(k)}{\sum_{\ell = k_n}^\infty (1 - p_{\ell,n}) \pi_n(\ell)},
\]
for every $k$. From the previous expression, the posterior mean of the Gini--Simpson index, with a symmetric Dirichlet prior, is also a convex linear combination of the unbiased frequentist estimator $\hat S_n$ and an estimator of such an index which updates the prior guess through the pseudo--posterior $\tau_n$ of $K$. Hence, the estimator in Equation~\eqref{eq:symm_est} is a smoother alternative to the frequentist estimator $\hat S_n$, when the richness is unknown and random. Using the \textsc{SDM} model, the pseudo--posterior distribution $\tau_n(k)$ is obtained from $\pi_n(k)$ by moving mass from lower to higher values, whereas $\tau_n(k)$ and $\pi_n(k)$ coincide if and only if the \textsc{DDM} model is employed. 

The posterior mean of the \textsc{SDM} model does not admit an easily interpretable expression, since the terms in equation~\eqref{eq:symm_est_comp} becomes
\[
\begin{split}
	q_n^{\textsc{SDM}} = n (n - 1) \mathbb E_{\pi_n} \left[ \frac{1}{(\theta K + n)(\theta K + n + 1)} \right], \qquad
	W_n^{\textsc{SDM}} = 1 - \frac{\theta + 1}{n} \left( \frac{n \xi_n + 1}{1 - q_n^{\textsc{SDM}}} - 1 \right), 
\end{split}
\]
with $\xi_n = \mathbb E_{\pi_n}\big[ (\theta K + n)^{-1}\big]$ and $\mathbb E_{\pi_n}[\cdot]$ denoting the expectation with respect to the posterior distribution $\pi_n$ of $K$. The weight $q_n^{\textsc{SDM}}$ approaches its minimum of zero when $\theta$ is large or the posterior distribution $\pi_n$ of $K$ concentrate mass on large values of its support. Thus, for the latter case, the posterior mean of the index approaches $W_n^{\textsc{SDM}}$, which in turn tends to $1$, disregarding the values of the sample frequencies $n_j$s. This also relates to the posterior distribution of $K$, being independent of the sample frequencies. Hence, with the \textsc{SDM} model, high evenness is attained just because high richness is expected. In particular, when $K$ is believed large under the prior distribution, the prior bias toward high values affects also the posterior distribution of the index, no matter what is the real evenness of the observed species frequencies. The behavior is confirmed by asymptotic analysis, since $W_n^{\textsc{SDM}} \asymp 1 - 2 / (k_n + 1)$, which rapidly approaches one as $k_n$ increases, assuming $k_n / n \to 0$. On the counterpart, when $k_n / n \to c$ as $n$ diverges, with $c \in (0,1)$,  $W_n^{\textsc{SDM}}$ tends to one and $q_n^{\textsc{SDM}}$ is upper-bounded by $(\theta c + 1)^{-2}$. Therefore, even in the limit case, the posterior mean of the index in equation~\eqref{eq:symm_est_comp} is systematically larger than the frequentist unbiased estimator $\hat S_n$, disregarding what unbounded prior for $K$ and value of $\theta$ are chosen and species frequencies are observed. 

With the \textsc{DDM} model, the quantities in the index posterior expectation of Equation~\eqref{eq:symm_est_comp} take form 
\begin{equation}\label{eq:pm_ksdm}
	q_n^{\textsc{DDM}} = \frac{n(n-1)}{(\theta + n + 1)(\theta + n)}, \qquad W_n^{\textsc{DDM}} = \frac{\theta}{1 + \theta} \left( 1 - \mathbb E_{\pi_n}\left[ \frac{1}{K}\right]\right). 
\end{equation}
The posterior estimator we obtain is a convex linear combination of the frequentist unbiased estimator $\hat S_n$ and the estimator $W_n^{\textsc{DDM}}$, where the latter can be viewed as a natural updating of the prior estimate $\mathbb E[S_{\textsc{DDM}}] = \frac{\theta}{1+\theta}(1 - \mathbb E[1 / K])$, which incorporates available data information on the richness. In the limit behavior, 
\[
	W_n^{\textsc{DDM}} \asymp \frac{\theta}{1 + \theta}\left(1 - \frac{1}{k_n}\right), 
\]
which serves also as lower bound for $W_n^{\textsc{DDM}}$. This is straightforward since $\pi_n$ is supported on the integers greater or equal than $k_n$, and converges to a point mass on $k_n$ as the sample size $n$ diverges. 

\subsection{Posterior estimates with an infinite number of species}

The posterior distribution of a \textsc{PD} model is well known in literature \citep[][Corollary 20]{Pit96}. Here, it can be exploited to obtain the posterior distribution of the Gini--Simpson index under the \textsc{PD} prior model. In particular, the posterior distribution of the $S_{\textsc{PD}}$ index can be expressed as 
\[
	S_{\textsc{PD}} \mid X_1, \dots, X_n \sim 1 - \eta_n^2 \sum_{\ell \geq 1} Q_\ell^2 - (1 - \eta_n^2) \sum_{i = 1}^{k_n} \xi_{i}^2, 
\]
where $(\xi_1, \dots, \xi_{k_n}) \sim \textsc{Dir}(n_1 - \sigma, \dots, n_{k_n} - \sigma)$, $\eta_n \sim \textsc{Beta}(\theta + k_n \sigma, n - k_n \sigma)$, and $\{ Q_j\}_{j \geq 1}$ are the weights of a \textsc{PD} model with parameters $(\theta + k_n \sigma, \sigma)$, with $(\xi_1, \dots, \xi_{k_n})$, $\eta_n$ and $\{Q_j\}_{k \geq 1}$ being stochastically independent. After some manipulations, the posterior mean of the $S_{\textsc{PD}}$ index can be expressed as a linear convex combination of the prior mean $\mathbb E[S_{\textsc{PD}}]$ and the frequentist unbiased estimator $\hat S_n$, 
\begin{equation}\label{eq:pm_pd}
	\mathbb E[S_{\textsc{PD}} \mid X_1, \dots, X_n] = q_n^{\textsc{PD}} \hat S_n + \left(1 - q_n^{\textsc{PD}}\right) \mathbb E[S_{\textsc{PD}}], 
\end{equation}
where 
\[
	q_n^{\textsc{PD}} = \frac{n(n-1)}{(\theta + n + 1)(\theta + n)}. 
\]
We first remark that $q_n^{\textsc{PD}}$ of the \textsc{PD} weights the empirical information exactly as done for the \textsc{DDM} model in equation~\eqref{eq:pm_ksdm}. Notice that the weight $1 - q_n^{\textsc{PD}}$, given to the prior opinion, is increasing in $\theta$, coherently with the prior variance of th eindex being decreasing in $\theta$. 

Quite remarkably, the estimator expressed in equation~\eqref{eq:pm_pd} is similar to the one provided by model \textsc{DDM}, but now the estimator $W_n^{\textsc{DDM}}$ is replaced by the prior mean. Thus, the \textsc{DDM} posterior estimator incorporates sample information more effectively by accounting for the number of species $K$, since $W_n^{\textsc{DDM}}$ updates the index prior mean considering also the posterior distribution of $K$. On the contrary, the \textsc{PD} model assumes the richness known and equal to infinity, and its posterior estimator does not include sample information on $K$. Interestingly, from equation~\eqref{eq:pm_pd} we can appreciate that the frequentist unbiased estimator $\hat S_n$ is recovered as limit case of the Bayesian estimator with \textsc{PD} prior, for $\sigma \to 1$ and $\theta \to - 1$. In such limit case the normalized prior variance of $S_{\textsc{PD}}$ goes to one, i.e. reaches its maximum, and the limit prior of the index is concentrated on two points, namely the case with only one species or with infinitely many having similar frequencies.

\section{Posterior asymptotics}\label{sec:asym}

An interesting quantity to study is the asymptotic distribution of the Gini--Simpson index induced by different prior models for the species frequencies. Characterizing the asymptotic posterior distributions is fundamental to study the limiting behavior of the index and its uncertainty quantification, comparing Bayesian credible intervals with frequentist confidence intervals. In the following, we derive those distribution under the \textit{what if} principle of Diaconis and Feedman, hence assuming $X_1, \dots, X_n$ i.i.d. discrete random variables. Furthermore, we denote by $\mathbb P_0$ a probability measure that makes observations $X_1, \dots, X_n$ i.i.d., with their common distribution having masses $p_1, \dots, p_{k_0}$, $k_0$ denotes the number of species in the entire population, and $S_0$ being the true corresponding value of the Gini--Simpson index. 

For such an index, the most common frequentist estimators are the aforementioned  unbiased estimator with minimum variance $\hat S_n$ and the plug--in estimator, i.e., 
\[
	\tilde S_n = 1 - \sum_{j = 1}^{k_n} \frac{n_j^2}{n^2}, 
\]
where $k_n$ is again the observed number of species out of a sample $X_1, \dots, X_n$ and $n_1, \dots, n_{k_n}$ the observed species frequencies. We remark that, as a direct consequence of Slustky's Theorem, the asymptotic distributions of $\hat S_n$ and $\tilde S_n$ coincide. In particular, as $n$ diverges, 
\begin{equation}\label{eq:asy_freq}
	\lim_{n\to + \infty} \sqrt{n} (\tilde S_n - S_0) \stackrel{d}{\longrightarrow} \textsc{N}\left(0, \sigma_{n}^2\right), \qquad \text{with} \qquad \sigma_{n}^2 = 4 \left[ \sum_{j=1}^{k_0} p_j^3 - \left( \sum_{j=1}^{k_0} p_j^2 \right)^2 \right],
\end{equation}
where the true species frequencies $p_1, \dots, p_{k_0}$ are assumed not to be all equal. The result in equation~\ref{eq:asy_freq} can be obtained applying the multivariate central limit theorem and the delta method \citep[see, e.g.,][]{Lyo81}. As a consequence, for any given confidence level $(1 - \alpha) \in (0,1)$, the asymptotic confidence interval for the Gini--Simpson index can be expressed as
\begin{equation}\label{eq:interval}
	\left( \tilde S_n - \Phi^{-1} (1 - \alpha / 2) \frac{s_{n}}{\sqrt{n}},\;  \tilde S_n + \Phi^{-1} (1 - \alpha / 2) \frac{s_{n}}{\sqrt{n}}\right), 
\end{equation}
where $\Phi^{-1}$ denotes the quantile function of the standardized normal distribution and $s_{n}^2$ is a suitable estimator of the variance term in equation~\eqref{eq:asy_freq}. For instance, an estimator that is consistent by the strong law of large numbers is given by 
\begin{equation}\label{eq:est_var}
	s^2_{n} = 4 \left[ \sum_{j=1}^{k_0} \frac{n_j^3}{n^3} - \left( \sum_{j=1}^{k_0} \frac{n_j^2}{n^2}\right)^2 \right].
\end{equation}

Once we assume a prior distribution on the species frequencies, and we observe a sample, our information is updated and the way the posterior distribution behaves and concentrates is fundamental for statistical analysis. In the following, we consider all the models presented in the manuscript, but for the case with known richness we assume that it is fixed to a value equal or larger than the population one, i.e. $k \geq k_0$ for \textsc{SD} and \textsc{DD} models.  A crucial distinction to derive their posterior asymptotics regards the population form which the sample comes from, in particular if all species are equally likely or not. We separate those cases in the following proposition. 

\begin{proposition}\label{theo:asym_post}
	Let $S_n$ be a random variable whose distribution corresponds to the posterior of the Gini--Simpson index under either \textsc{SDM}, \textsc{DDM}, \textsc{DP} and \textsc{PD} models or \textsc{SD} and \textsc{DD} models with $k \geq k_0$.
	Then, the following distributional results hold true.
	\begin{itemize}
	\item[(i)] If $p_1,\dots, p_{k_0}$ are not all equal to $k_0^{-1}$ then, $\mathbb P_0$ almost surely,
	\[
		\lim_{n \to +\infty}\sqrt{n}(S_n - \tilde S_n)  \stackrel{d}{\longrightarrow} \textsc{N}\left(0, \lambda_{n}^2\right), \qquad \text{with} \qquad \lambda_n^{2} = 4 \left[ \sum_{j=1}^{k_0} p_j^3 - \left(\sum_{j=1}^{k_0} p_j^2\right)^2 \right].
	\]
	\item[(ii)] If $p_1,\dots, p_{k_0}$ are all equal to $k_0^{-1}$ then,  $\mathbb P_0$ almost surely,
	\[
		\lim_{n \to +\infty} n k_0 (S_n - \tilde S_n) \stackrel{d}{\longrightarrow} \chi^2_{k_0 - 1}, 
	\]
	a Chi-squared distribution with $k_0 - 1$ degrees of freedom. 
	\end{itemize}
\end{proposition}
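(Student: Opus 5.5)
The plan is to reduce every prior to a common form: a Dirichlet posterior on the $k_0$ observed species plus a leftover mass of order $O_p(1/n)$. Then we apply a Bernstein--von Mises type expansion to the smooth map $f(x)=1-\sum_j x_j^2$.

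\textbf{Step 1 (reduction to the observed species).} Since $k_0<\infty$, $\mathbb P_0$-almost surely $k_n=k_0$ for all large $n$, and $n_j/n\to p_j$ by the strong law of large numbers.
\begin{itemize}
\item For \textsc{SD}/\textsc{DD} with $k\ge k_0$, the posterior of $(P_1,\dots,P_k)$ is $\textsc{Dir}(n_1+\alpha_k,\dots,n_{k_0}+\alpha_k,\alpha_k,\dots,\alpha_k)$. Write it as $(1-R_n)\,(\xi_1,\dots,\xi_{k_0})$ together with the unobserved block. Here $(\xi_j)\sim\textsc{Dir}(n_j+\alpha_k)$ and $R_n\sim\textsc{Beta}((k-k_0)\alpha_k,\,n+k_0\alpha_k)$, so $R_n=O_p(1/n)$.
\item For \textsc{DP}/\textsc{PD}, use the posterior representation displayed before equation~\eqref{eq:pm_pd}. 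It has $(\xi_j)\sim\textsc{Dir}(n_j-\sigma)$ and $\eta_n\sim\textsc{Beta}(\theta+k_n\sigma,n-k_n\sigma)$, hence $\eta_n=O_p(1/n)$.
\end{itemize}
In both cases $S_n=1-\sum_{j\le k_0}\xi_j^2+O_p(1/n)$. The $O_p(1/n)$ remainder is negligible after multiplication by $\sqrt n$, so it only matters in case (ii).

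\textbf{Step 2 (mixtures).} For \textsc{DDM} and \textsc{SDM} it suffices to show that $\pi_n$ concentrates on $\{k_0\}$, or at least that $\pi_n(K>M_n)\to0$ with the conditional remainder uniformly $O_p(1/n)$ on $K\le M_n$. The result then follows by mixing the conditional statements.
\begin{itemize}
\item For \textsc{SDM}, the ratio $\pi_n(k)/\pi_n(k_0)\asymp n^{-\theta(k-k_0)}\pi(k)/\pi(k_0)$ up to polynomial factors in $k$ coming from $(kθ)_n$. Dominated convergence over $k$ gives $\pi_n(\{k_0\})\to1$. Tail uniformity in $k$ of the Gamma-ratio bound is the technical point here.
\item For \textsc{DDM}, the factor $\prod_j(\theta/k)_{n_j}$ behaves like $(\theta/k)^{k_0}\prod_j\Gamma(n_j)$. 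The $k$-dependence is then only through $(k-k_0+1)_{k_0}(\theta/k)^{k_0}\pi(k)$, which is bounded in $k$ and does not vanish with $n$.

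So for \textsc{DDM} the posterior of $K$ need not degenerate. Instead I would use that, conditional on each $K=k$, the unobserved mass is $\textsc{Beta}(\theta(k-k_0)/k,\,n+\theta k_0/k)$, which is $O_p(1/n)$ uniformly in $k$ because its first parameter is bounded by $\theta$.
\end{itemize}

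\textbf{Step 3 (Gaussian approximation and delta method).} Represent $\xi_j=G_j/\sum_i G_i$ with independent $G_j\sim\textsc{Gamma}(n_j+c)$. Conditionally on the data, $\sqrt n(\xi-\hat p)\to \textsc{N}(0,\operatorname{diag}(p)-pp^\top)$ almost surely, where $\hat p_j=n_j/n$; this follows from the CLT for Gamma variables and $n_j/n\to p_j$.

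Expanding $f(\xi)-f(\hat p)=-2\sum_j\hat p_j(\xi_j-\hat p_j)-\sum_j(\xi_j-\hat p_j)^2$, the linear term dominates. Its asymptotic variance is $4\,p^\top(\operatorname{diag}(p)-pp^\top)p=4[\sum p_j^3-(\sum p_j^2)^2]$, which is positive exactly when the $p_j$ are not all equal. This gives (i). The shift between $\tilde S_n$ and $f(\hat p)$ (which coincide) and the $O_p(1/n)$ remainders are negligible.

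\textbf{Step 4 (the degenerate case, main obstacle).} When $p_j=1/k_0$, the gradient $-2p$ is orthogonal to the tangent space $\{\sum_j x_j=0\}$, so the first-order term reduces to $-2\sum_j(\hat p_j-1/k_0)(\xi_j-\hat p_j)$.

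This cross term is of order $n^{-1/2}\cdot n^{-1/2}$, i.e.\ the same order as the quadratic term, and it is driven by the non-convergent fluctuation $\sqrt n(\hat p-1/k_0)$. This is where I expect the real difficulty.

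I would first compute the quadratic form: $n\sum_j(\xi_j-\hat p_j)^2\to k_0^{-1}\chi^2_{k_0-1}$, since the limiting covariance $k_0^{-1}(I-k_0^{-1}\mathbf 1\mathbf 1^\top)$ is $k_0^{-1}$ times a projection of rank $k_0-1$. Multiplying by $nk_0$ then gives the $\chi^2_{k_0-1}$ law, up to sign and the cross term.

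I would then try to absorb the cross term by re-centring appropriately, for instance at $f(\hat p)$ corrected by its gradient term, or by working with $nk_0\big[(1-1/k_0)-S_n\big]$. Finally I would check against the exact Dirichlet moments whether the cross term vanishes in the sense intended in the statement. If it does not, the centring or sign in (ii) has to be adjusted accordingly.
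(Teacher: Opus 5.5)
For part (i) your route is essentially the paper's. You use the Gamma representation of the conditional Dirichlet posterior, a first-order delta method around $\hat p=(n_j/n)_j$, mixing over $K$, and Pitman's posterior representation with $\eta_n=O_p(1/n)$ for \textsc{DP}/\textsc{PD}. The one difference is that you split off the unobserved mass and bound it uniformly in $k$. The paper instead runs the delta method on all $k$ Gamma coordinates and relies on $\pi_n(\{k_0\})\to1$ plus Pratt's lemma for both mixtures.

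Your Step~2 claim that the \textsc{DDM} posterior of $K$ need not degenerate is wrong. You replaced $\Gamma(n_j+\theta/k)$ by $\Gamma(n_j)$, but $\Gamma(n_j+\theta/k)\sim\Gamma(n_j)\,n_j^{\theta/k}$, and the dropped factor is precisely the $k$-dependent part. Keeping it gives $\pi_n(k)/\pi_n(k_0)\asymp n^{-\theta(1-k_0/k)}\pi(k)/\pi(k_0)\to0$, which is what the paper verifies by dominated convergence. For (i) the slip is harmless, because your bound on the unobserved mass (conditional mean at most $\theta/(n+\theta)$ for every $k$) does not use concentration.

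For part (ii) you leave the decisive step open, and it cannot be closed, because the statement is false as written. Put $u_n=\sqrt n(\hat p-k_0^{-1}\mathbf 1)$ and $v_n=\sqrt n(\xi-\hat p)$. Whenever the posterior sits on the $k_0$ observed species, one has exactly $nk_0(\tilde S_n-S_n)=k_0\|v_n\|^2+2k_0\langle u_n,v_n\rangle$.

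Given the data, the first term tends in law to $\chi^2_{k_0-1}$, but the second is approximately $\textsc{N}(0,4k_0\|u_n\|^2)$. By the law of the iterated logarithm, $\|u_n\|$ is $\mathbb P_0$-a.s.\ unbounded when $k_0\ge2$, so the posterior law is not even a.s.\ tight. No data-dependent shift can cure a spread problem. Subtracting the gradient term is not a re-centring, since that term depends on the posterior draw. Centring at $S_0=1-1/k_0$ gives a noncentral $\chi^2_{k_0-1}$ with noncentrality $k_0\|u_n\|^2$, which does not settle down even in $\mathbb P_0$-probability.

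Two further problems remain. The sign is reversed: the quadratic part has the sign of $\tilde S_n-S_n$, which is where the paper's proof actually ends. And for \textsc{SD}/\textsc{DD} with $k>k_0$ and for \textsc{DP}/\textsc{PD}, the remainder you set aside contributes $-2nR_n$ (resp.\ $-2n\eta_n$) at this scale. Its limits $\chi^2_{2(k-k_0)\alpha_k}$ and $\chi^2_{2(\theta+k_0\sigma)}$ are nondegenerate.

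The paper's proof has the same hole. Lemma~\ref{l: 2} discards $r_n^2\nabla\phi(\bm Y_n)\cdot(\bm Z_n-\bm Y_n)$ because $\nabla\phi(\bm Y_n)\to0$, but what is needed is $r_n\nabla\phi(\bm Y_n)\to0$. Take $\phi(y)=y^2$, $y_0=0$, $r_n=\sqrt n$, $Y_n=n^{-1/2}$ and $Z_n=Y_n+Z/\sqrt n$: the lemma predicts $Z^2$, but the truth is $2Z+Z^2$. In the application, $r_n\nabla g(\bm Y_n)\approx2u_n$ on the observed coordinates, which is your cross term. The partial derivatives of $g$ at $\bm y_0$ in the unobserved coordinates equal $-2/k_0\neq0$, which is the leftover-mass term.

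So your diagnosis is correct. Statement (ii) has to be reformulated, for example as a merging statement with a data-dependent limit, rather than proved as stated.
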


For all models studied in the manuscript, $n^\nu (S_n - \tilde S)$ converges in distribution, whereas both $\nu$ and the limit distribution depend on a certain condition on the population species frequencies, specifically if the species are equiprobable. By Slustky's Theorem, the previous distributional convergences imply consistency of the posterior distribution, i.e. $S_n$ converges to the true value of the index, and therefore also the posterior expectation of the Gini--Simpson index is a consistent estimator. Interestingly, the previous proposition shows that, in an asymptotic regime, the models have the same behavior, quite remarkably since the \textsc{DP} and \textsc{PD} models are misspecified, as they entail an infinite richness. Regarding the Gini--Simpson index uncertainty quantification, the following corollary provides a strategy to construct asymptotic credible intervals. 

\begin{corollary}\label{cor:norm}
Let $s^2_n$ be a strongly consistent estimator of the asymptotic variance, e.g. the estimator in equation~\eqref{eq:est_var}. Under the same assumptions of (i) in Proposition~\ref{theo:asym_post}, 
\[
\lim_{n \to +\infty} \sqrt{\frac{n}{s^2_n}} (S_n - \tilde S_n)  \stackrel{d}{\longrightarrow} \textsc{N}\left(0, 1\right),
\]
and therefore for every $\alpha \in (0,1)$, the posterior probability that the Gini--Simpson index belongs to the interval of equation~\eqref{eq:interval} converges to $1 - \alpha$, $\mathbb P_0$ almost surely. 
\end{corollary}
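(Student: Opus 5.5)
The corollary follows from Proposition~\ref{theo:asym_post}(i) by a conditional version of Slutsky's theorem; only the last step needs care. Fix an outcome $\omega$ in the $\mathbb P_0$-probability-one set on which two events hold simultaneously: the weak convergence $\sqrt{n}(S_n-\tilde S_n)\stackrel{d}{\to}\textsc{N}(0,\lambda^2)$ of Proposition~\ref{theo:asym_post}(i), and $s_n^2\to\lambda^2$. The intersection of two $\mathbb P_0$-almost sure events is almost sure, so this set has probability one. Here $\lambda^2=4[\sum_j p_j^3-(\sum_j p_j^2)^2]$, and it is strictly positive because the $p_j$ are not all equal; it is the variance of $2p_{X}$ under $\mathbb P_0$, which vanishes only for a uniform distribution.

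For the estimator \eqref{eq:est_var}, strong consistency follows from the strong law of large numbers, since $n_j/n\to p_j$ almost surely for every $j\le k_0$. The map $(x_j)\mapsto 4[\sum x_j^3-(\sum x_j^2)^2]$ is continuous. Moreover, under $\mathbb P_0$ no species outside the $k_0$ population species is ever observed, so the sums in \eqref{eq:est_var} are finite sums over at most $k_0$ terms.

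On this fixed $\omega$, the randomness in $S_n$ is purely posterior, and $s_n$ and $\tilde S_n$ are deterministic constants. Write
\[
\sqrt{n/s_n^2}\,(S_n-\tilde S_n)=\frac{\lambda}{s_n}\cdot\frac{\sqrt{n}(S_n-\tilde S_n)}{\lambda}.
\]
The second factor converges in distribution to $\textsc{N}(0,1)$. The first factor is a deterministic sequence tending to $1$, because $s_n\to\lambda>0$ and so $s_n>0$ for all large $n$. Slutsky's theorem, or simply the elementary fact that rescaling by constants tending to $1$ preserves weak limits, gives the stated normal convergence $\mathbb P_0$-almost surely.

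For the credible-interval statement, let $z=\Phi^{-1}(1-\alpha/2)$. The posterior probability that the index lies in the interval \eqref{eq:interval} is exactly
\[
\Pr\!\left(\left|\sqrt{n/s_n^2}\,(S_n-\tilde S_n)\right|\le z\right).
\]
The set $[-z,z]$ has boundary $\{-z,z\}$, which has zero mass under the continuous $\textsc{N}(0,1)$ law. By the portmanteau theorem this probability therefore converges to $\Phi(z)-\Phi(-z)=1-\alpha$ for every $\omega$ in the probability-one set. Open or closed endpoints make no difference, again by continuity of the limit.

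The only real obstacle is making precise that the almost-sure quantifiers combine correctly. Proposition~\ref{theo:asym_post} must be read as giving, for $\mathbb P_0$-almost every data sequence, weak convergence of the posterior law. The argument above then works pointwise in $\omega$, with $s_n$ treated as data-measurable and hence constant under the posterior.
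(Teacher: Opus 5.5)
Your proposal is correct and follows the paper's own argument: you use the same factorization $\sqrt{n/s_n^2}\,(S_n-\tilde S_n)=(\lambda/s_n)\cdot\sqrt{n}(S_n-\tilde S_n)/\lambda$, with Proposition~\ref{theo:asym_post}(i) for the second factor, strong consistency of $s_n^2$ for the first, and Slutsky's theorem to combine them. Your pointwise-in-$\omega$ treatment, in which $s_n$ is a constant under the posterior, and your portmanteau step for the interval coverage make explicit what the paper's short proof leaves implicit.
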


Assuming one of the prior models of Proposition~\ref{theo:asym_post}, for a given level $\alpha \in (0,1)$, from Corollary~\ref{cor:norm} the asymptotic credible interval for the Gini--Simpson index coincides with the asymptotic confidence interval. Hence, the uncertainty quantification under a Bayesian setting is equivalent to the frequentist one, as the sample size diverges. 

\section{Analysis of the North America ranidae dataset}\label{sec:data}

To illustrate the use and differences of the previous prior specification of the Gini--Simpson index, we analyze a set of species detection data from the Global Biodiversity Information Facility project \citep{Gbi21}. Such a project is an extensive database consisting in record of species found across the world, where for each individual is reported the taxonomy, location and possibly other relevant information. The sample we consider is a set of data of the Ranidae family observed in North
America, for which the species is identified by the scientific name. In particular, the raw data have been processed, obtaining a total number of $n = 71\,856$ observed individual divided in $k_n = 109$ distinct species. 

\begin{figure}[h]
	\includegraphics[width = 0.95\textwidth]{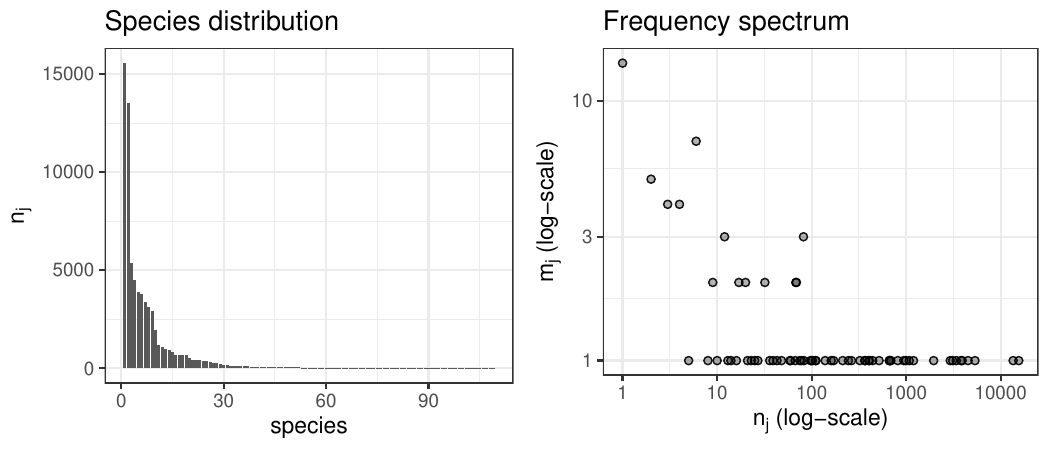}
	\caption{Left panel: distribution of species frequencies, in a decreasing order. Right panel: frequency spectrum.}\label{fig:data}
\end{figure}

Figure~\ref{fig:data} shows the distribution of the species frequencies, in a decreasing order, along with the frequency spectrum. The latter shows on the x-axis the species frequencies $n_j$ and on the y-axis the number of observed species $m_j$ whose frequencies are equal to $n_j$. From the left panel, we can see that the observed species frequencies are far apart from equiprobability of the species, with few of them having large frequencies and an heavy right-tail in their distribution. The right panel shows that most of the species frequencies are observed only once, with few of them observed multiple times, but mainly for small frequencies. Hence, the observed distribution is in support of the assumptions in point (i) of Proposition~\ref{theo:asym_post}.

\subsection{Prior choices for the richness}\label{sec:prior_rich}

A crucial choice regards prior assumptions on the species richness for models \textsc{SDM} and \textsc{DDM}. About the prior specification for the \textsc{SDM} model, apparently it is not possible to find a distribution that produces general explicit expression for the prior Gini--Simpson index moments, since the concentration parameter $\theta$ and the random richness $K$ have a dependence that prohibits any expression simplification.  In the following, we consider as richness prior distribution with the \textsc{SDM} model
\begin{equation}\label{eq:prior_rich_sdm}
	\pi_{\textsc{SDM}}(k) = \frac{\Gamma(2 - \psi_1)\Gamma(2 - \psi_2)}{\Gamma(\psi_1)\Gamma(\psi_2)\Gamma(2 - \psi_1 - \psi_2)} \frac{\Gamma(k + \psi_1 - 1) \Gamma(k + \psi_2 - 1)}{k! (k - 1)!}, \qquad k \geq 1, 
\end{equation}
where $\psi_1, \, \psi_2 > 0$ and $\psi_1 + \psi_2 < 2$. From the previous distribution, $K-1$ describes a beta-negative-binomial distribution with a constraint on the parameter values. A similar but more general class of priors for the number of species was introduced by \citet{Gne10}. For the particular case with $\theta = 1$, it is possible to derive explicitly the first two moments of the Gini--Simpson index assuming the \textsc{SDM} prior models with~\eqref{eq:prior_rich_sdm}. Further details on  $\pi_{SDM}(k)$ and its moments are provided in the appendix.

Unlike model \textsc{SDM}, under \textsc{DDM} prior assumption the influence of the richness $K$ and the parameter $\theta$ on the first two moments of the Gini--Simpson index can be separated. Quite conveniently, from Equations~\eqref{eq:exp_prior_k}-\eqref{eq:var_prior_k}, these moments depend on $K$ only though the first two moments of $1 / K$. For most of the discrete prior distributions, $\mathbb E\large[1 / K\large]$ and $\mathbb E\large[1 / K^2\large]$ do not have a closed form. In the following we consider as prior distribution for the richness a suitable transformation and reparametrization of a beta-geometric probability mass function, with 
\begin{equation}\label{eq:prior_ksdm}
	\pi_{\textsc{DDM}}(k) = \frac{(\gamma_1 + 2) (\gamma_1 + 1)}{\left[ \gamma_1 ( 1 + \gamma_2) + 1 \right] (1 + \gamma_2)} \frac{(\gamma_1\gamma_2 / 2)^{(k -1)} k^2}{\left[2 + \gamma_1(1 + \gamma_2)\right]_{(k)}}, \qquad k \geq 1, 
\end{equation}
with $\gamma_1,\,\gamma_2 > 0$. For this specific prior assumption, we can derive explicit expression of some quantities of interest, as
\[
	\mathbb E\left[ 1 / K \right] = \frac{1}{1 + \gamma_2}, \qquad \overline{\textsc{Var}}\left( 1 / K \right) = 2\frac{(1 + \gamma_1)}{\gamma_1 ( 2 + \gamma_2) + 2}.
\]
Thus, under the prior in equation~\eqref{eq:prior_ksdm}, the expectation and the normalized variance of the Gini--Simpson index became
\[
	\begin{split}
		\mathbb E\left[ S_{\textsc{DDM}} \right] &= \frac{\theta}{1 + \theta} \frac{\gamma_2}{1 + \gamma_2}, \\
		\overline{\textsc{Var}}\left( S_{\textsc{DDM}}\right) &= 1 - \left[ 1 - \frac{2}{(\theta + 2)(\theta + 3)} \right] \left(\frac{1 + \gamma_2}{\theta + \gamma_2 + 1} \right) \left(1 + \frac{\theta}{2 + \gamma_2 + 2 / \gamma_1}\right). 
	\end{split}
\]
The parameter $\theta$ is known to control the index evenness, whereas $\gamma_2$ yields the prior mean of $1 / K$ and therefore controls the centering of the richness prior distribution. Once $\gamma_2$ is fixed, $\gamma_1$ can be chosen specifically to control the dispersion of the Gini--Simpson index, hence providing a flexible model choice, since the variance under the \textsc{DDM} prior can be arbitrary large and its minimum can be shown to be smaller than the one obtained with the \textsc{PD} prior. Other possible prior choices are discussed in the appendix. 

\subsection{Posterior estimates and asymptotics credible intervals}

Starting from the $n = 71\,856$  observations, we consider increasing sample sizes without replacement, to study the behavior of the Gini--Simpson index estimates as far the empirical information grows. More specifically, the samples we consider are of the size $\tilde n_\ell = \big\lfloor n (3 / 10)^{8 - \ell} \big\rfloor$, with $\lfloor \cdot \rfloor$ denoting the floor function, for $\ell = 1, \dots 8$, hence obtaining $n_\ell \in \{31, 93, 281, 852, 2582, 7825, 23712, 71856 \}$. A priori, we consider two classes of models, namely prior distributions with a random number of species and the ones with an infinite number. In particular, for all the model we consider two distinct values for the concentration parameters, with $\theta = 0.1$ or $\theta = 10$. For the \textsc{SDM} and the \textsc{DDM} models, the prior distributions for the richness are the the same introduced in Section~\ref{sec:prior_rich}, with $\psi_1 =  \psi_2 = 0.999$, $\gamma_1 = 108$ and $\gamma_2 = 1$. Both specifications are chosen to set the corresponding expectation close to the empirical value of $1/ k_n$ while keeping the normalized variance large enough. For the \textsc{PD} model, we set the strength parameter $\sigma = 0.25$, as we want to move apart from the \textsc{DP} case, while not falling into degeneracy issue as $\sigma$ approaches $1$. We check through sensitivity analysis for small perturbations around $\sigma = 0.25$ and results were consistent with the one presented hereafter. 

\begin{figure}[h]
	\includegraphics[width = 0.95\textwidth]{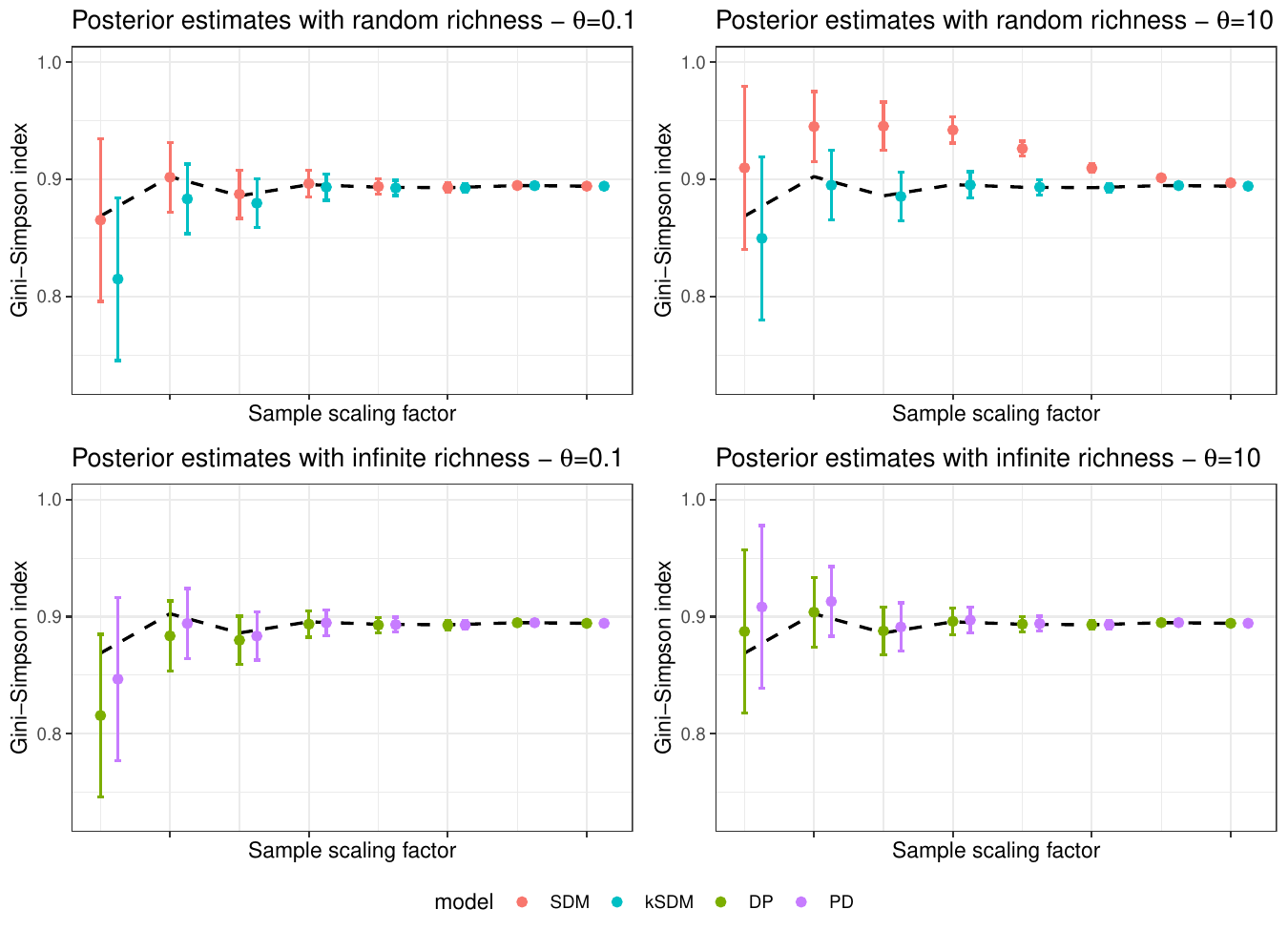}
	\caption{Top line: models with a random number of species. Bottom line: models with an infinite number of species. Left column: small concentration parameter, $\theta = 0.1$. Right column: large concentration parameter, $\theta = 10$. The dashed black line denotes the frequentist estimator $\hat S_n$ of equation~\eqref{eq:freq_est}. }\label{fig:joint}
\end{figure}

Figure~\ref{fig:joint} shows the posterior summaries of the Gini--Simpson index, with different prior settings. Specifically, different colors in the plot corresponds to different model choices, as highlighted in the legend. The top row corresponds to model choices with random richness, while the bottom shows posterior summaries under prior assumptions with an infinite number of species. Points in the plot shows the posterior point estimates, with different models, as far as the sample size increases, while vertical bars denote asymptotic credible intervals. All the models share communalities, as in general large values of the concentration parameter $\theta$ lead to higher prior expectation of the Gini--Simpson index. Further, all the models tends to a common value of the posterior point estimate as far the sample size increases, 
since they all share the same asymptotic behavior.  
Regarding the \textsc{SDM} and \textsc{DDM} models, we remark how the first model can lead to biased estimates, especially for large values of the concentration parameter. More specifically, in the top-right panel, the posterior point estimate of the index start from reasonable values with a small sample size, explodes to large values for moderate sample sizes, and then returns on part of the support close to the frequentist estimator. For the \textsc{DP} and \textsc{PD} model, the discount parameter increases the expectation of the Gini--Simpson index, especially for small values of $\theta$. However, its effect became negligible as far the sample size increases. 

\section{Discussion}

We presented a detailed analysis of the most common prior models used for species diversity estimates, focusing on the Gini--Simpson index, while introducing a novel specification which shows appealing tractability and simple expressions for its prior and posterior summaries. We also provided a detailed comparison of the relative dispersion of the index with different prior specification, showing the benefits of prior specification with random number of species, as the minimum of the variance is smaller, especially for small values of the index expectation. Further, we provided a detailed characterization of the Gini--Simpson index asymptotic distribution, showing that in the limit behavior choosing a prior among the ones presented in the manuscript is irrelevant, since the index always converges to the same distribution.

Observational studies and empirical datasets are frequently affected by the presence of outliers, which may substantially influence posterior inference on species richness and, consequently, on the Gini–Simpson index. The present framework could be further developed by incorporating improper species sampling models or contaminated Bayesian nonparametric priors, in line with the approach proposed by \citet{Cam24}. Moreover, species are often recorded at specific geographical locations, giving rise to point-referenced data. A natural extension of our investigation on the Gini--Simpson index presented here would involve integrating our results with spatial Bayesian nonparametric priors that allow location-dependent frequencies, as early done by \citet{Rei07} and \citet{Dun08}. Finally, given the favorable properties exhibited by prior distributions with random richness, one can extend the previous results on the Gini--Simpson index within the finite-dimensional setting beyond the symmetric Dirichlet distribution, such as the Pitman–Yor multinomial process \citep{Lij20} or, more in general, finite-dimensional normalized random measures with independent increments \citep{Lij24}. 

%
%

\bibliography{bibliography}

\newpage
\appendix

\section{Preliminary results}
We report in this first section some results regarding different prior specifications for the frequencies distribution, which help us to introduce some notation and are preparatory for the following sections.  

\subsection{Mean and variance of the Gini--Simpson index with a generic Dirichlet prior} 

We preserve the same notation of the article. Let $S_{\textsc{SD}}$ and  $S_{\textsc{DD}}$ denote the Gini-Simpson index equipped with the prior generated by the symmetric Dirichlet and the dynamic Dirichlet, respectively. More generally, let $S_{\textsc{D}}$ denote the the Gini-Simpson index equipped with a generic symmetric Dirichlet prior with parameter $a_k$, possibly dependent of the species richness. Hence, we have
\begin{align} 
	\mathbb E[S_{\textsc{D}}] &= 1-\frac{a_k+1}{k a_k+1}=\frac{(k-1)a_k}{k a_k+1}, \label{eq: mean12} \\
	\var(S_{\textsc{D}})& \frac{2a_k(a_k+1)(k-1)}{(ka_k+3)(ka_k+2)(ka_k+1)^2}, \label{eq:var_sd}\\
	\rvar(S_{\textsc{D}})&=\frac{\var(S_{\textsc{D}} )}{\mean(S_{\textsc{D}} )[ 1- \mean(S_{\textsc{D}} )]} = \frac{2}{(k a_k+2)(k a_k+3)}. \label{eq: nvar_mod1}
\end{align}

\begin{proof}[Proof of the identities \eqref{eq: mean12}--\eqref{eq: nvar_mod1}]

Let $k$ be fixed and let $(P_{1,k},\dotsc,P_{k,k})\sim\textsc{Dir}(a_k, \dots,a_k)$. Hence, $P_{1,k} \sim \textsc{Beta}(a_k,(k-1)a_k)$ and $(P_{1,k},P_{2,k}, 1 - P_{1,k} - P_{2,k})\sim \textsc{Dir}(a_k,a_k,(k-2)a_k)$. It is straightforward to compute the moments
\[
 	\begin{split} 
	\mean\left[P_{1,k}^m \right] &= \frac{\Ga(k a_k)}{\Ga(a_k)\Ga((k-1)a_k)}\int_{(0,1)} x^{a_k-1}(1-x)^{a_k(k-1)-1}x^m \diff x\\ 
	&=\frac{(a_k+m-1)\cdots a_k}{(ka_k+m-1)\cdots ka_k}  \\
	\mean\left[P_{1,k}^mP_{2,k}^n\right]&=\frac{\Ga(ka_k)}{\Ga(a_k)^2\Ga((k-2)a_k)}\int_{(0,1)} x^{a_k-1}y^{a_k-1}(1-x-y)^{a_k(k-2)-1}x^m \diff x\\
	&= \frac{(a_k+m-1) \cdots a_k\, (a_k+n-1)\cdots a_k}{(ka_k+m+n-1)\cdots ka_k}, 
	\end{split}
\]
for every pair $(m,n)$ of  positive integers, and then to compute:
\begin{align} \label{eq: mom} 
	\mean\left[\sum_{j=1}^k P_{j,k}^2\right] &= k \mean\left[P_{1,k}^2\right]=k\frac{(a_k+1)a_k}{(ka_k+1)ka_k}=\frac{a_k+1}{ka_k+1}\\
	\begin{split} \label{eq: mom2} 
	\mean\left[\sum_{j=1}^k P_{j,k}^2\right]^2&=k\mean\left[P_{1,k}^4\right]+k(k-1)\mean\left[P_{1,k}^2P_{2,k}^2\right]\\ 
	&=\frac{(a_k+3)(a_k+2)(a_k+1)}{(ka_k+3)(ka_k+2)(ka_k+1)}+
	\frac{(a_k+1)^2a_k(k-1)}{(ka_k+3)(ka_k+2)(ka_k+1)}
	\end{split}
\end{align}
Subtracting the square of \eqref{eq: mom} from \eqref{eq: mom2}, one obtains:
\begin{equation}
	\begin{split}\label{eq: var_k}
	\var\left(1-\sum_{j=1}^k P_{j,k}^2\right)&=\frac{(a_k+3)(a_k+2)(a_k+1)}{(ka_k+3)(ka_k+2)(ka_k+1)}+
	\frac{(a_k+1)^2a_k(k-1)}{(ka_k+3)(ka_k+2)(ka_k+1)}-\frac{(a_k+1)^2}{(ka_k+1)^2}\\
	&=\frac{2a_k(a_k+1)(k-1)}{(ka_k+3)(ka_k+2)(ka_k+1)^2}.
	\end{split} 
\end{equation}
The identity \eqref{eq: mom} yields the prior conditional mean \eqref{eq: mean12}, whereas Equations \eqref{eq: mean12}--\eqref{eq: nvar_mod1} are obtained combining \eqref{eq: var_k} with \eqref{eq: mean12}. 

\end{proof}

\subsection{Computation of the prior variance of models one and two}

Let us recall some formulas from the paper about models SD, DD, SDM and DDM. Regarding the expectations, we can specialize Equation~\eqref{eq: mean12} to the previous models, obtaining 
\begin{align} 
\mean[S_{\textsc{SD}}] &= 1-\frac{\theta+1}{k\theta+1}=\frac{(k-1)\theta}{k\theta+1}, \label{eq: mean} \\
\mean[S_{\textsc{DD}}]&=\frac{\theta}{1+\theta}(1-1/K), \label{eq: condmean} \\
\mean[S_{\textsc{SDM}}]&= 1-\mean\left(\smallfrac{\theta+1}{K \theta +1}\right) =\mean\left[\smallfrac{(K-1)\theta}{K \theta+1}\right], \label{eq: index_mean1} \\
\mean[S_{\textsc{DDM}}]&= \frac{\atwo}{1+\atwo}[1-\mean(1/K)].\label{eq: index_mean2}
\end{align}
By Equations\eqref{eq:var_sd}-\eqref{eq: nvar_mod1}, the variance and the normalized variance for the SD model equal
\begin{equation}
\begin{split}
\var(S_{\textsc{SD}})&=\frac{2\theta(\theta+1)(k-1)}{(k \theta+3)(k \theta+2)(k \theta+1)^2}, \label{eq: var} \\
\rvar(S_{\textsc{SD}})&=\frac{2}{(k \theta+2)(k \theta+3)}. 
\end{split}
 \end{equation}
Considering that $\var(S_{\textsc{SDM}})= \mean[\var(S_{\textsc{SDM}}\mid K = k)]+\var(\mean[S_{\textsc{SDM}}\mid K = k])$ 
together with \eqref{eq: mean12} and \eqref{eq: var}, we obtain the prior variance of the index for the SD model, that is
\begin{equation}
\var(S_{\textsc{SDM}})= \mean\left[\frac{2 \theta (\theta+1)(k-1)}{(k\theta+3)(k\theta+2)(k\theta+1)^2}\right]
+ \var\left(\frac{(k-1)\theta}{k\theta+1}\right),
\label{eq: var_a_fisso}
\end{equation}
and the normalized variance for the \textsc{SDM} is not available in a simple form. Let now consider the DD model. We denote by $E_k = \mean[S_{\textsc{DD}}]$. 
Hence, by \eqref{eq: mean12}, 
\begin{equation}
E_k=\frac{\theta}{1+\theta}\left(1-\frac{1}{K}\right), \qquad 1-E_k=\frac{\theta/k+1}{1+\theta}.
\label{eq: condmean2}
\end{equation}
Combining the Equations in \eqref{eq: condmean2} with \eqref{eq:var_sd}--\eqref{eq: nvar_mod1},  the variance and the normalized variance of the Gini--Simpson for the DD model equal
\begin{align}
\var(S_{\textsc{DD}})&=\frac{2\theta(1+\theta/K)(1-1/K)}{(\theta+3)(\theta+2)(\theta+1)^2}=\frac{2E_k(1-E_k)}{(\theta+2)(\theta+3)}. \label{eq: var_mod2}\\
\rvar(S_{\textsc{DD}})&=\frac{2}{(\theta+2)(\theta+3)}. \label{eq: condnvar} 
\end{align}
Moreover, one obtains for the DDM case
\begin{equation*}
\var(S_{\textsc{DDM}}) = \mean[\var(S_{\textsc{DDM}}\mid K = k)]+\left(\smallfrac{\atwo}{\atwo+1}\right)^2\var\left(\frac{1}{K}\right),
\end{equation*} 
and then apply the first identity in Equation \eqref{eq: var_mod2} with some algebraic computations, the prior variance of the index for the \textsc{DDM} model equals
\begin{equation}\begin{split}
\var(S_{\textsc{DDM}}) = &\smallfrac{2\theta}{(\theta+3)(\theta+2)(\theta+1)^2}\left(1+(\theta-1)\mean\left[\frac{1}{K}\right]-\theta\mean\left[\frac{1}{K^2}\right]\right)\\ 
&+ \left(\smallfrac{\theta}{\theta+1} \right)^2 \left(\mean\left[\frac{1}{K^2}\right]-\mean\left[\frac{1}{K}\right]^2\right)
\label{eq: var-mod2}
\end{split}\end{equation}
By Equations \eqref{eq: var-mod2}--\eqref{eq: mean12}, we can obtain the following expression \eqref{eq: nvar_mod2} for the normalized variance of the index with the DDM
\begin{equation}\label{eq: nvar_mod2}
\rvar(S_{\textsc{DDM}}) = 1 - \left[1-\smallfrac{2}{(\theta+2)(\theta+3)}\right] \left[1-\frac{\theta\var(1/K)}{(1-\mean[1/K])(1+\theta\mean[1/K])}\right].
\end{equation}

\subsection{Computation of prior mean and variance of the index with the Poisson Dirichlet prior}

We aim to show that
\begin{align}\label{eq: meanPD}
 \mean(\Spd)&=\frac{\te + \sigma}{1+\te},\\
 \label{eq: varnPD}
\rvar(\Spd)&=\frac{2}{(\te+2)(\te+3)}.
\end{align}
With the PD prior, it is particularly convenient to consider the size biased permutation $\{\tilde P_j\}_{j\geq 1}$ of $\{P_j\}_{j\geq 1}$, 
i.e., the sequence of the probabilities of the species listed in order of appearance \citep[see][]{Pit06}. Formally, 
$\tilde P_j=P_{I_j}$ where $\{I_j\}_{j\geq 1}$ is a random permutation of the positive integers $\mathbb Z_+$ such that 
\[
\begin{split}
\prob(I_1=l\mid P_1,P_2,\dotsc)&=P_l, \\
\prob(I_j=l\mid I_1,\dotsc,I_{j-1}, P_1,P_2,\dotsc) &=\frac{P_l}{1-\sum_{i=1}^{j-1}P_{I_i}} \ind_{\{I_1,\dotsc,I_{j-1}\}^\complement}(l)
\end{split}
\]
It is known that the sequence $\{\tilde P_j\}_{j\geq 1}$ admits the well known stick--breaking representation recalled in the paper.
Just considering in particular $\tilde P_1$ and $\tilde P_2$, we can verify \citep{Pit97, Ong05, Pit06} that for an arbitrary integrable function $f$ on $[0,1]$ and $g$ on $[0,1]^2$
\begin{align} \label{eq: size-biased1}
\mean\left[\sum_{i=1}^\infty f(P_i)\right]&= \mean\left[\smallfrac{f(\tilde P_1)}{\tilde P_1}\right]\\ \label{eq: size-biased2}
\mean\left[\sum_{i\neq j}^\infty g(P_i,P_j)\right]&= 
\mean\left[\smallfrac{g(\tilde P_1, \tilde P_2)}{\tilde P_1\tilde P_2}(1-\tilde P_1)\right]
\end{align}
Applying \eqref{eq: size-biased1} and \eqref{eq: size-biased2}, one obtains that
\begin{equation}\label{eq: mom1PD}
 \mean\left[\sum_{j=1}^\infty P_j^2\right]=\mean\Big[\tilde P_1\Big]=\mean\big[W_1\big] =\dfrac{1-\sigma}{1+\te},
\end{equation} 
and
\begin{equation}
	\begin{split}\label{eq: mom2PD}
	\mean\left[\sum_{j=1}^\infty P_j^2\right]^2&=\mean\left[\sum_{j=1}^\infty P_j^4\right]+\mean\left[\sum_{i\neq j}^\infty P_i^2 P_j^2\right]\\
 	&=\mean\left[\tilde P_1^3\right]+\mean\left[\tilde P_1\tilde P_2(1-\tilde P_1)\right]\\
 	&=\mean\left[W_1^3\right]+\mean\left[W_1W_2(1-W_1)^2\right]\\
 	&= \frac{(3-\sigma)(2-\sigma)(1-\sigma)}{(\te +3)(\te+2)(\te+1)}+\frac{(\te+\sigma)(1-\sigma)^2}{(\te +3)(\te+2)(\te+1)}. 
 	\end{split}
\end{equation}
Let $S_{\textsc{PD}}$ be the Gini--Simpson index of the weights of our two parameters Poisson Dirichlet process, i.e.,
\[
	S_{\textsc{PD}}=1-\sum_{j=1}^\infty P_j^2.
\] 
By \eqref{eq: mom1PD}, one obtains Equation \eqref{eq: meanPD}. Further, by \eqref{eq: mom1PD}--\eqref{eq: mom2PD}, with some algebraic computations, one obtains that
\begin{equation}\label{eq: varPD}
 \begin{split}
  \var(S_{\textsc{PD}})&=
  \frac{(3-\sigma)(2-\sigma)(1-\sigma)+(\te+\sigma)(1-\sigma)^2}{(\te +3)(\te+2)(\te+1)} - \frac{(1-\sigma)^2}{(1+\te)^2}\\
  &=\frac{2(1-\sigma)(\te+\sigma)}{(\te+1)^2(\te+2)(\te+3)}.
  \end{split}
\end{equation}
Combination of \eqref{eq: meanPD} with \eqref{eq: varPD} yields Equation \eqref{eq: varnPD}.

\section{Proofs of main results}

Here we report the proofs of the theoretical results reported in the article, along with some additional lemmas. 

\subsection{Proof of Proposition 3.1}

\begin{itemize}
\item[(i)] We first consider the SD model. By Equation \eqref{eq: nvar_mod1} and Chebyshev's inequality, for every $\vep>0$, 
\[
	P(1-S_{\textsc{SD}} >\vep)\leq \frac{\var(S_{\textsc{SD}})}{\vep^2}=\frac{2\theta(\theta+1)(k-1)}{\vep^2(k\theta+3)(k\theta+2)(k\theta+1)^2}. 
\]
Being 
\[
	\sum_{k=1}^\infty \frac{(k-1)}{{(k\theta+3)(k\theta+2)(k\theta+1)^2}}<\infty,
\]
one can apply the first Borel Cantelli's lemma obtaining that $S_{\textsc{SD}}$ converges to one almost surely as $k$ diverges to infinite. Being $\var(S_{\textsc{SD}})$ convergent to zero, $L^2$ convergence is given as well. 

\item[(ii)] Considering the DD model, we aim at proving that $S_2$ converges in distribution to $1-\tsum_{j=1}^\infty Q_j^2$, where $Q_1,Q_2,\dotsc$ are the weights of a Ferguson--Dirichlet Process with parameter $a$ arranged in decreasing order. To this aim, let $\{Q_{1,k},\dotsc, Q_{k,k}\}$ be the decreasing rearrangement of \[\left\{P_{1,k},\dotsc, P_{k-1,k},1-\tsum_{j=1}^{k-1} P_{j,k}\right\}\] and $Q_{j,k}=0$ for $j>k$. 

It is known that $\{Q_{1,k},\dotsc,Q_{m,k}\}$ converge in distribution to $\{Q_{1},\dotsc,Q_m\}$ for any $m\geq 1$, as $k$ diverges to infinite \citep[see][Theorem 2.3 and Exercise 2.2.7]{Pit06}. Both $\{Q_{j,k}\}_{j\geq 1}$ and $\{Q_{j}\}_{j\geq 1}$ are random elements valued into the infinite-dimensional simplex 
\[
	\Delta^\infty =\left\{x_1, x_2, \dots \Big| x_j \geq 0, j\geq 1;\, \sum_{j= 1}^\infty x_i\leq 1 \right\}
\]
equipped with the product topology, i.e., the smallest topology that makes projections continuous.
 
In this framework, the just mentioned convergence of finite dimensional distribution is equivalent to convergence in distribution of $\{Q_{j,k}\}_{j\geq 1}$ to $\{Q_{j}\}_{j\geq 1}$ as $k$ diverges to infinite. Moreover, it is known that the $L^2$ topology is equivalent to the product topology in the Hilbert cube 
\[
	H=\bigotimes_{n=1}^\infty [0,1/n]
\] and therefore the same is true for $\Delta \subset H$. 

Letting  $\phi(\bm x)=1-\tsum_{j= 1}^\infty x_j^2$ for every $\bm x=\{x_j\}_{j\geq 1}$ in $\Delta^\infty$, it is clear that $\phi$ is a continuous map  with respect to the $L^2$ topology being $\phi(\cdot)=1-\norm{\cdot}_{L^2}^2$ where $\norm{\cdot}_{L^2}$ denotes the $L^2$ norm.  Therefore, as a consequence of the convergence in distribution of $\{Q_{j,k}\}_{j\geq 1}$ to $\{Q_{j}\}_{j\geq 1}$, $1-\tsum_{j= 1}^\infty Q_{j,k}^2$ converges in distribution to $1-\tsum_{j= 1}^\infty Q_{j}^2$ as desired.

\item[(iii)] We now prove the third and last statement. By equation \eqref{eq: mean12}, one has that 
\[
	\mean[S_{\textsc{D}}]= \frac{1-1 / k}{1+1/(k a_k)},
\] 
where $S_{\textsc{D}}$ denotes the Gini--Simpson index of a generic symmetric Dirichlet distribution with concentration parameter $a_k$, which as $k$ diverges has limit that is different from zero or one only if $k a_k$ converges to a constant different from zero or one. This is the only case in which $S_{\textsc{D}}$ can converge in distribution to a random variable that is not degenerate at zero or one.   
\end{itemize}


\subsection{Proof of Proposition 3.2}

\begin{itemize}
\item[i)] Let the prior mean $\mean[S_\textsc{SD}]$ of the index, under the \textsc{SD} prior assumption, being fixed to a value $\mu$ in the unit interval. We can write explicitly the concentration parameter $\theta$ as function of the mean value from equation~\eqref{eq: mean}, obtaining 
\[
	\theta = \frac{\mu}{k - 1 - k \mu}, \qquad \mu \in (0, 1). 
\]
for which  $k \neq \frac{1}{1 - \mu}$ since $\theta > 0$ and $\mu \in (0,1)$. Hence, we can write explicitly the variance as 
\[
	\rvar(S_\textsc{SD}) = 2 \left[ \left( \frac{\mu}{(1 - \mu) - 1/k} - 2 \right)\left(\frac{\mu}{(1 - \mu) - 1/k} - 2\right) \right]^{-1}, 
\]	
which is an increasing function of $k$. Therefore, fixing $\mu$, the variance approaches its minimum at the value of $k$ for which $(1 - \mu) - 1 / k$ is minimum, which correspond to $k=\lceil 1/\{1-\mu\}\rceil$, where $\lceil \cdot \rceil$ denotes the ceiling function. Therefore, from \eqref{eq: var} we can write explicitly the functional describing the minimum of the variance as $\mu$ varies over its support, which equals
\begin{equation*}\label{eq: lb_mod1}
	m_{\textsc{SD}}(\mu) = \frac{2\{(1-\mu)g(\mu)-1\}^2}{\{(2-\mu)g(\mu)-2\}\{(3-2\mu)g(\mu)-3\}},
\end{equation*}
where $g(\mu)= \lceil 1/(1-\mu) \rceil$ and $\lceil \cdot \rceil$ denotes again the ceiling function.  
\item[ii)] Similarly to before, let the prior mean $\mean[S_\textsc{DD}]$ of the index being fixed to a value $\mu$ in the unit interval, 
then the normalized prior variance of the index is given by \eqref{eq: condnvar} where
\begin{equation}\label{eq: a_K-deg}
	\atwo = \mu/\{1-\mu-1/k\}, \qquad \mu \in (0, 1), 
\end{equation}
which is a decreasing function of $k$ and therefore $\rvar(S_\textsc{DD})$ is an increasing function of it. Therefore, having fixed the prior mean of the index, the normalized variance reaches its minimum at the value of $k$ that minimize the expression in \eqref{eq: a_K-deg}, or equivalently for the value of $k$ that maximizes $1 - \mu - 1/ k$, subject to $1 - \mu - 1 / k > 0$. Hence, $ k=\lceil 1/\{1-\mu\}\rceil$, where $\lceil \cdot \rceil$ denotes the ceiling function. Therefore, one can obtain from \eqref{eq: condnvar} and \eqref{eq: a_K-deg} the minimum of the normalized variance corresponds to the one of point (i).
\end{itemize}


\subsection{Proof of Proposition 3.3}

\begin{itemize}
\item[(i)] From equations \eqref{eq: index_mean1} and \eqref{eq: index_mean2}, we can appreciate that as $\theta$ diverges, both the indices $S_{\textsc{SDM}}$ and $S_{\textsc{DDM}}$ approach $\mean[1 - 1/K]$, hence showing that the index value is upper--bounded by $1 - \mean[1 / K]$ for both models. On the counterpart, for $\theta$ approaching 0, both the indices approach the value 0, which serves as lower--bound. Therefore, the support of the indices is $(0, 1 - \mean[1 / K])$. 
\item[(ii)]Let us begin with the SDM model. The normalized variance $\rvar(S_{\textsc{SDM}})$ is undefined  and $\var(S_\textsc{SDM})=0$ 
if $\mean[S_\textsc{SDM}] = 0$ or $1$, i.e., if $\pi$ is degenerate at one or infinite or in the limit case of $\theta$ being zero.  Being $\mean[S_\textsc{SDM}]$ fixed, the supremum of $\rvar(S_\textsc{SDM})$ (ore equivalently of $\var(S_\textsc{SDM})$) is approached if the the prior $\pi$ is close to a distribution of a non--degenerate random variable whose only values are one and infinite. In order to make this clear, consider a prior $\pi_m$ of $K$ such that $\pi(k) > 0$ if $k \in \{1, m\}$ and $\pi(k) = 0$ otherwise,  i.e., 
\[
	\pi_m(k)=(1-p_m)\delta_1(k)+p_m\delta_m(k)\qquad\text{with}\qquad 0<p_m<1.
\] 
Formula \eqref{eq: mean} implies that  
\[
	\mean_{\pi_m}[S_\textsc{SDM}]= p_m \frac{\theta (m-1)}{(m\theta+1)},
\] 
and therefore, one can take $p_m=\mu(m \theta +1)/[\theta(m-1)]$ so that $\mean_{\pi_m}[S_\textsc{SDM}]=\mu$ provided that $m$ is large. 
In this scenario,  \eqref{eq: var_a_fisso} implies that
\begin{equation*}
	\var_{\pi_m}(S_\textsc{SDM})=p_m \frac{2\theta(\theta+1)(m-1)}{(m\theta+3)(m\theta+2)(m\theta+1)^2} + \left(\frac{(m-1)\theta}{m\theta+1}\right)^2p_m(1-p_m).
\end{equation*}
Therefore, being $\lim_{m\to\infty}p_m=\mu$, $\var_{\pi_m}(S_\textsc{SDM})$ converges to $\mu(1-\mu)$ as $m$ diverges to infinity. Hence, the normalized variance converges to 1. 

Let us consider now model SDD. The normalized variance $\rvar(S_\textsc{SDD})$ is undefined and $\var(S_\textsc{SDD})=0$ if $\mean[S_\textsc{SDD}]=0$ or $1$. By \eqref{eq: index_mean2}, this happens if (i)  $\atwo$ goes to zero, (ii) $K$ degenerates at one or (iii) $K$ degenerates at infinity with $\atwo$ being infinite. In other cases, the upper bound for $\var(S_2)$ given by $\mean[S_\textsc{SDD}](1-\mean[S_\textsc{SDD}])$ is positive and it is achieved when $S_\textsc{SDD}$ is almost surely equal to zero or one, i.e. when $\atwo$ goes to infinity and $K$ is either one or infinite with probability one. More precisely, consider that as $\atwo$ diverges to infinite,  the normalized variance \eqref{eq: nvar_mod2} converges to $\rvar(1/K)$, and then proceed in a similar way as in the proof relate to the SDM model. Hence, we can consider the prior for $K$
\[
	\pi_m(k)=(1-p_m)\delta_1(k)+p_m\delta_m(k),
\]
with $0<p_m<1$, taking $p_m=(1-\mu)m/(m-1)$ so that $\mean[S_\textsc{SDD}]=\mu$ and then let $m$ diverge to infinity. 

\end{itemize}

\subsection{Proof of Proposition 3.3}
By \eqref{eq: meanPD}, one can see that 
\[
	\te=\frac{\mean[\Spd]-\sigma}{1-\mean[\Spd]}
\] 
and, from equation (7) of the main manuscript, we can conclude the proof of Proposition 3.3.

%

\subsection{Additional lemmas}
The following lemma, which regards the delta method, is similar to Theorem 3.8 in \cite{Van98}, but we consider a sequence $\{ \bm Y_n\}_{n\geq 1}$ of random vectors  instead of the sequence $\{y_n\}_{n \geq 1}$ of real numbers. We prove it from scratch to make this paper self--contained.  

\begin{lemma}\label{l: 1}
Let \(\phi:\BR^m\to\BR\) be a map defined and continuously differentiable in a neighborhood of $\bm y_0$. Let $\bm Z_n$ and $ \bm Y_n$ be random vectors taking their values in the domain of $\phi$. If $r_n(\bm Z_n-\bm Y_n)\stackrel{d}{\rightarrow} \bm Z$ for $n\to \infty$ and $\bm Y_n$ converges in probability to $\bm y_0$, then $r_n(\phi(\bm Z_n)-\phi(\bm Y_n))\stackrel{d}{\rightarrow} \nabla \phi(\bm y_0)\cdot \bm Z$.
\end{lemma}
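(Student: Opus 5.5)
The plan is to prove Lemma~\ref{l: 1} by a first-order Taylor expansion with the mean-value form of the remainder, combined with Slutsky's theorem, following the argument of Theorem 3.8 in \cite{Van98}. The extra ingredient here is that $\bm Z_n$ also converges in probability to $\bm y_0$. Throughout I will assume $r_n\to\infty$, as is implicit in the statement; otherwise the conclusion can fail.

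The first step is to show that $\bm Z_n\to\bm y_0$ in probability. Write $\bm Z_n-\bm Y_n = r_n^{-1}\, r_n(\bm Z_n-\bm Y_n)$. The factor $r_n(\bm Z_n-\bm Y_n)$ converges in distribution, hence is tight, and $r_n^{-1}\to 0$. By Slutsky, $\bm Z_n-\bm Y_n\to 0$ in probability. Since $\bm Y_n\to\bm y_0$ in probability, $\bm Z_n\to\bm y_0$ in probability as well.

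The second step is the expansion. Fix a closed ball $B$ around $\bm y_0$ contained in the neighbourhood where $\phi$ is $C^1$. On the event $A_n=\{\bm Y_n\in B,\ \bm Z_n\in B\}$, whose probability tends to one by step one, the segment between $\bm Y_n$ and $\bm Z_n$ lies in $B$ by convexity. The mean value theorem applied to $t\mapsto\phi(\bm Y_n+t(\bm Z_n-\bm Y_n))$ gives some $\bm\xi_n$ on that segment with
\[
r_n\bigl(\phi(\bm Z_n)-\phi(\bm Y_n)\bigr)=\nabla\phi(\bm\xi_n)\cdot r_n(\bm Z_n-\bm Y_n).
\]
Choosing $\bm\xi_n$ measurably needs some care. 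To avoid this, I will instead use the integral form,
\[
\int_0^1\nabla\phi\bigl(\bm Y_n+t(\bm Z_n-\bm Y_n)\bigr)\,\diff t,
\]
which is a measurable random vector.

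The third step is the main obstacle: showing that this averaged gradient converges in probability to $\nabla\phi(\bm y_0)$. On $A_n$, every point $\bm Y_n+t(\bm Z_n-\bm Y_n)$ lies within $\max(\norm{\bm Y_n-\bm y_0},\norm{\bm Z_n-\bm y_0})$ of $\bm y_0$. Since $\nabla\phi$ is continuous at $\bm y_0$, for every $\vep>0$ there is $\delta>0$ such that the averaged gradient lies within $\vep$ of $\nabla\phi(\bm y_0)$ whenever both $\bm Y_n$ and $\bm Z_n$ are within $\delta$ of $\bm y_0$. That event has probability tending to one by step one, so the averaged gradient converges to $\nabla\phi(\bm y_0)$ in probability.

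To conclude, off $A_n$ the difference between $r_n(\phi(\bm Z_n)-\phi(\bm Y_n))$ and the expansion is irrelevant, since $\prob(A_n^\complement)\to0$. Slutsky's theorem, applied to the product of a sequence converging in probability to the constant $\nabla\phi(\bm y_0)$ and the sequence $r_n(\bm Z_n-\bm Y_n)\stackrel{d}{\rightarrow}\bm Z$, gives $r_n(\phi(\bm Z_n)-\phi(\bm Y_n))\stackrel{d}{\rightarrow}\nabla\phi(\bm y_0)\cdot\bm Z$.
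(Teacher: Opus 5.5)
Your proof is correct and follows essentially the same route as the paper's. Both use a mean-value expansion of $\phi$ along the segment from $\bm Y_n$ to $\bm Z_n$, continuity of $\nabla\phi$ at $\bm y_0$, and $\bm Z_n-\bm Y_n\to 0$ in probability; the paper bounds the remainder $r_n\norm{R_n(\bm Z_n-\bm Y_n)}$ directly, whereas you show the averaged gradient converges in probability and then apply Slutsky.

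Two small refinements are worth noting. Your integral form avoids the measurability issue with the intermediate point. You are also right to state $r_n\to\infty$ explicitly: the paper relies on it without saying so when it asserts $\prob(\norm{\bm Z_n-\bm Y_n}\geq\delta/2)\to 0$.
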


\begin{proof}
It is sufficient to prove that the difference between $r_n(\phi(\bm Z_n)-\phi(\bm Y_n))$ and $\nabla\phi(\bm y_0)\cdot(r_n(\bm Z_n-\bm Y_n))$ converges to zero in probability. 
For $0<t<1$, and fixed $\bm y$, define $g_n(t)=\phi(\bm Y_n+t \bm y)$. 
If $\bm Y_n$ and $\bm Y_n+\bm y$ belong to a neighbourhood of $\bm y_0$ on which $\phi$ is continuously differentiable,  
then by the mean-value theorem, $g_n(1)-g_n(0)=g_n'(\xi)$ for some $0\leq \xi\leq 1$, i.e.
\[
R_n(\bm y):= \phi(\bm Y_n+\bm y)-\phi(\bm Y_n)-\bm y\nabla\phi(\bm y_0)= \bm y\nabla\phi(\bm Y_n+\xi \bm y)-\bm y\nabla\phi(\bm y_0).
\] 
By the continuity of $\nabla\phi$, there exists for every $\vep>0$ a $\delta>0$ such that \( \norm{\nabla \phi(\bm y)-\nabla\phi(\bm y_0)}<\vep \), for every 
\(\norm{\bm y-\bm y_0} <\delta\). 
So, if $\norm{\bm y}<\delta/2$ and $\norm{\bm Y_n-\bm y_0}<\delta/2$, then $\norm{R_n(\bm y)}<\vep\norm{\bm y}$. Thus, for any $\eta>0$,
\begin{multline*}
\prob(r_n\norm{R_n(\bm Z_n-\bm Y_n)}>\eta)\leq \prob(\norm{\bm Z_n-\bm Y_n}\geq \delta/2)+\prob(r_n\norm{\bm Z_n-\bm Y_n}\vep>\eta) + \prob(\norm{\bm Y_n-\bm y_0}>\delta/2).
\end{multline*} 
The first and third term converge to zero as $n\to \infty$. 
The second term can be made arbitrarily small by choosing $\vep$ small.

\end{proof}

For a matrix $A$, denote by $A^\intercal$ its transpose and by $H_\phi(y)$ the Hessian matrix of $\phi$ evaluated at $\theta$. The following lemma is useful if $\phi$ is stationary at $y$.

\begin{lemma}\label{l: 2}
If the assumptions of Lemma \ref{l: 1} hold true, $\phi\in C^2(\BR^m)$ and $\nabla \phi(\bm y_0)$ is zero, then 
\[
	r_n^2(\phi(\bm Z_n)-\phi(\bm Y_n))\stackrel{d}{\rightarrow} \bm Z^\intercal H_\phi(\bm y_0)\bm Z/2.
\] 
\end{lemma}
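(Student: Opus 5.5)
We prove Lemma~\ref{l: 2} by a second-order Taylor expansion around the random point $\bm Y_n$, following the scheme of the proof of Lemma~\ref{l: 1}. Set $\bm D_n=\bm Z_n-\bm Y_n$, so that $r_n\bm D_n\stackrel{d}{\rightarrow}\bm Z$. We implicitly need $r_n\to\infty$, so that $\bm D_n\to 0$ in probability (as in the standard delta method). Then $\bm Z_n\to\bm y_0$ in probability as well.

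The expansion is taken with respect to the Hessian at $\bm y_0$, not at $\bm Y_n$. Define
\[
R_n=\phi(\bm Y_n+\bm D_n)-\phi(\bm Y_n)-\nabla\phi(\bm Y_n)\cdot\bm D_n-\tfrac12\bm D_n^\intercal H_\phi(\bm y_0)\bm D_n.
\]
On the event where $\bm Y_n$ and $\bm Y_n+\bm D_n$ lie in a ball $B(\bm y_0,\delta)$ on which $\norm{H_\phi(\bm y)-H_\phi(\bm y_0)}<\vep$, Taylor's theorem with Lagrange remainder applied to $g(t)=\phi(\bm Y_n+t\bm D_n)$ gives $|R_n|\le\tfrac12\vep\norm{\bm D_n}^2$. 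Since $r_n\bm D_n$ is tight, $r_n^2|R_n|\le\tfrac12\vep\,\norm{r_n\bm D_n}^2$ is small with high probability. The same three-term probability bound used in Lemma~\ref{l: 1} then yields $r_n^2R_n\to0$ in probability.

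The first-order term $r_n^2\nabla\phi(\bm Y_n)\cdot\bm D_n=r_n\nabla\phi(\bm Y_n)\cdot(r_n\bm D_n)$ is the main obstacle. Since $\nabla\phi(\bm y_0)=0$ and $\nabla\phi$ is $C^1$, we have $\nabla\phi(\bm Y_n)=H_\phi(\bm y_0)(\bm Y_n-\bm y_0)+o(\norm{\bm Y_n-\bm y_0})$. The term is therefore negligible only if $r_n(\bm Y_n-\bm y_0)\to0$ in probability, which mere convergence of $\bm Y_n\to\bm y_0$ in probability does not give.

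I would handle this by reading the lemma's hypothesis in the form needed for the application. One option is to assume $\bm Y_n=\bm y_0$, the deterministic stationary point, in which case the term vanishes identically. The other is to strengthen the hypothesis to $r_n(\bm Y_n-\bm y_0)\to0$ in probability. In the application to part (ii) of Proposition~\ref{theo:asym_post}, $\bm Y_n$ is the vector of empirical frequencies $n_j/n$, which converge to $k_0^{-1}$ at rate $n^{-1/2}$, while $r_n=\sqrt n$. There the correct statement keeps the cross term: $\phi(\bm Z_n)-\phi(\bm Y_n)=\nabla\phi(\bm Y_n)\cdot\bm D_n+\tfrac12\bm D_n^\intercal H\bm D_n+o_p(r_n^{-2})$. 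Moreover, $\phi(\bm p)=1-\sum_j p_j^2$ is exactly quadratic, so the expansion is exact: $R_n\equiv0$, and $H_\phi=-2I$ is constant.

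With the first-order term disposed of, the conclusion follows from the continuous mapping theorem. Because $\bm x\mapsto\tfrac12\bm x^\intercal H_\phi(\bm y_0)\bm x$ is continuous, $\tfrac12(r_n\bm D_n)^\intercal H_\phi(\bm y_0)(r_n\bm D_n)\stackrel{d}{\rightarrow}\tfrac12\bm Z^\intercal H_\phi(\bm y_0)\bm Z$. Slutsky's theorem then adds the $o_p(1)$ remainder and gives the stated limit.
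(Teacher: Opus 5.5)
Your diagnosis is correct, and it identifies exactly where the paper's own proof breaks. The paper uses the same scheme you do: a Taylor expansion at $\bm Y_n$ with the Hessian frozen at $\bm y_0$. It keeps the first-order term $\bm D_n\cdot\nabla\phi(\bm Y_n)$ inside its remainder and disposes of it only by noting that $\nabla\phi(\bm Y_n)\to\nabla\phi(\bm y_0)=0$ in probability. As you say, what is needed is $r_n\nabla\phi(\bm Y_n)\to 0$, because the term enters as $(r_n\nabla\phi(\bm Y_n))\cdot(r_n\bm D_n)$.

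In fact the lemma as stated is false, so it has no proof. Take $m=1$, $\phi(x)=x^2$, $y_0=0$, $r_n=n$, $Y_n=1/n$, and $Z_n=Y_n+Z/n$ with $Z\sim N(0,1)$. All hypotheses hold, yet $r_n^2(\phi(Z_n)-\phi(Y_n))=Z^2+2Z$. This is negative with positive probability, so it is not distributed as $Z^\intercal H_\phi(y_0)Z/2=Z^2$. With $Y_n=n^{-1/2}$ one gets $Z^2+2\sqrt n\,Z$, which is not even tight.

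Your repaired hypothesis is the right fix: $r_n(\bm Y_n-\bm y_0)\to 0$ in probability (or $\bm Y_n\equiv\bm y_0$), together with $r_n\to\infty$, which the proof of Lemma~\ref{l: 1} also uses tacitly. Your argument under it is sound: the local Lipschitz bound on $\nabla\phi$, the Lagrange remainder on a convex ball where $H_\phi$ is within $\vep$ of $H_\phi(\bm y_0)$, tightness of $r_n\bm D_n$, then continuous mapping and Slutsky.

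Your remarks on the application need correcting. The map used in part (ii) of Proposition~\ref{theo:asym_post} is $g(\bm v)=\sum_j\bigl(v_j/\sum_\ell v_\ell\bigr)^2$, evaluated at the rescaled gamma variables, not $1-\sum_j p_j^2$. It is not quadratic, so $R_n\not\equiv 0$. Its Hessian at the equiprobable point, in the $k_0$ observed coordinates, is $2(I-J/k_0)$ with $J$ the all-ones matrix, not $-2I$. Moreover, $1-\sum_j p_j^2$ has gradient $-2\bm y_0\neq 0$ at $\bm y_0=(1/k_0,\dots,1/k_0)$, so it fails the stationarity hypothesis altogether. It is the normalization in $g$ that makes the gradient vanish in those coordinates.

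Your central point about the application is right, though, and it means neither the paper's lemma nor your corrected one justifies that step. There $\bm h_n:=r_n(\bm Y_n-\bm y_0)=\sqrt n\,(n_1/n-1/k_0,\dots,n_{k_0}/n-1/k_0)$ is $O_p(1)$, not $o_p(1)$. The part of the cross term $r_n^2\nabla g(\bm Y_n)\cdot\bm D_n$ coming from the observed coordinates is, conditionally on the data, approximately $N(0,4\norm{\bm h_n}^2/k_0)$. For $k_0\geq 2$, the law of the iterated logarithm gives $\limsup_n\norm{\bm h_n}=\infty$ $\prob_0$-almost surely. So the $\chi^2_{k_0-1}$ limit asserted there cannot be obtained this way.
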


\begin{proof}
This lemma can be proved similarly as Lemma \ref{l: 1}, but letting
\begin{align*}
R_n(\bm y):&= \phi(\bm Y_n+\bm y)-\phi(\bm Y_n)-\bm y^\intercal H_\phi(\bm y_0)\bm  y/2\\
&=\bm y\nabla \phi(\bm Y_n)+\bm y^\intercal H_\phi(\bm Y_n+\xi \bm y)\bm y/2-\bm y^\intercal H_\phi(\bm y_0)\bm y/2,
\end{align*}
which is true by Taylor's theorem, 
 for some $0<\xi<1$. 
One also needs to consider that $\nabla \phi(\bm Y_n)$ converges in probability to $\bm y_0$ by continuity of $\phi$.

\end{proof}

\subsection{Proofs of Proposition 5.1}

\begin{itemize}
\item[i)] By Bayes theorem, for a generic symmetric Dirichlet prior with concentration parameter $\alpha_k$, 
\[
P_{1,k},\dotsc,P_{k,k}\mid K = k, X_1,\dotsc,X_n \sim \textsc{Dir}(a_k+n_1,\dotsc,a_k+n_{k_n},a_k,\dotsc,a_k), \qquad k\geq k_n,
\] 
where the sequence $\{a_k\}_{k\geq 1}$ is constantly equal to $\theta$ for the SD model and $a_k=\theta/k$ for the DD model.  

Let $S_{k,n}$ be a random variable whose distribution coincides with the conditional distribution of $S_{SD}$ given $X_1,\dotsc,X_n$. 
Further, $k_n$ is eventually equal to $k_0$, $\prob_0$--a.s., and therefore 
the posterior probability that $K$ is bigger than or equal to $k_0$ is eventually one, $\prob_0$--a.s. 
Hence, we consider $K = k$ to be fixed at a value bigger than or equal to $k_0$. We have  
\[
	g(V_{1,n}, \dots, V_{k,n}) = 1-S_{k,n} = \sum_{j=1}^{k} \left(\frac{V_{j,n}}{C_{k,n}}\right)^2\qquad \text{with} \qquad C_{k,n} = \sum_{j=1}^{k} V_{j,n},
\] 
where $V_{1,n},\dotsc,V_{k,n}$ are independent random variables, and $V_{j,n} \sim \textsc{Gamma}(1, a_k + n_j)$, with the proviso that $n_j = 0$ if $j > k_0$. 
For convenience, let the ``true'' probabilities of the species $p_1,\dotsc,p_{k_0}$ ordered so that, by the strong law of large numbers, $n_j/n$ converges $\prob_0$-a.s. to $p_j$, for each $j=1,\dotsc,k_0$. 
By the central limit theorem, 
\[
	\frac{V_{j,n}-n_j-a_k}{\sqrt{n_j+a_k}} \stackrel{d}{\longrightarrow} N(0, 1), \qquad j=1,\dotsc,k_0.
\] 
Therefore, 
\[
	\frac{V_{j,n}-n_j}{\sqrt{np_j}}\, =\, \sqrt{\frac{a_k+n_j}{np_j}}\, \frac{V_{j,n}-n_j-a_k}{\sqrt{n_j+a_k}}+\frac{a_k}{\sqrt{np_j}}
\]
also converges to a standard Normal random variable, for each $j=1,\dotsc,k_0$, by the strong law of large numbers and Slustky's theorem.  

Let $W_{j,n}=V_{j,n}/n$, so that 
\[
	\begin{split}
		\sqrt{n}\left(W_{j,n}-\frac{n_j}{n}\right) &\stackrel{d}{\longrightarrow} N(0, p_j), \qquad j = 1, \dots, k_0,\\
		\sqrt{n}\left(W_{j,n}-\frac{\sqrt{a_k}}{n}\right) &\stackrel{d}{\longrightarrow} \delta_0, \qquad j > k_0.
	\end{split}
\] 
Further, we note that $1-S_{k,n}$ is distributed as $g(W_{1,n},\dotsc,W_{k,n})$. Denote by 
\begin{equation}\label{eq:mukn}
	\mu_n(k) = g(n_1/n,\dotsc,n_{k_0}/n,\sqrt{a_k}/n,\dotsc,\sqrt{a_k}/n)= \frac{1-\hat S_n+(k-k_0)a_k/n^2}{1+(k-k_0)\sqrt{a_k}/n},
\end{equation}
where 
\[
	\hat S_n=1-\sum_{j=1}^{k_n}\left(\frac{n_j}{n}\right)^2, \qquad \text{with}\qquad \hat S_n \stackrel{p}{\longrightarrow} 1-\sum_{j=1}^{k_0} p^2_j.
\]

To find the asymptotic distribution of $S_{k,n}$, the delta method can be applied.  
More precisely, Lemma \ref{l: 1} can be applied with $\bm Z_n=(W_{1,n},\dotsc,W_{k,n})$, 
$\phi=g$, $r_n=\sqrt{n}$, $\bm Y_n=(n_1/n,\dotsc,n_{k_0}/n,\sqrt{a_k}/n,\dotsc,\sqrt{a_k}/n)$ and $\bm y_0=(p_1,\dotsc,p_{k_0},0,\dotsc,0)$, 
obtaining that    
\[
	\sqrt{n}\big(1-S_{k,n}-\mu_n(k)\big) \stackrel{d}{\longrightarrow} N(0, \sigma^2)
\]
being  
\[
	\sigma^2=4\sum_{j=1}^{k_0} p_j\left(p_j-\sum_{\ell=1}^{k_0}p_\ell^2\right)^2=4\sum_{j=1}^{k_0} p_j^3-4\left(\sum_{j=1}^{k_0} p_j^2\right)^2.
\]
In fact, \(\sqrt{n}(1-S_{k,n}-\mu_n(k))\) converges to  \(\bm Y^\intercal \nabla g(p_1,\dotsc,p_{k_0},0,\dotsc,0),\)  where $\bm Y$ is a $k$--dimensional vector of independent 0-mean normal random variables, whose $j$--th component $Y_j$ has variance $p_j$, for $j=1,\dotsc,k_0$ and zero variance for $j=k_0+1,\dotsc,K$, and $\nabla g$ denotes the gradient of $g$. Hence, the limiting distribution of \(\sqrt{n}(1-S_{k,n}-\mu_n(k))\) is normal with zero mean 
and its variance is the weighted sum of the squares of the first order partial derivatives of $g$ evaluated at $(p_1,\dotsc,p_{k_0},0,\dotsc,0)$, 
where the weights are $p_1,\dotsc,p_{k_0}$. The sequence \( \sqrt{n}(1-\hat S_n-\mu_n(k)) \) converges in distribution to zero, 
and therefore by Slustky's theorem, 
\[
	\sqrt{n}(S_{k,n}-\hat S_n)= \left[ \sqrt{n}\big(1-\hat S_n-\mu_n(k)\big) -\sqrt{n}\big(1-S_{k,n}-\mu_n(k)\big)\right] \stackrel{d}{\longrightarrow} N(0, \sigma^2). 
\]

To conclude, we need to prove the convergence in distribution of \(\sqrt{n}(S_{n}-\hat S_n)\). To this aim, let \(\{R_n\}_{n\geq 1}\) be a sequence of random variables such that $R_n = \pi_n(k)/\pi(k)$ with probability \(\pi(k)\) for each \(k\) in the support of \(\pi\), where $\pi_n$ denotes the posterior distribution of $K \mid X_1, \dots, X_n$. Moreover, let 
\[
	R = \pi(k_0) \delta_{1 / \pi(k_0)} + [1-\pi(k_0)] \delta_0.
\]

With the SDM model, the posterior $\pi_n$ is consistent \citep{Bis13}. The result provided by \cite{Bis13} cannot be applied for the DDM model, 
since the parameter $a_k$ of the symmetric Dirichlet distribution of $(P_{1,k},\dotsc,P_{k,k})$ depends on $k$. Nevertheless, it is not difficult to verify that $\pi_n$ is consistent. Indeed, one can apply Lemma 1 in \cite{Bis13}, showing that 
\begin{equation}\label{eq: to_prove1}
	\lim_{n\to \infty} \sum_{\ell>k_0} \pi(\ell) \dfrac{\ell!}{(\ell-k_0)!\Ga(\atwo/\ell)^{k_0}}  \prod_{j=1}^{k_0} \frac{\Ga(np_j+\atwo/\ell)}{\Ga(np_j+\atwo/k_0)}=0.
\end{equation}
To this aim, one can apply the dominated convergence theorem. One can verify that the general term of the series in \eqref{eq: to_prove1} is bounded by $\pi(l)$, for large $n$, by recalling that the Gamma function is increasing on $(2,\infty)$, and therefore each factor in the product in \eqref{eq: to_prove1} is less than one if $n$ is large, and noting that 
\[
	\Ga(x)\geq \frac{1}{x}-\frac{1}{x+1} = \frac{1}{x(x+1)}, \qquad \text{for every} \qquad x>0, 
\]
being $e^{-t}\geq (1-t)\ind_{(0,1)}(t)$, for every $t$. Moreover, one can apply the Stirling's formula 
to verify that 
\[
	\lim_{n\to \infty} \prod_{j=1}^{k_0} \frac{\Ga(np_j+\atwo/\ell)}{\Ga(np_j+\atwo/k_0)} = 0
\] 
Hence, $\pi_n$ is consistent for both models SDM and DDM. Therefore, \(R_n\) converges in distribution to \(R\) and the mean value of \(R_n\) comverges to one, which is the mean value of \(R\). Being 
\[ 
	0\leq \prob\left(\sqrt{n}(S_{k,n}-\hat S_n)\leq x\right)\frac{\pi_n(k)}{\pi(k)}\leq R_n,  
\]
one can apply Pratt's lemma to obtain that 
\[
	\prob\left(\sqrt{n}(S_{n}-\hat S_n)\leq x\right) = \sum_{k\geq 1} \prob\left(\sqrt{n}(S_{k,n}-\hat S_n)\leq x\right)\pi_n(k)
\]
converges to the cumulative distribution function of a random variable $N(0, \sigma^2)$.

Lastly, we need to show that the same result holds also for the PD case. The posterior distribution of a Poisson--Dirichlet process with parameter $(\sigma, \theta)$ is the same one of the process given by
\begin{equation}\label{eq: postPDproc}
	\eta_n Q + \sum_{j=1}^{k_n} \eta_{n,j} \delta_{X_j^*}, 
\end{equation}
where $X_1^*,\dotsc, X_{k_n}^*$ are the distinct species observed in $X_1,\dotsc,X_n$ with frequencies
$n_1, \dots, n_{k_n}$. Further, we have that  
$(\eta_n, \eta_{n,1},\dotsc,\eta_{n,k_n})\sim \textsc{Dir}(\theta+k_n\sigma, n_1-\sigma,\dotsc,n_{k_n}-\sigma)$, and $Q$ is a Poisson--Dirichlet process with parameter $(\sigma, \te +k_n\sigma)$ \citep[][Corollary 20]{Pit96}. Moreover, $Q$ and $(\eta_{n,1},\dotsc,\eta_{n,k_n},\eta_n)$ are stochastically independent. Letting $Q_1,Q_2,\dotsc$ be the weights of $Q$, the posterior distribution of $S_{PD}$ is the same distribution of the following random variable
\begin{equation*}\label{eq: posteriorPD} 
	1-\eta_n^2\sum_{l\geq 1} Q_l^2 -  (1-\eta_n)^2\sum_{i=1}^{k_n} \xi_{n,i}^2,  
\end{equation*}
where $\xi_{n,i}=\eta_{n,i}/(1-\eta_n)$, for every $i=1,\dotsc,k_n$. By the properties of the finite Dirichlet distribution,  $\eta_n$ and $(\xi_{n,1},\dotsc,\xi_{n,k_n})$ are stochastically independent, being the distribution of $(\xi_{n,1},\dotsc,\xi_{n,k_n-1})$ finite dimensional Dirichlet with parameter $(n_1-\sigma,\dotsc,n_{k_n}-\sigma)$. Finally, note that the thesis of Propositions 5.1 is also valid  in the case of PD model with $a=-\sigma$ and $\sigma\in(0,1)$.  This implies that the asymptotic distribution of  $1- \tsum_{i=1}^{k_n} \xi_{n,i}^2$ is the same one given before. Since $\eta_n\sim\textsc{Beta}(\theta+k_n\sigma, n-k_n\sigma)$ converges to zero in distribution, by Slustky's theorem, the thesis follows also for the PD model.

\item[ii)] The proof is similar to point i) and the same notation will be used. As before,  let $k$ be fixed at a value bigger than or equal to $k_0$.
First, we study the convergence of \(n(1-S_{k,n}-\mu_n(k))\), where $\mu_n(k)$ is defined in Equation~\eqref{eq:mukn}. This can be done using the delta method based on the second order Taylor expansion, being $g$ stationary at $(1/k_0,\dotsc,1/k_0)$. 
More precisely, Lemma \ref{l: 2} can be applied with $\bm Z_n=(W_{1,n},\dotsc,W_{k,n})$, 
$\phi=g$, $r_n=\sqrt{n}$, 
\[
	\bm Y_n=(n_1/n,\dotsc,n_{k_0}/n,\sqrt{a_k}/n,\dotsc,\sqrt{a_k}/n)
\] and $\bm y_0=(1 / k_0,\dotsc,1 / k_0,0,\dotsc,0)$.
So, 
\[
	n(1-S_{k,n}-\mu_n(k))\stackrel{d}{\longrightarrow} \frac{1}{2} \bm Z^\intercal H_g \bm Z,
\]
where $H_g$ denotes the Hessian matrix of $g$ evaluated at $(1/k_0,\dotsc,1/k_0,0,\dotsc,0)$, and $\bm Z=\sqrt{\frac{1}{k_0}} \bm T$, where $\bm T$ is a $k$-dimensional random vector whose first $k_0$ components are independent standard normal random variables and the other ones are identically zero.

Clearly, $\bm Z^\intercal H_g \bm Z$ doesn't change if the last $k-k_0$ elements of the $\bm Z$ vector are deleted and the last $k-k_0$ rows and columns of $H_g$ are deleted. Computations yield the elements of the matrix $H_g$, that equals 
\[
	[H_g]_{i,j}= \begin{cases} -2/k_0, \qquad &\text{if } i \neq j \\ 2(1-1/k_0), \qquad &\text{if } i = j \end{cases}, \qquad i,j=1,\dotsc,k_0. 
\]
Hence, 
\[
	\bm T^\intercal H_g  \bm T=2\sum_{j=1}^{k_0} (T_j- \bar{T})^2 \sim 2 \chi^2_{k_0 - 1}, 
\] 
where $\bar{T}=\tsum_{j=1}^{k_0} T_j/n$. To check the distributional identity, we first note that  $\frac{1}{2} H_g$ is a projection matrix. Indeed, it is the difference of the identity matrix and a matrix with all entries equal to $1/k_0$ and it holds the idempotent property, i.e.,
\[
	\frac{1}{2} H_g = \left(\frac{1}{2} H_g\right)^2.
\] 
Therefore, it can be written in spectral form, as \(\frac{1}{2} H_g=PDP^\intercal\), so that 
\[
	\frac{1}{2} \bm T^\intercal H_g \bm T= \bm T^\intercal P D P^\intercal \bm T.
\] 
Since $P$ is orthogonal, $P^\intercal \bm T$ is 
a vector of independent standard normal random variables like $\bm T$. 
Further, \(\frac{1}{2}  \bm T^\intercal  H_g \bm T\) is distributed as \(\bm T^\intercal D \bm T\), or equivalently as \(\tsum_{i=1}^{k_0} \lambda_{i} T^2_i\), being $T_1,\dotsc, T_{k}$ independent standard normal random variables or identically zeros, and $\lambda_1, \dots, \lambda_k$ 
the eigenvalues of  $\frac{1}{2} H_g$. Being $\frac{1}{2} H_g$ a projection matrix, its eigenvalues, i.e. the $\lambda_i$'s, are ones and zeros. Moreover, their sum is the trace of $\frac{1}{2} H_g$, that is $k_0-1$. As a result, the distribution of \(\bm T^\intercal D \bm T\) is a chi squared with $k_0-1$ degrees of freedom and \(nk_0(1-S_{k,n}-\mu_n(k))\) 
converges in distribution to a chi squared with $k_0-1$ degrees of freedom.  

Letting 
\[
	\nu_n(K)=\sum_{j=1}^{k_n} \left(\frac{n_j}{n}\right)^2 \frac{1}{1+(k-k_0)\sqrt{a_k}/n},
\]
the sequence \( n(\mu_n(k)-\nu_n(k))\) converges in distribution to zero, 
and therefore by Slustky's theorem, 
\( n(1-S_{k,n}-\nu_n(k))= n(1-S_{k,n}-\mu_n(k)) + n(\mu_n(k)-\nu_n(k)) \) converges to a \(\chi^2_{k_0-1}\) multiplied by $1/k_0$.  As we have done for point i), we can apply Pratt's lemma to show that, letting $\pi_n$ be the posterior distribution of $K$, 
\[
\prob\left(nk_0(1-S_{n}-\nu_n(k))\leq x\right) =  \sum_{k\geq 1} \prob\left(nk_0(1-S_{k,n}-\mu_n(k))\leq x\right)\pi_n(k),
\]
converges to the cumulative distribution function of $\chi^2_{k_0-1}$. In other words, \(n(1-S_{n}-\nu_n(k))\) converges in distribution to a \(\chi^2_{k_0-1}\) multiplied by $1/k_0$. Since $K$ converges in distribution to $k_0$, being $\pi_n$ its distribution, 
\[
	n(\nu_n(k)-\nu_n(k_0)) = -(k-k_0)\sqrt{a_K}\sum_{j=1}^{k_n} \left(\frac{n_j}{n}\right)^2 \frac{1}{1+(k-k_0)\sqrt{a_k}/n}
\] 
converges in distribution to zero. Therefore, by Slustky's theorem, 
\[
	nk_0\big(1-S_{n}-\nu_n(k_0)\big)=nk_0\big(1-S_{n}-\nu_n(k)\big)+ nk_0\big(\nu_n(k)-\nu_n(k_0)\big)
\] 
converges to $\chi^2_{k_0-1}$, where \(\nu_n(k_0)=1-\hat S_n\). Finally, we can use the same strategy of the last part of point i), with the posterior representation provided in Equations~\eqref{eq: posteriorPD}  and \eqref{eq: postPDproc}, to show that the same result holds for the PD model. In particular, the asymptotic distribution of  $1- \tsum_{i=1}^{K_n} \xi_{n,i}^2$ is the same one given before for the equiprobable case with a generic symmetric Dirichlet prior distribution. Since $\eta_n\sim\textsc{Beta}(\theta+k_n\sigma, n-k_n\sigma)$ converges to zero in distribution, by Slustky's theorem, the thesis follows also for the PD case. 

\end{itemize}


\subsection{Proof of Corollary 5.1}
Let 
\[
	\sigma^2=4\sum_{j=1}^{k_0} p_j\left(p_j-\sum_{\ell=1}^{k_0}p_\ell^2\right)^2=4\sum_{j=1}^{k_0} p_j^3-4\left(\sum_{j=1}^{k_0} p_j^2\right)^2.
\]
We note that
\[ 
	\sqrt{\frac{n}{s^2_n}}(S_n-\hat S_n) = \sqrt{\frac{\sigma^2}{s^2_n}} \left[\sqrt{\frac{n}{\sigma^2}}(S_n-\hat S_n)\right], 
\]
where the random sequence given by the factor within the square brackets converges in distribution to a standard normal random variable by Proposition 5.1, and the one given by the first factor converges to one almost surely (and therefore in distribution) by consistency of $s^2_n$.
By Slustky's Theorem, 
\[
	\sqrt{\frac{n}{s^2_n}}(S_n-\hat S_n) \stackrel{d}{\longrightarrow} N(0,1).
\]

\section{Computation of exact estimates}

We report in the following subsections the calculation to derive the posterior point estimates for all the models considered in the manuscript. 

\subsection{Finite and random number of species (\textsc{SD, DD, SDM} and \textsc{DDM} models)}

We first note that, for models \textsc{SD} and \textsc{DD} with $k$ species, the posterior distribution of the Gini--Simpson index is obtained by the the conditional distribution of the species frequencies $P_1,\dotsc,P_{k}$ given the sample $X_1, \dots, X_n$, that by Bayes theorem equals
\[
	P_1, \dots, P_k \mid X_1, \dots, X_n \sim \textsc{Dir}(a_k + n_1, \dots, a_k + n_k), 
\]
where $a_k=\theta$ for model \textsc{SD} and $a_k=\theta/k$ for model \textsc{DD}, 
$n_j$ is the number of times we observe the $j$--th species in the sample, with $n_j = 0$ if the $j$--th species is not present in the sample.

In order to obtain the posterior mean of the index, recall that for a random variable $W\sim \textsc{Beta}(\al, \be)$, 
\[
	\mean\Big[W^2\Big]=\frac{\al(\al+1)}{(\al+\be)(\al+\be+1)}.
\] 
Hence, one can obtain the expression of the posterior mean of the index as
\begin{equation}\begin{split}\label{eq: posterior_mean2}
	\mean[S \mid X_1, \dots, X_n] = 1-&\sum_{j=1}^{k} \smallfrac{(a_k+n_j+1)(a_k+n_j)}{(ka_k+n+1)(ka_k+n)}\\
	&= \frac{n(n-1)}{(ka_k+n+1)(ka_k+n)}{\tilde S}_n+\frac{(ka_k+1)(ka_k+2n)}{(ka_k+n+1)(ka_k+n)}\frac{a_k(k-1)}{ka_k+1}\\
	&=p_{k,n}{\hat S}_n+ (1-p_{k,n})\mean[S], 
\end{split}\end{equation}
where 
\[
	\tilde S = 1 - \sum_{j=1}^{k_n} \frac{n_j(n_j - 1)}{n(n-1)} 
\]
denotes the unbiased frequentist estimator with minimum variance for the Gini--Simpson index, with $k_n$ being the number of distinct species observed in the sample. Specifically, we can explicit the weights in Equation~\eqref{eq: posterior_mean2} for the \textsc{SD} and \textsc{DD} models, obtaining
\[
	\begin{split}
		p_{k,n}^\textsc{SD} &= \frac{n(n-1)}{(k \theta + n + 1)(k \theta + n)}, \qquad \text{for model }\textsc{SD}, \\
		p_{k,n}^\textsc{DD} &= \frac{n(n-1)}{(\theta + n + 1)(\theta + n)}, \,\quad\qquad \text{for model }\textsc{DD}. \\
	\end{split}
\]

While mixing the model with respect to the number of species, for the \textsc{SDM} and \textsc{DDM} models, from the conditional expectation of the index we have
\begin{equation*}\label{eq: posterior_mean}
	\begin{split}
	\mean[S\mid X_1, \dots, X_n] = 1-\sum_{k=K_n}^{\infty}\sum_{j=1}^{k} \smallfrac{(a_k+n_j+1)(a_k+n_j)}{(ka_k+n+1)(ka_k+n)} \pi_n(k) ,
	\end{split}
\end{equation*}              
where $\pi_n$ denotes the posterior of $K$,  that is given by
\begin{equation*}
	\pi_n(k)\propto \frac{k!}{(k-k_n)!}\mean\left[\prod_{j=1}^{K_n}P_{j,k}^{n_j}\right]\pi(k) \ind_{[k > k_n]} = \frac{k!}{(k-k_n)!}\frac{\Ga(ka_k)}{\Ga(n+ka_k)}\prod_{j=1}^{k_n}\smallfrac{\Ga(n_j+a_k)}{\Ga(a_k)}\pi(k)\ind_{[k > k_n]} ,
\end{equation*}
In particular, we obtain 
\begin{equation*}\label{eq: postK}
	\begin{split}
		\pi_n(k)&\propto \frac{k!}{(k-k_n)!}\frac{\Ga(k\theta)}{\Ga(n+k\theta)}\pi(k)\ind_{[k > k_n]} , \,\;\;\qquad\qquad\qquad\qquad \text{for model }\textsc{SDM},\\
		\pi_n(k)&\propto \frac{k!}{(k-k_n)!}\frac{1}{\Ga(\atwo/k)^{K_n}} \prod_{j=1}^{K_n}\Ga(n_j+\atwo/k) \pi(k)\ind_{[k > k_n]} , \qquad \text{for model }\textsc{DDM}.
	\end{split}
\end{equation*}
After manipulation, the posterior expectation can be rewritten as 
\[
	\mean[S\mid X_1, \dots, X_n]  = q_n \tilde S_n + (1 - q_n) W_n, 
\]
where for the weights we have
\begin{alignat*}{2}
	q_{n}^\textsc{SDM} &= n(n-1) \mean_{\pi_n}\left[\frac{1}{(K \theta + n + 1)(K \theta + n)}\right], \qquad &&\text{for model }\textsc{SD}, \\
	q_{n}^\textsc{DDM} &= \frac{n(n-1)}{(\theta + n + 1)(\theta + n)},&&\text{for model }\textsc{DD}, \\
\end{alignat*}
and for the prior rough estimator
\begin{alignat*}{2}
	W_{n}^\textsc{SDM} &= 1 - \frac{\theta + 1}{n} \left( \frac{n \tau_n + 1}{1 - q_n^\textsc{SDM} }- 1\right), \qquad && \text{for model }\textsc{SD}, \\
	W_{n}^\textsc{DDM} &= \frac{\theta}{1 + \theta}\left(1 - \mean_{\pi_n}\left[\frac{1}{K}\right] \right), && \text{for model }\textsc{DD}, \\
\end{alignat*}
with $\tau_n = \mean_{\pi_n}\left[(\theta K + n)^{-1}\right]$. 
\subsection{Infinite number of species (\textsc{DP} and \textsc{PD} models)}

The posterior distribution of a Poisson--Dirichlet process with parameter $(\sigma, \theta)$ \citep[][Corollary 20]{Pit96} is the same one of the process given by
\[ 
	\eta_n Q + \sum_{j=1}^{k_n} \eta_{n,j} \delta_{X_j^*}, 
\]
where $X_1^*,\dotsc, X_{k_n}^*$ are the distinct values of the observations $X_1,\dotsc,X_n$,  $n_j$ is the number of observations $X_i$ equal to $X_j^*$, $j=1,\dotsc,k_n$, 
\[
	(\eta_n, \eta_{n,1},\dotsc,\eta_{n,k_n}) \sim \textsc{Dir}(\theta+k_n\sigma, n_1-\sigma,\dotsc,n_{k_n}-\sigma),
\] 
and $Q\sim PD(\sigma, \te +k_n\sigma)$. Moreover, $Q$ and $(\eta_n, \eta_{n,1},\dotsc,\eta_{n,k_n})$ are stochastically independent. 
Letting $Q_1,Q_2,\dotsc$ be the weights of $Q$, 
the posterior distribution of $S$ is the same distribution of the following random variable
\begin{equation}\label{eq: posteriorPD} 
	1-\eta_n^2\sum_{l\geq 1} Q_l^2 -  (1-\eta_n)^2\sum_{i=1}^{k_n} \xi_{n,i}^2,  
\end{equation}
where $\xi_{n,i}=\eta_{n,i}/(1-\eta_n)$, for every $i=1,\dotsc,k_n$. 
By the properties of the finite Dirichlet distribution, $\eta_n$ and $(\xi_{n,1},\dotsc,\xi_{n,k_n})$ are stochastically independent, 
being  
\[
	(\xi_{n,1},\dotsc,\xi_{n,k_n}) \sim \textsc{Dir}(n_1-\sigma,\dotsc,n_{k_n}-\sigma). 
\]
By Equation~\eqref{eq: meanPD}, 
\[
	\mean\left[\sum_{l\geq 1} Q_l^2\right]=\frac{1-\sigma}{1+\te+k_n\sigma}.
\]
Hence, the posterior mean of $S_\textsc{PD}$ is given by
\begin{equation*}
\mean[S_\textsc{PD} \mid X_1, \dots, X_n] = 1-\smallfrac{(\te+k_n\sigma)(1-\sigma)}{(\te+n+1)(\te+n)}-
\sum_{j=1}^{k_n}\smallfrac{(n_j-\sigma+1)(n_j-\sigma)}{(\te+n+1)(\te+n)},
\end{equation*} 
that is equal to a linear combination of the prior mean $\mean[S_\textsc{PD}]$ and the frequentist unbiased estimator $\hat S_n$, 
\begin{equation}\label{eq: postmeanPD}
	\begin{split}
		\mean[S_\textsc{PD} \mid X_1, \dots, X_n] &= \frac{(1+\te)(\te+2n)}{(\te+n+1)(\te+n)}\frac{\te+\sigma}{1+\te}+\frac{n(n-1)}{(\te+n+1)(\te+n)}
\left[ 1-\sum_{j=1}^{k_n}\smallfrac{n_j(n_j-1)}{n(n-1)}\right] \\
&=q_n^\textsc{PD} \hat S_n + (1 - q_n^\textsc{PD}) W_n^\textsc{PD}.
	\end{split}
\end{equation}
The estimator in Equation~\eqref{eq: postmeanPD} is quite similar to the posterior mean with the \textsc{DD} model,
which is also a convex linear combination with the same weights. Hence, the estimator \eqref{eq: postmeanPD} is also a smooth alternative of $\tilde{S}_n$. Nevertheless, if one wishes to compare the two models fixing the prior mean of the index to the same value, then one notices that the weight $q_n^\textsc{DD}$ of the frequentist estimator $\tilde{S}_n$ is necessarily higher with the Poisson Dirichlet prior.  
But the main difference between the two estimators is that in \eqref{eq: postmeanPD} the rough estimator $W_n^\textsc{DD}$ is replaced by the prior mean.


\section{Computation of the curves of the minimum normalized variance for \textsc{SDM} and \textsc{DDM} models}
In the paper (Figure 2), a plot shows the curves of the minimum normalized variance under the different prior models. In this section, it will be shown 
how such minimums have been computed for the \textsc{SDM} and \textsc{DDM} models, being the prior mean of the index a fixed value $\mu\in[0,1]$.

\begin{itemize}
\item[-] \textbf{\textsc{DDM} model -}
Being the prior mean fixed and equal to $\mu$, by Equation~\eqref{eq: index_mean2}, $\mean[1/K]$ is function of $\mu$ and of the parameter $\te$, namely $\mean[1/K]= 1 - \mu(1+\te)/\te$. Hence, by Equation~\eqref{eq: nvar_mod2}, $\rvar(S_2)$ is the following function of $\te$ and $\var(1/K)$
\[
	V_\textsc{DDM}(\te,\var(1/K))= 1-\left(1-\smallfrac{2}{(\te+2)(\te+3)}\right)\left(1-\frac{\var(1/K)}{\mu(1-\mu)}\left(\frac{\te}{\te+1}\right)^2\right).
\] 
For a fixed  value of $\te$, $V_2$ is an increasing function of $\var(1/K)$ and therefore the minimum is achieved when $\var(1/K)$ is minimum. If the reciprocal of $\mean[1/K]$ is integer, then $\var(1/K)$ can be zero with $K$ degenerate at $1/\mean[1/K]$. In general, the minimum of $\var(1/K)$ is achieved if the distribution of $K$ is concentrated on the two integer values which are closest to $1/\mean[1/K]$, i.e.\ $k^\star = \lfloor 1/\mean[1/K] \rfloor$ and $k^\star+1$. In order ensure that $\mu$ is the prior mean of the index, it must be 
\[
	\pi(\{k^\star+1\})=1-\pi(\{k^\star\})=k^\star(k^\star+1)\left(\frac{1}{k^\star}-\mean\left[\frac{1}{K}\right] \right),
\]   
which is zero if $1/\mean[1/K]$ is integer and, therefore, the prior is degenerate at $1/\mean[1/K]$.  For such prior, 
$\var(1/K)=v(\mean[1/K])$, where the function $v$ equals 
\[
	v(x)=\left(x-\frac{1}{k^\star(x) +1}\right)\left(\frac{1}{k^\star(x)}-x\right),
\] 
and $k^\star(x)=\lfloor 1/x \rfloor$. Therefore, the minimum normalized variance for model two has been obtained as
\[
	\min_{x\in[0,1-\mu]} V_\textsc{DDM}\left(\frac{\mu}{1-\mu-x},v(x)\right),
\]
since $0\leq \mean[1/K]\leq 1-\mu$. Such minimum has been computed as $\mu$ varies over the grid of 100 equally spaced points in the unit interval, 
and then plotted in Figure 2 of the manuscript. 

\item[-] \textbf{\textsc{SDM} model -} The procedure with the \textsc{SDM} model is quite more involved and it requires the following preliminary lemma.
\begin{lemma}\label{l: convex}
If $\phi$ is a non negative convex function, and $\{x_1,x_2,\dotsc\}$ is a (finite or infinite) closed set, 
then the minimum of $\mean[\phi(X)]$ over all discrete random variables $X$ with the same mean $\mu$ and valued into $\{x_1,x_2,\dotsc\}$  
is $t\phi(\tilde x_1)+(1-t)\phi(\tilde x_2)$, where 
$\tilde x_1$ and $\tilde x_2$ are the two values among $x_1,x_2,\dotsc$ that are closest to $\mu$, i.e.\ 
$\tilde x_1=\max\{j: x_j\leq \mu\}$, $\tilde x_2=\min\{j: x_j\geq \mu\}$, and $t=(\tilde x_2-\mu)/(\tilde x_2-\tilde x_1)$. 
\end{lemma}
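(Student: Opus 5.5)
The plan is a supporting-line (chord) argument: I will show that the affine function through the two points closest to $\mu$ lies below $\phi$ at every admissible value, and then take expectations.

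\emph{Setup.} Let $A=\{x_1,x_2,\dotsc\}$. Any $X$ valued in $A$ with $\mean[X]=\mu$ must put mass on both sides of $\mu$ (or at $\mu$ itself). Hence $A\cap(-\infty,\mu]$ and $A\cap[\mu,\infty)$ are non-empty. Since $A$ is closed, the quantities $\tilde x_1=\max\{x\in A: x\le\mu\}$ and $\tilde x_2=\min\{x\in A: x\ge\mu\}$ are attained. If $\tilde x_1=\tilde x_2=\mu$, the point mass at $\mu$ is admissible. Jensen's inequality then gives $\mean[\phi(X)]\ge\phi(\mu)$, which proves the claim with the convention that the formula reduces to $\phi(\mu)$. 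So assume $\tilde x_1<\mu<\tilde x_2$. By construction $A\cap(\tilde x_1,\tilde x_2)=\emptyset$.

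\emph{Key step.} Define the chord
\[
L(x)=\phi(\tilde x_1)+\frac{\phi(\tilde x_2)-\phi(\tilde x_1)}{\tilde x_2-\tilde x_1}\,(x-\tilde x_1).
\]
By convexity, $\phi\le L$ on $[\tilde x_1,\tilde x_2]$ and $\phi\ge L$ outside $(\tilde x_1,\tilde x_2)$. For the second claim, take $x>\tilde x_2$ and write $\tilde x_2$ as a convex combination of $\tilde x_1$ and $x$. Convexity gives $\phi(\tilde x_2)$ below the chord from $\tilde x_1$ to $x$, and rearranging yields $\phi(x)\ge L(x)$. The case $x<\tilde x_1$ is symmetric. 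Since $A$ has no points strictly between $\tilde x_1$ and $\tilde x_2$, we get $\phi(x)\ge L(x)$ for every $x\in A$.

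\emph{Conclusion.} For any admissible $X$, linearity of $L$ gives
\[
\mean[\phi(X)]\ge\mean[L(X)]=L(\mu)=t\phi(\tilde x_1)+(1-t)\phi(\tilde x_2),
\]
with $t=(\tilde x_2-\mu)/(\tilde x_2-\tilde x_1)$. The middle equality uses $\mean[X]=\mu$ together with the finiteness of $\mean[|X|]$, which is implicit in the mean being defined. The expectation $\mean[\phi(X)]$ is well defined, possibly $+\infty$, because $\phi\ge0$; this is the only role of non-negativity, needed when $A$ is infinite. Finally, the two-point law with mass $t$ at $\tilde x_1$ and $1-t$ at $\tilde x_2$ has mean $\mu$ and attains this bound, so the bound is the minimum.

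The main point requiring care is the sign of $\phi-L$ outside the chord interval. The rest is bookkeeping on existence of $\tilde x_1,\tilde x_2$, which uses closedness, and on the degenerate case $\mu\in A$.
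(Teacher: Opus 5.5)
Your argument is correct, and it follows a genuinely different route from the paper's.

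The paper proceeds by mass transport, in three stages:
\begin{itemize}
\item It first treats a three-point support, using convexity to shift the mass at the outer point onto the two points nearest $\mu$ without increasing $\mean[\phi(X)]$.
\item It extends this to any finite support by induction, collapsing three support points to two at each step.
\item It handles infinite supports by replacing a tail of the distribution with its conditional mean (Jensen) and appealing to the finite case.
\end{itemize}

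You instead use a single global fact, the three-slope inequality. The secant $L$ through $(\tilde x_1,\phi(\tilde x_1))$ and $(\tilde x_2,\phi(\tilde x_2))$ lies below $\phi$ outside $(\tilde x_1,\tilde x_2)$. No admissible value falls in that open interval, so $\phi\ge L$ on the whole support, and $\mean[\phi(X)]\ge L(\mu)$ follows at once.

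Your route buys several things:
\begin{itemize}
\item It treats finite and infinite supports uniformly.
\item It avoids the bookkeeping the paper's argument needs: choosing which triple to collapse so that $\tilde x_1$ and $\tilde x_2$ survive the induction step, and reducing an infinite support, which may accumulate on both sides of $\mu$, to a finite one.
\item It shows that non-negativity of $\phi$ is dispensable. Since $\phi(X)\ge L(X)$ with $L(X)$ integrable, $\mean[\phi(X)]$ is already well defined in $(-\infty,+\infty]$.
\end{itemize}

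The paper's approach offers a constructive picture in return: any admissible law can be improved step by step, through local moves of mass, into the optimal two-point law on $\{\tilde x_1,\tilde x_2\}$.
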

\begin{proof}
Being $\{x_1,x_2,\dotsc\}$ a closed set, $\tilde x_1$ and $\tilde x_2$ exist. 
If $\mu=x_j$ for some $j$ then $\tilde x_1=\tilde x_2$ and one can just apply the Jensen inequality. 
So, let $\tilde x_1$ and $\tilde x_2$ be distinct and for simplicity let them be equal to $x_1$ and $x_2$ with $x_1<x_2$. 

To begin with, we consider the particular case of a random variable $X$ with finite support $\{x_1,\dotsc,x_m\}$, which can be dealt by induction. To this aim, let us start with $m=3$, so that the random variable $X$ has support $\{x_1,x_2,x_3\}$. 
In this case, $x_1< \mu < x_2$, being $x_1$ and $x_2$ 
closest to $\mu$ than $x_3$. Hence, either $x_3>x_2>x_1$ or $x_3<x_1<x_2$. 
In the first case $x_2$ is a linear combination of $x_1$ and $x_3$, in the second case $x_1$ is a linear combination of $x_3$ 
and $x_2$. Let us consider the first case, being the second one just about swapping $x_1$ and $x_2$, 
and take $s= (x_3-x_2)/(x_3-x_1)$, so that $x_2=sx_1+(1-s)x_3$. Note that $s<1$ and therefore 
using Jensen inequality one can write:
\[\begin{split}
t_1\phi(x_1)&+t_2\phi(x_2)+t_3\phi(x_3)\\
&=\left[t_1 - \frac{s}{1-s}t_3\right]\phi(x_1)+t_2\phi(x_2)+\frac{t_3}{1-s} \left[s \phi(x_1)+(1-s)\phi(x_3)\right]\\
&\geq \left[t_1-\frac{s}{1-s}t_3\right]\phi(x_1)+\left[t_2+\frac{1}{1-s}t_3\right]\phi(x_2),
\end{split}\] 
where 
\[
	\left[t_1-\frac{s}{1-s}t_3\right]x_1+\left[t_2+\frac{1}{1-s}t_3\right]x_2=t_1x_1+t_2x_2+t_3x_3=\mu,
\] 
and therefore $t_1-st_3/(1-s)=t$ and $t_2+t_3/(1-s)=1-t$. 
This proves that the statement of this lemma is true if $X$ is concentrated on three points. 
In order to complete the proof by induction, 
fix $m\geq 3$ and assume that it is true for any random variable with finite support $\{x_1,\dotsc,x_m\}$. 
Hence, for any random variable $X$ with support $\{x_1,\dotsc,x_{m+1}\}$, one has that 
\[
	\mean[\phi(X)\mid X\in \{x_{m-1},x_m,x_{m+1}\}] \geq u\phi(x_{i_1})+(1-u)\phi(x_{i_2}),
\]
for some $u$ in the unit interval, and some pair $(i_1,i_2)$ of distinct elements of $\{m-1,m,m+1\}$ 
such that
\[
	ux_{i_1}+(1-u)x_{i_2}=\mean[X\mid X\in \{x_{m-1},x_m,x_{m+1}\}],
\]
since we have proved the statement is true for random variables concentrated on three points such as one 
whose distribution is the conditional distribution of $X$ given that $X$ is either $x_{m-1},x_m$ or $x_{m+1}$. 
Without loss of generality, assume that $i_1=m-1$ and $i_2=m$. Therefore, letting $Y$ be a random variable taking values $\{x_1,\dotsc,x_m\}$ with probabilities 
\[
	\{\prob(X=x_1), \dotsc, \prob(X=x_{m-2}),u\prob(X\in\{x_{m-1},x_m,x_{m+1}\}),(1-u)\prob(X\in\{x_{m-1},x_m,x_{m+1}\})\},
\]
we have
\begin{equation}\label{eq: pr_lemma1}
	\begin{split}
	\mean[\phi(X)] &= \mean\Big[\phi(X)\ind_{\{X\in \{x_{m-1},x_m,x_{m+1}\}\}}\Big] +\sum_{j= 1}^{m-1} \phi(x_j)\prob(X=x_j)\\
	&\geq \prob(X\in\{x_{m-1},x_m,x_{m+1}\}) \big[u\phi(x_{m-1})+(1-u)\phi(x_{m})\big] + \sum_{j= 1}^{m-1} \phi(x_j)\prob(X=x_j)\\ 
	&=\mean[\phi(Y)]. 
	\end{split}
\end{equation}
Being $Y$ concentrated on $\{x_1,\dotsc,x_m\}$ with the same mean of $X$, one can use now the induction hypothesis 
to obtain that $\mean[\phi(Y)]$ is greater than or equal to $\mean[\phi(X)]\geq t\phi(x_1)+(1-t)\phi(x_2)$, and by 
\eqref{eq: pr_lemma1} the same is true for $\mean[\phi(X)]$. 

At this stage we have proved the statement for all random variables with finite support. We shall now consider the infinite case. Let $x_{(1)},x_{(2)},\dotsc$ be the decreasing rearrangement of $x_1,x_2,\dotsc$ and let 
\[
	Y_m=X\ind_{\{X< x_{(m)}\}}+\mean[X\mid X\geq x_{(m)}]\ind_{\{X\geq x_{(m)}\}},
\] 
where $m\geq 2$ is large enough so that $x_1$ and $x_2$ are both smaller than $x_{(m)}$. At this stage,  
Jensen inequality can be applied obtaining
\begin{align*}
\mean[\phi(X)] &=\sum_{j= 1}^{m-1} \prob(X=x_{(j)})\phi(x_{(j)}) +\prob(X\geq x_{(m)}) \sum_{j= m}^\infty \phi(x_{(j)})\frac{\prob(X=x_{(j)})}{\prob(X\geq x_{(m)})}\\
&\geq \sum_{j= 1}^{m-1} \prob(X=x_{(j)})\phi(x_{(j)}) + 
\prob(X\geq x_{(m)})\phi\left( \displaystyle{\sum_{j= m}^\infty} x_{(j)} \frac{\prob(X=x_{(j)})}{\prob(X\geq x_{(m)})} \right)\\
&=\mean[\phi(Y_m)]. 
\end{align*}
Clearly, $Y_m$ is a random variable  
valued into $\{x_{(1)},\dotsc,x_{(m-1)},\mean[X\mid X\geq x_{(m)}]\}$, where $\mean[X\mid X\geq x_{(m)}]$
is not closer to the mean than $x_1$ or $x_2$ since it is smaller than both of them. 
Moreover, the mean of $Y_m$ is the same of $X$. 
Therefore, having already proved that the statement holds for random variables with finite support,
we can conclude that $\mean[\phi(Y_m)]\geq t\phi(x_1)+(1-t)\phi(x_2)$ as desired. 

\end{proof}

We are now ready to derive the minimum normalized variance for the \textsc{SDM} model. Consider the function $\phi:\BR\to\BR$ of the form
\[ 
	\phi_\theta(u)=\frac{2u(1-u)^3}{(\theta+3-2u)(\theta+2-u)}+u^2, 
\]
and let $\tau_K$ denote $\mean[S_\textsc{SDM}\mid K]$. Being $\mu=\mean[S_\textsc{SDM}]=\mean[\tau_K]$, Equation~\eqref{eq: var_a_fisso} and Equation~\eqref{eq: mean} imply that 
\[
	\var(S_\textsc{SDM})=\mean[\phi_\theta(\tau_K)+2 \mu \tau_K+\mu^2].
\]
Hence, it suffices to show that the function $u\to \phi_\theta(u)+2\mu u+\mu^2$ is convex to apply Lemma \ref{l: convex}. Specifically, we need to show the convexity of $\phi_\theta$, which in turn is given if the convexity of the function 
\[
	\psi_\theta(u)=\frac{\phi_\theta(1-(\theta+1)u)}{\theta+1}+\frac{2u}{1+\theta}
\]
is shown. The function $\psi_\theta$ is given by:
\[     \psi_\theta(u)= 2\frac{[1-(\theta+1)u]u^3}{(1+2u)(1+u)}+\frac{[1-(\theta+1)u]^2}{\theta+1}+\frac{2u}{1+\theta}
           =\psi_{(1)}(u)+(\theta+1)\psi_{(2)}(u)+\psi_{(3)}(u),            \]
where 
\[
	\psi_{(1)}(u)=\frac{2}{(1+2u)(1+u)}, \qquad \psi_{(2)}(u)=u^2-\frac{2u^4}{(1+2u)(1+u)}, \qquad \psi_{(3)}(u)=\frac{2u}{1+\theta},
\]
are all convex functions on the unit interval (convexity of $\psi_{(2)}$ can be tested numerically), and therefore $\psi_\theta$ is also a convex function. 

Being $\var(S_\textsc{SDM})$ a convex function of $\tau_K$, which is a discrete random variable, by Lemma \ref{l: convex}, 
the support of the prior distribution of $K$ that  minimizes $\var(S_1)$ is made of the two integers on which the function 
$k\to \tau_K$ is closest to $\mu$, i.e.\ $\{\bar k, \bar k +1\}$, with 
\[
	\bar k=\bar k(\theta,\mu)=\left\lfloor \frac{(\theta+\mu)}{\theta(1-\mu)} \right\rfloor.
\] 
In order to ensure that $\mean[\tau_K]=\mu$, one has that 
\[ 
	\pi(\{\bar k +1\})=1-\pi(\{\bar k\})=p(\theta,\mu),
\] 
where 
\[
	p(\theta,\mu)=\left[1-(1-\mu)\frac{(\theta\bar k(\theta,\mu)+1)}{(\theta+1)}\right]\left(\frac{1}{\theta}+1+\bar k(\theta,\mu)\right).
\]  
Therefore, if we set  
\[
	\zeta(k,\theta)=2 \frac{\theta(\theta+1)(k-1)}{(k\theta+3)(k\theta+2)(k\theta+1)^2},
\] 
so that $\var(S_\textsc{SDM}\mid K)=\zeta(K,\theta)$, then $\var(S_\textsc{SDM})=V_\textsc{SDM}(\theta,\mu)$, where:
\begin{align*}
	V_\textsc{SDM}(\theta,\mu)&=\zeta\big(\bar k(\theta,\mu),a\big)+p(\theta,\mu)\left[\zeta\big(1+\bar k(\theta,\mu),\theta\big)-\zeta\big(\bar k(\theta,u),\theta\big)\right]\\
	&+\frac{(\theta+1)^2\theta^2}{\left[\big(1+\theta\bar k(\theta,\mu))(\theta\bar k(\theta,\mu)+\theta+1\big)\right]^2}p(\theta,\mu)\big(1-p(\theta,\mu)\big).
\end{align*}
In order to restrict the numerical search of the minimum in a bounded interval, 
the minimum normalized variance for model has been obtained as
\[
	\min_{x\in[0,1]} V_\textsc{SDM}(-\log(x),\mu).
\]
Such minimum has been computed as $\mu$ varies over the grid of 100 equally spaced points in the unit interval, 
and then plotted in Figure 2 of the manuscript. 

\end{itemize}
%
%

\section{Prior distribution for the species richness}
\label{s: priors_fin}

In Section 6 of the manuscript we consider two prior specifications for the species' richness, for the \textsc{SDM} and \textsc{DDM} models respectively. Here we report summaries of those priors, along with other possible specifications for the \textsc{DDM} richness distribution. 

\subsection{Richness prior for the \textsc{SDM} models}\label{sec:SDMprior}

In Section 6 of the paper, we consider the following prior distribution for the species richness with \textsc{SDM} models
\begin{equation}\label{eq: prior_mod1}
\pi(k)=\frac{\Ga(2-\psi_1)}{\Ga(\psi_2)\Beta(2-\psi_1-\psi_2,\psi_1)} \frac{\Ga(k+\psi_2-1)\Ga(k+\psi_1 -1)}{k!(k-1)!},\quad k\geq 1,
\end{equation}
where $\psi_1,\psi_2>0$ 
and $\psi_1+\psi_2<2$ and $\Beta$ denotes the beta function, i.e. $\Beta(x,y)=\Ga(x)\Ga(y)/\Ga(x+y)$, 
for every $x,y>0$. Some simplifications on the richness distribution summaries occur in the case of the conditional distribution of $(P_1,\dotsc,P_{K})$ given $K$  being uniform on the $K$--dimensional simplex, i.e., a symmetric Dirichlet distribution with concentration parameter one. 
%
Specifically, the prior mean equals
\begin{equation}\label{eq: mean_mod1}
 \mean[S_\textsc{SDM}]=1-2\mean\left[\frac{1}{1+K}\right].
\end{equation}
We now prove that the prior mean with the prior distribution given in Equation~\eqref{eq: prior_mod1} results in
\begin{equation}\label{eq: mean_gnedin}
 \mean[S_\textsc{SDM}] = \frac{\psi_1\psi_2}{(2-\psi_1)(2-\psi_2)}.
\end{equation}
We first recall that the probability mass function of a beta negative binomial with parameters $\al, \be, r>0$ is given by
\begin{equation*}
 q(k)=\frac{\Ga(r+k-1)\Beta(\al+r,\be+k-1)}{\Ga(r)\Beta(\al,r)(k-1)!}, \qquad k = 1, 2, \dots
\end{equation*}
Since $\tsum_{k\geq 1}q(k)=1$, we then have
\begin{equation}\label{eq: seriesbnb}
 \sum_{k\geq 1} \frac{\Ga(r+k-1)\Ga(\be+k-1)}{(k-1)!\Ga(\al+\be+r+k-1)}
 =\frac{\Ga(r)\Ga(\be)\Ga(\al)}{\Ga(\al+r)\Ga(\al+\be)}
\end{equation}
for every $\al,\be, r>0$. Hence, we obtain that
\[
	\mean\left[\frac{1}{1+K}\right] = \frac{\Ga(2-\psi_1)}{\Ga(\psi_2)\Beta(2-\psi_1-\psi_2,\psi_1)} \sum_{k\geq 1}\frac{\Ga(k+\psi_2-1)\Ga(k+\psi_1 -1)}{(k+1)!(k-1)!},
\]
where the series in the right--hand side of the previous equation is equal to the expression in Equation~\eqref{eq: seriesbnb} 
with $\al=3-\psi_1-r$. Therefore, 
\[
 \mean\left[\frac{1}{1+K}\right]=\frac{2-\psi_1-r}{(2-\psi_1)(2-r)},
\]
which yields to Equation~\eqref{eq: mean_gnedin} by Equation~\eqref{eq: mean_mod1}.

In order to compute the prior variance of the Gini--Simpson index, under the previous distributional assumption, note that being 
the conditional distribution of $(P_1,\dotsc,P_{K})$ given $K$ symmetric Dirichlet with parameter one, 
\[ 
	\mean\left[\sum_{j=1}^K P_j^2 \Big\vert K\right]^2= \frac{8}{(K+3)(K+2)(K+1)}+\frac{4}{(K+2)(K+1)}, 
\] 
which, by Equation~\eqref{eq: prior_mod1}, yields
\[ 
	\mean\left[\sum_{j=1}^K P_j^2\right]^2= \frac{8(4-\psi_1-\psi_2)(3-\psi_1-\psi_2)(2-\psi_1-\psi_2)}{(4-\psi_1)(3-\psi_1)(2-\psi_1)(4-\psi_2)(3-\psi_2)(2-\psi_2)} +\frac{4(3-\psi_1-\psi_2)(2-\psi_1-\psi_2)}{(3-\psi_1)(2-\psi_1)(3-\psi_2)(2-\psi_2)}.
\]
Thanks to Equation~\eqref{eq: mean_gnedin}, the prior variance of the index is given by
\[
\begin{split}
	\var(S_\textsc{SDM})&=\var\left(\sum_{j=1}^K P_j^2\right)= \mean\left[\sum_{j=1}^K P_j^2 \right]^2 - \big(1-\mean[S_1]\big)^2\\
		&=2\big(1-\mean[S_\textsc{SDM}]\big)\frac{3-\psi_1-\psi_2}{(3-\psi_1)(3-\psi_2)}  \left[ 2\frac{4-\psi_1-\psi_2}{(4-\psi_1)(4-\psi_2)} +1\right] - \big(1-\mean[S_\textsc{SDM}]\big)^2.
\end{split}
\]
Therefore, the normalized variance of the index equals
\begin{equation*}
	\rvar(S_\textsc{SDM})=2\frac{3-\psi_1-\psi_2}{(3-\psi_1)(3-\psi_2)\mean[S_\textsc{SDM}]} \left[ 2\frac{4-\psi_1-\psi_2}{(4-\psi_1)(4-\psi_2)} +1\right]-\frac{1}{\mean[S_\textsc{SDM}]}+1.
\end{equation*}
Denoting by $\mu$ the prior mean $\mean[S_\textsc{SDM}]$, Equation~\eqref{eq: mean_gnedin} yields  
\begin{equation}\label{eq: r_gnedin} 
	\psi_2=\frac{2\mu(2-\psi_1)}{\be+\mu(2-\psi_1)}. 
\end{equation}
Hence, the normalized prior variance of the index becomes:
\begin{equation*}
\rvar(S_\textsc{SDM})=2\frac{\psi_1 + (2-\psi_1)[\psi_1+\mu(1-\psi_1)]}{(3-\psi_1)[3\psi_1+\mu(2-\psi_1)]\mu} 
\left\{ \frac{\psi_1(4-\psi_1)+\mu(2-\psi_1)^2}{(4-\psi_1)[2\psi_1+\mu(2-\psi_1)]} +1\right\}-\frac{1}{\mu}+1.
\end{equation*}
The variance can be assessed as large as desired, as for any fixed prior mean $\mu$, it converges to one as $\psi_1$ converges to zero. Indeed, 
the normalized prior variance approaches one as $\psi_2$ goes to $2$ and $\psi_1$ go to zero. In fact, in such limit case, the prior of $K$ approaches a distribution that is concentrated on one and infinity, being $K-1$ distributed as $\textsc{NegBinom}(2-\psi_2-\psi_1, \psi_1, \psi_2)$, that is a mixture of negative binomials where the mixing distribution over the success rate is a $\textsc{Beta}(2-\psi_2-\psi_1, \psi_1)$.  
This happens because as $2-\psi_2-\psi_1$ and $\psi_1$ go to zero, a $\textsc{Beta}(2-\psi_2-\psi_1, \psi_1)$ distribution approaches a distribution that is concentrated on $\{0,1\}$. Moreover, in such limit case the prior mean can be freely 
assessed in the unit interval, since if $\psi_2=2-\psi_1/\mu$ with $0<\mu\leq 1$, and $\psi_1$ converges to zero, 
then the prior mean \eqref{eq: mean_gnedin} converges to $\mu$, whereas the normalized prior variance converges to one. 


\subsection{Richness prior for the \textsc{DDM} model}
As done in the manuscript, let
 \begin{equation}\label{eq: beta geometric}
\pi(k)=\frac{(\ga_1+2)(\ga_1+1)}{[\ga(1+\ga_2)+1](1+\ga_2)}
\frac{(\ga_1\ga_2/2)_{k-1}k^2}{(2+\ga_1(1+\ga_2))_{k}}, \qquad \ga_1,\ga_2>0, 
\end{equation}
where $(a)_{b}$ denotes the rising factorial, i.e. $(a)_{b}=\Ga(a+b)/\Ga(a)=a(a+1)\dots (a+b-1)$,  for every integer $b\geq 1$ and $(a)_{0}=1$, for every $a>0$. The following identities hold true, whose proofs are given in Section \ref{s: proofs_formulas}.
\begin{align} 
\nonumber
\mean\left[\frac{1}{K}\right]&=\frac{1}{1+\eta}, \qquad \mean\left[\frac{1}{K^2}\right]=\frac{2(1+\ga_1)}{(1+\ga_2)[\ga_1(2+\ga_2)+2]},\\
\label{eq: mean_nice_model}
\mean[S_\textsc{DDM}] &= \frac{\atwo}{1+\atwo}\frac{\ga_2}{1+\ga_2},\\
\begin{split}
\label{eq: var_nice_model}
\var(S_\textsc{DDM})&=
\smallfrac{2\atwo}{(\atwo+3)(\atwo+2)(\atwo+1)^2}
\smallfrac{\ga_2}{1+\ga_2}
\left[1+\smallfrac{\atwo\ga_1}{\ga_1(2+\ga_2)+2} \right]+\left(\smallfrac{\atwo}{1+\atwo}\right)^2
\smallfrac{(\ga_1+2)\ga_2}{(1+\ga_2)^2[2+(2+\ga_2)\ga_1]}.\end{split}
\end{align}
Both parameter $\ga_1$ and $\ga_2$ affect the heaviness of the tail. Using Stirling formula, i.e., $\Ga(x)\sim  \sqrt{2\pi} x^{x-1/2}e^{-x}$ as $x\to \infty$, 
one obtains that $\Ga(x)/\Ga(x+y)\sim x^{-y}$ as $x\to \infty$ for every fixed $y$, 
and therefore Equation~\eqref{eq: beta geometric} yields
\[ 
\pi(k) \sim 
c_{\ga_1,\ga_2}
k^{-\ga_1(1+\ga_2/2)-1}
\]
as $k$ diverges to infinity, where $c_{\ga_1,\ga_2}$ is a suitable constant. 

It is not difficult to derive derive the normalized variance of the index from \eqref{eq: var_nice_model}, which equals
\begin{equation*}
	\rvar(S_\textsc{DDM})= \frac{2(1+\ga_2)}{(\atwo+3)(\atwo+2)(\atwo+\ga_2+1)}\left[1+\frac{\atwo\ga_1}{\ga_1(2+\ga_2)+2}\right] +\frac{\atwo(\ga_1+2)}{[2+(2+\ga_2)\ga_1](\atwo+\ga_2+1)}.
\end{equation*}
Being
\begin{equation}\label{eq: relvar_bnb1}
 \rvar(S_\textsc{DDM})=1-\left(1-\frac{2}{(\atwo+2)(\atwo+3)}\right)
 \frac{1+\ga_2}{\atwo+\ga_2+1}
 \left\{ 1+\frac{\atwo}{2+\ga_2+2/\ga_1} \right\},
\end{equation}
the normalized variance is a decreasing function of $\ga_1$. The limit case of $\ga_1$ being equal to zero leads to the equality of $\mean[1/K]$ and 
$\mean[1/K^2]$, i.e. $\pi(k)$ being concentrated on one and infinity.
In such limit case, the normalized variance is equal to
\[
\rvar(S_\textsc{DDM}\mid\gamma_1 = 0)= 1-\left(1-\frac{2}{(\atwo+2)(\atwo+3)}\right)
 \frac{1+\ga_2}{\atwo+\ga_2+1},
\]
which converges to one as $\atwo$ diverges to infinity. On the opposite side, in the limit case of $\ga_1=\infty$ the normalized variance reaches its minimum, that is 
\begin{equation}\label{eq: rvar_etainf}
	\rvar(S_\textsc{DDM} \mid \gamma_1 = \infty)= 1-\left(1-\frac{2}{(\atwo+2)(\atwo+3)}\right)\frac{1+\ga_2}{2+\ga_2}\left(1+\frac{1}{\atwo+\ga_2+1}\right), 
\end{equation}
which is bigger than or equal to the normalized prior variance with degenerate $K$, with the equality in place only in the case of $\atwo=0$ or $\gamma_2=\infty$. If both $\atwo$ and $\gamma_1$ are infinity, then 
\[
	\rvar(S_\textsc{DDM}\mid \gamma_1 = \infty, \theta = \infty)=\frac{1}{2+\gamma_2}=1-\frac{1}{2-\mean[S_2]}
\] 
being $\gamma_2=\mean[S_2]/(1-\mean[S_2])$ by \eqref{eq: mean_nice_model}. 
Therefore, if $\atwo=\ga=\infty$ and $\mean(S_2)>1/2$, then $\rvar(S_2)<1/3$, i.e.\ the normalized prior variance of the index is smaller than the variance of the uniform distribution. Moreover, if the prior mean of the index is fixed to a value $\mu$, then the minimum of the relative variance as $\atwo$ varies is smaller than the relative variance obtained with the Dirichlet process, i.e. 
\[
	\rvar(S_\textsc{DP}) = \frac{2}{(\atwo+2)(\atwo+3)}=\frac{2(1-\mu)^2}{(2-\mu)(3-2\mu)},
\] 
using \eqref{eq: condmean} and \eqref{eq: condnvar} with $K$ degenerate and infinite. Indeed, if the prior mean is fixed to be $\mu$, with $\atwo=\mu/(1-\mu)$, and we are having infinite $\gamma_2$, the relative variance \eqref{eq: rvar_etainf} is equal to 
\[
	\rvar(S_\textsc{DDM}\mid \gamma_2 = \infty) = \frac{2(1-\mu)^2}{(2-\mu)(3-2\mu)}
\]
which corresponds to the Dirichlet process case.  It is easy to verify empirically that the minimum of the relative variance is not just smaller but strictly smaller than such value. To this aim one can look at Figure~\ref{fig: rvar_ex_betag}, where the relative variance is plotted as a function of $\atwo$ where $\atwo$ ranges in the interval $[\mu/(1-\mu),\infty)$.  Indeed, consider that by \eqref{eq: mean_nice_model}, $\gamma_2=\mu(1+\atwo)/(\atwo-\atwo\mu-\mu)$ and being $\gamma_2$ positive, it must be $\atwo\geq \mu/(1-\mu)$.  
\begin{figure}[ht]\begin{center}
\includegraphics[width=.7\textwidth]{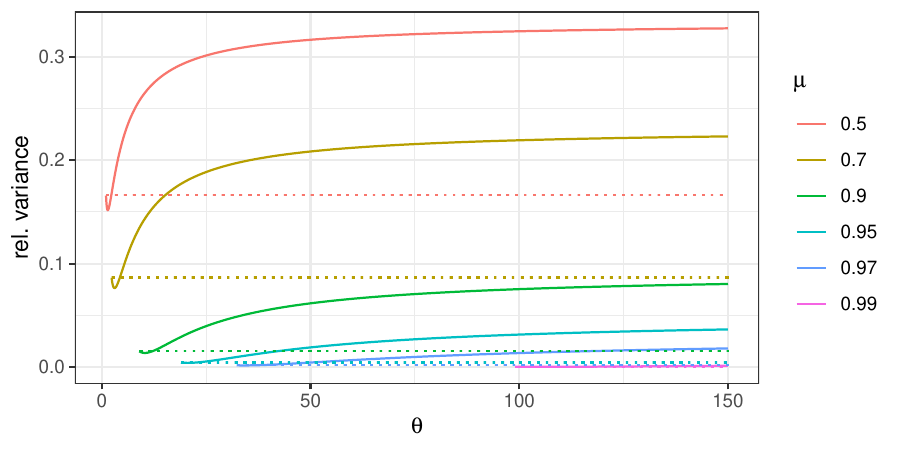}
\end{center}\caption{Normalized variance with the \textsc{DDM} model and the prior \eqref{eq: beta geometric}. Filled lines correspond to the model specification with infinite $\ga_1$ and fixed prior mean. Dotted lines correspond to the case with infinite $\gamma_2$, which equals the Dirichlet process case. }
\label{fig: rvar_ex_betag}\end{figure}
We finally remark that the normalized variance is  decreasing function of $\gamma_2$, whereas the mean is an increasing function of it. 
Indeed, \eqref{eq: relvar_bnb1} 
is equal to
\begin{equation*}
\begin{split}
\rvar(S_\textsc{DDM})&=1-\left(1-\frac{2}{(\atwo+2)(\atwo+3)}\right)
 \left(1-\frac{\atwo}{1+\gamma_2+\atwo}\right)
 \left( 1+\frac{\atwo}{2+\gamma_2+2/\gamma_1} \right)\\
 &=1-\left(1-\frac{2}{(\atwo+2)(\atwo+3)}\right)
 \left(1-\frac{2\atwo(1+1/\gamma_1)}{(2+\gamma_2+2/\gamma_1)(1+\gamma_2+\atwo)} \right),
\end{split}
\end{equation*}
which is clearly a decreasing function of $\gamma_2$, for any fixed $\gamma_1$. 
Instead, the mean is increasing as function of $\atwo$. The normalized variance converges to $1/3$ if $\atwo$ goes to zero and converges 
to $(2+\gamma_1)/\big[2+\gamma_1(2+\gamma_2)\big]$ if $\atwo$ diverges to infinity.




\subsection{Some other priors based on the \textsc{DDM} model}\label{sec:example priors}

In the following examples, we consider priors of $K$ for which there is a closed expression for the prior mean and variance 
of the index. The proofs of the formulas given in these examples are provided in Section \ref{s: proofs_formulas}. 
\begin{itemize}
\item[-] \textbf{Example 1} - Let
\[
\pi(k)=\smallfrac{e^{-\la}}{\la^2+3\la+1}\smallfrac{\la^{k-1}k^2}{(k-1)!},\qquad \la>0.
\]
Such a prior of $K$ is unimodal. Specifically, tthe mode is the integer part of the solution of the equation $(x-1)^3/x^2=\la$. Then, a priori, we have
\begin{align*}
\mean\left[\smallfrac{1}{K}\right]&=\smallfrac{1+\la}{\la^2+3\la+1}, \qquad \mean\left[\smallfrac{1}{K^2}\right]=\smallfrac{1}{\la^2+3\la+1},\qquad
\mean[K]=\smallfrac{\la^3+6\la^2+7\la+1}{\la^2+3\la+1},\\
\mean(S_\textsc{DDM})&=\smallfrac{\atwo}{1+\atwo}\left(1-\smallfrac{\la+1}{\la^2+3\la+1}\right),\\
\var(S_\textsc{DDM})&=\smallfrac{2\atwo}{(\atwo+3)(\atwo+2)(\atwo+1)^2}\left[1+\smallfrac{(\atwo-1)\la-1}{\la^2+3\la+1}\right]+
\left(\smallfrac{\atwo}{1+\atwo}\right)^2\smallfrac{\la}{(\la^2+3\la+1)^2}.
\end{align*}
\item[-]  \textbf{Example 2} - We now consider
\[
\pi(k)=\smallfrac{(1-p)^{r+1}}{1+p(r-1)}\binom{k+r-2}{k-1}p^{k-1}k, \qquad r>0,\, r\neq 1, \quad 0<p<1,
\]
Such a prior is unimodal and the mode is the integer part of 
\[
	1+\left[r+\sqrt{r^2+4(r-1)(1-p)/p}\right]\frac{p}{2(1-p)}.
\]
Hence, a priori for the Gini--Simpson index we have
\begin{align*}
\mean[S_\textsc{DDM}]&=\smallfrac{\atwo}{1+\atwo}\left[\smallfrac{pr}{1+p(r-1)}\right],\\
\var(S_\textsc{DDM})&=
\smallfrac{2a}{(\atwo+3)(\atwo+2)(\atwo+1)^2}\left[\atwo\left(1-\smallfrac{1-(1-p)^{r-1}}{p(r-1)\{1+p(r-1)\}}\right)
-(\atwo-1)\smallfrac{pr}{1+p(r-1)}\right]\\
&\phantom{XXX}+\left(\smallfrac{\atwo}{1+\atwo}\right)^2
\left[\frac{1-(1-p)^{r-1}}{p(r-1)\{1+p(r-1)\}}
-\smallfrac{(1-p)^2}{\{1+p(r-1)\}^2}
\right].
\end{align*}

\item[-]  \textbf{Example 3} - Lastly, we consider the following prior distribution
 
\begin{equation*}
\pi(k)=\smallfrac{p^3}{2-p}k^2(1-p)^{k-1}, \qquad 0<p<1. 
\end{equation*}
In this case, the distribution $\pi$ is unimodal and its mode is the integer part of $(1-\sqrt{1-p})^{-1}$, which is one 
if $p>3/4$ and tends to infinity as $p$ converges to zero. 

\begin{align*}
\mean\left[\frac{1}{K}\right]&=\smallfrac{p}{2-p}, \qquad \mean\left[\frac{1}{K^2}\right]=\smallfrac{p^2}{2-p},\\
\mean[S_\textsc{DDM}]&=\smallfrac{2\atwo}{\atwo+1}\left(1-\smallfrac{1}{2-p}\right),\\
\var(S_\textsc{DDM})&=\smallfrac{2\atwo}{(\atwo+3)(\atwo+2)(\atwo+1)^2}\left(1+(\atwo-1)\smallfrac{p}{2-p}-\atwo\smallfrac{p^2}{2-p}\right)+
\left(\smallfrac{\atwo}{1+\atwo}\right)^2\smallfrac{p^2(1-p)}{(2-p)^2}.
\end{align*}

\end{itemize}
One can assess prior mean and prior normalized variance 
with maximum freedom as long as the prior of $K$ can be concentrated only on one and infinity, for any value of $\mean[1/K]$, 
at least for a limit case. 
This is not a case with the distributions given in the previous examples, which in the limit case are concentrated either on one 
or infinite, but not on both. For this reason, one could consider as a prior of $K$ a mixture of two different 
distributions chosen among the previous examples. 
A good alternative is given by \eqref{eq: beta geometric}.

\subsection{Proofs of 
	\eqref{eq: mean_nice_model}--\eqref{eq: var_nice_model} 
and formulas of the examples in Section~\ref{sec:example priors}}\label{s: proofs_formulas}

In Section \ref{s: priors_fin} with \eqref{eq: beta geometric} and in Section~\ref{sec:example priors}, some priors are provided 
such that there is a closed expression for \eqref{eq: index_mean2} and \eqref{eq: var-mod2}. In general, it is convenient to consider a prior specification $\pi$ for the richness $K$ which admits the following representation
\[
	\pi(k)=\frac{k^jh(k)}{\mu_j},\qquad k=1,2,\dotsc,
\]  
where $j \in \{1, 2\}$, $h$ is a probability function on the positive integers and for every integer $l$, $\mu_l$ denotes the moment of $h$ of order $l$, 
i.e. $\mu_l=\tsum_{k=1}^\infty k^{l}h(k)$. Hence, if $j=1$
\[
	\mean\left[\frac{1}{K}\right]=\frac{1}{\mu_{1}}, \qquad \mean\left[\frac{1}{K^2}\right]=\frac{\mu_{-1}}{\mu_{1}},
\] 
while for $j=2$ we have 
\[
	\mean\left[\frac{1}{K}\right]=\frac{\mu_{1}}{\mu_{2}}, \qquad  \mean\left[\frac{1}{K^2}\right]=\frac{1}{\mu_{2}}.
\] 
Moreover, it is not difficult to compute the prior mean of $K$, which corresponds to $\mean[K]=\mu_2/\mu_1$ if $j=1$, and $\mean[K]=\mu_3/\mu_2$ if $j=1$. Clearly, we should consider probability functions $h$ for which the pair of moments $(\mu_{-1},\mu_1)$ or $(\mu_{1},\mu_2)$ not only are finite, but have also simple expressions. In the provided examples, we consider the probability function of the Poisson, negative binomial, geometric and beta geometric distribution. Each one of this probability function is shifted so that $h$ is supported only by the positive integers. In particular, for the prior given by \eqref{eq: beta geometric}, $j=2$ and $h$ is the $\textsc{Beta-Geometric}(\alpha, \beta)$ probability function with parameters $\al=\ga_1+2$ and $\be=\gamma_1\ga_2/2$.

\section{Computation of the posterior mean with the \textsc{SDM} model and
the prior given in Section~\ref{sec:SDMprior}}

With the \textsc{SDM} model, one can obtain the posterior mean mixturing \eqref{eq: posterior_mean2}, which takes form 
\begin{equation}\label{eq: postmeanM1a}
\mean\big[S_\textsc{SDM}\mid X_1, \dots, X_n\big] = n(n-1)\Big(\mu_{n}^{(0)}-\mu_{n}^{(1)}\Big){\hat S}_n + (n-1)(n+\theta+1)\mu_{n}^{(1)}-n(n+\theta)\mu_{n}^{(0)} + 1,
\end{equation}
where $\hat S_n$ is the frequentist unbiased estimator of the Gini--Simpson index and 
\[
	\mu_{n}^{(0)}= \sum_{k=k_n}^\infty \frac{\pi_n(k)}{\theta k+n},\qquad \mu_{n}^{(1)}= \sum_{k=k_n}^\infty \frac{\pi_n(k)}{\theta k+n+1},
\]  
with $\pi_n(k)$ denoting the posterior distribution of $K\mid X_1, \dots, X_n$, so that 
\[
	\mu_{n}^{(0)}-\mu_{n}^{(1)}= \sum_{k=k_n}^\infty \frac{\pi_n(k)}{(\theta k+n)(\theta k+n+1)}.
\] 
Hence, we can rewrite the posterior expectation of Equation~\eqref{eq: postmeanM1a} as
\begin{equation*}\label{eq: postmeanM1b}
	\begin{split}
	\mean\big[S_\textsc{SDM}\mid X_1, \dots, X_n\big] &= n(n-1)\Big(\mu_{n}^{(0)}-\mu_{n}^{(1)}\Big){\hat S}_n - (n-1)(n+\theta+1)\Big(\mu_{n}^{(0)}-\mu_{n}^{(1)}\Big)-(\theta+1)\mu_{n}^{(0)} + 1 \\
	&= 1-(n-1)\Big[n(1-{\hat S}_n)+\theta+1\Big]\Big(\mu_{n}^{(0)}-\mu_{n}^{(1)}\Big)-(\theta+1)\mu_{n}^{(0)},
	\end{split}
\end{equation*}
emphasizing that the Gini--Simpson estimate is higher when the posterior probability of the richness $K$ concentrates mass on larger values of its support, i.e., if $K$ is believed to be large. The expression of the Gini--Simpson index posterior mean can be further simplified, obtaining 
\begin{equation} \label{eq: postmean_gnedin}
	\mean\big[S_\textsc{SDM}\mid X_1, \dots, X_n\big] = q_n^{(1)}{\hat S}_n+\Big(1-q_n^{(1)}\Big)
\left\{1-\frac{\theta+1}{n}\left[\frac{n\mu_{n}^{(0)}+1}{1-n(n-1)\Big(\mu_{n}^{(0)}-\mu_{n}^{(1)}\Big)}-1\right]\right\},
\end{equation}
where $q_n^{(1)}=n(n-1)\Big(\mu_{n}^{(0)}-\mu_{n}^{(1)}\Big)$. By denoting
\[
	W^{(1)}_n=1-\frac{\theta+1}{n}\left\{\frac{n\mu_{n}^{(0)}+1}{1-n(n-1)(\mu_{n}^{(0)}-\mu_{n}^{(1)})}-1\right\},
\]
with some algebraic steps Equation~\eqref{eq: postmean_gnedin} yields the posterior expectation of the Gini--Simpson index with \textsc{SDM} prior reported in the paper. We highlight that the weight $q_n^{(1)}$ approaches its minimum of zero when $\theta$ goes to infinite or if $K$ is degenerate at infinite, whereas approaches its maximum that is $1-2/(n+1)$ if $\theta$ goes to zero. 

If in particular, we consider the the prior given by \eqref{eq: prior_mod1}, with $\theta = 1$ then the posterior mean is given by \eqref{eq: postmean_gnedin} with 
\begin{align}\label{eq: m0-1}
\mu_{n}^{(0)}-\mu_{n}^{(1)}&=
\smallfrac{(3+n-\psi_1-\psi_2-k_n)(2+n-\psi_1-\psi_2-k_n)}{(n+2-\psi_1)(n+1-\psi_1)(n+2-\psi_2)(n+1-\psi_2)},\\
\label{eq: m0}
\mu_{n}^{(0)}&=
\smallfrac{2+n-\psi_1-\psi_2-K_n}{(n+1-\psi_1)(n+1-\psi_2)},
\end{align}
where $k_n$ is the number of distinct species observed in $X_1, \dots, X_n$. Since in this case  $W^{(1)}_n$ has closed form expression, one can easily see that $W^{(1)}_n$ is an increasing function 
of $k_n$ and of $\mean[S_\textsc{SDM}]$. Indeed, by \eqref{eq: postmean_gnedin}, $W^{(1)}_n$ is a decreasing function of both 
$\mu_{n}^{(0)}-\mu_{n}^{(1)}$ and of $\mu_{n}^{(0)}$, and by \eqref{eq: m0-1} $\mu_{n}^{(0)}-\mu_{n}^{(1)}$ 
is a parabola as a function of $K_n$ with vertex equal to $5/4-r-\beta+n$, which is bigger than $k_n$, and therefore decreasing. 
Similarly, $\mu_{n}^{(0)}-\mu_{n}^{(1)}$ and $\mu_{n}^{(0)}$ are decreasing functions of $\psi_2$, which in turn is a increasing function of $\mean[S_1]$ by \eqref{eq: r_gnedin} if $\psi_1$ is fixed. 


\begin{proof}[Proof of \eqref{eq: m0-1} and \eqref{eq: m0}]
Recall that if $W$ is a random variable such that $(W-1)\sim\textsc{NegBinomial}(\al, \be, \psi_2)$ 
with probability mass function
\[
	\phi(w)=\smallfrac{\Ga(\al+r)\Ga(\al+\be)}{\Ga(\psi_2)\Ga(\al)\Ga(\beta)} 
	\smallfrac{\Ga(w+\psi_2-1)\Ga(w+\beta -1)}{(w-1)!\Ga(\al+\be+\psi_2+w-1)},\quad w\geq 1.
\]
We are considering the \textsc{SDM} model with $\theta=1$ and the prior of $K$ given by \eqref{eq: prior_mod1}, i.e., the prior that lets $(K-1)\sim\textsc{NegBinomial}(2-\psi_1-\psi_2,\psi_1, \psi_2)$, 
where $\psi_1, \psi_2 > 0$ and $\psi_1+\psi_2<2$. 
By \eqref{eq: postK}, a posteriori we have $(K-k_n+1) \sim \textsc{NegBinomial}(2+n-\psi_1-\psi_2-k_n, \psi_1+k_n-1, \psi_2+K_n-1)$. 
Therefore, 
\begin{equation*}\begin{split}
	\mu_{n}^{(0)}-\mu_{n}^{(1)}&=\sum_{k=k_n}^\infty \frac{\pi_n(k)}{(k+n)(k+n+1)}
	=\sum_{h=1}^\infty \frac{\pi_n(h+k_n-1)}{(h+k_n+n-1)(h+k_n+n)}\\
	&=\smallfrac{(3+n-\psi_1-\psi_2-k_n)(2+n-\psi_1-\psi_2-k_n)}{(n+2-\psi_1)(n+1-\psi_1)(n+2-\psi_2)(n+1-\psi_2)},
\end{split}\end{equation*}
and
\begin{equation*}
\mu_{n}^{(0)}=\sum_{k=k_n}^\infty \frac{\pi_n(k)}{k+n}
=\sum_{h=1}^\infty \frac{\pi_n(h+k_n-1)}{h+k_n+n-1}
=\smallfrac{2+n-\psi_1-\psi_2-k_n}{(n+1-\psi_1)(n+1-\psi_2)}.
\end{equation*}
\end{proof}

\end{document}